%% file: main.tex
\documentclass[runningheads]{llncs}

\usepackage{etoolbox}
\newtoggle{arxiv}
\toggletrue{arxiv} %

\usepackage{graphicx}
\usepackage{amsmath}
\usepackage{amssymb}
\usepackage{color}
\usepackage{xcolor}
\usepackage{booktabs}
\usepackage{subcaption}
\usepackage{adjustbox}
\usepackage{cancel}
\usepackage{csquotes}
\usepackage{xfrac}
\usepackage{xspace}
\usepackage{mathtools}
\usepackage{stmaryrd}
\usepackage{braket}
\iftoggle{arxiv}{\setcounter{tocdepth}{3}}{}
\usepackage[
	pdfencoding=auto,
	psdextra,
	colorlinks=true,
	linkcolor=blue!50!black,
	citecolor=blue!50!black,
	urlcolor=blue!50!black,
	bookmarks=true,
	bookmarksopen=true,
	bookmarksnumbered=true,
	pdfstartview=FitH
]{hyperref}
\usepackage[nameinlink,noabbrev]{cleveref}
\usepackage{nicefrac}
\usepackage{proof}
\usepackage{pgfplots}
\pgfplotsset{compat=1.18}
\usepackage{thm-restate}
\usepackage{listings}
\usepackage{float}
\usepackage{multirow}
\usepackage{tikz}
\usetikzlibrary{arrows,automata,positioning,shapes}
\usepackage{wrapfig}
\usepackage{quiver}
\iftoggle{arxiv}{\usepackage{orcidlink}}{}

\usepackage[
  backend=biber,
  style=lncs
]{biblatex}

\newcommand{\authororcid}[1]{\iftoggle{arxiv}{\orcidlink{#1}}{\orcidID{#1}}}

\input{macros}

\newif\ifmanualqed
\makeatletter
\let\llncsproof\proof
\let\endllncsproof\endproof
\renewcommand{\squareforqed}{\ensuremath{\square}}
\renewcommand{\qedhere}{\global\manualqedtrue\hfill\ensuremath{\squareforqed}}
\renewenvironment{proof}[1][]
{\global\manualqedfalse%
 \if\relax\detokenize{#1}\relax
   \llncsproof
 \else
   \llncsproof[#1]%
 \fi}
{\ifmanualqed\else\qed\fi\endllncsproof}
\makeatother

\lstdefinelanguage{HeyVL}{
	morekeywords={var,while,if,else,@invariant,@slice_verify,@slice_error,Int,Bool,EUReal,UInt,proc,coproc,pre,post,flip,coassume,assume,assert,coassert,\cup,cohavoc,havoc,domain,func,UReal,reward,ite,forall,axiom,true,false},
	morecomment=[l]{//},
	morecomment=[s]{/*}{*/},
	sensitive=true
}

\AfterEndEnvironment{definition}{\par\noindent\ignorespaces}
\AfterEndEnvironment{lemma}{\par\noindent\ignorespaces}
\AfterEndEnvironment{theorem}{\par\noindent\ignorespaces}
\AfterEndEnvironment{corollary}{\par\noindent\ignorespaces}

\begin{document}

\title{Highly Incremental: A Simple Programmatic Approach for Many Objectives\iftoggle{arxiv}{\texorpdfstring{\\}{ }(Extended Version)}{}}
\titlerunning{A Simple Programmatic Approach for Many Objectives}

 \author{Philipp Schröer\inst{1}\authororcid{0000-0002-4329-530X} \and Joost-Pieter Katoen\inst{1}\authororcid{0000-0002-6143-1926}}
 \authorrunning{P. Schröer and J.-P. Katoen}
 \institute{RWTH Aachen University, Germany \\
 	\email{\nolinkurl{phisch@cs.rwth-aachen.de}, \nolinkurl{katoen@cs.rwth-aachen.de}}}

\maketitle

\input{sections/0_abstract.tex}

\input{sections/1_introduction.tex}
\input{sections/2_background.tex}
\input{sections/3_reward_structures.tex}
\input{sections/4_encoding_moments_in_rewards.tex}
\input{sections/5_transformation_functions.tex}
\input{sections/6_automation.tex}
\input{sections/7_related_work.tex}

\input{sections/8_conclusion.tex}

\begin{credits}
\subsubsection*{\ackname}
This work was partially supported by the ERC POC Grant VERIPROB (grant no.\ 101158076) and by the European Union's Horizon 2020 research and innovation programme under the Marie Sk\l{}odowska-Curie grant agreement MISSION (grant no.\ 101008233).

\end{credits}

\printbibliography

\iftoggle{arxiv}{
\newpage
\appendix
\input{appendices/omitted_proofs.tex}
\input{appendices/case_studies.tex}
\clearpage
\input{appendices/more_related_work.tex}

}{}

\end{document}

%% file: macros.tex
\newcommand{\morespace}[1]{~{}#1{}~}

\newcommand{\eeq}{\morespace{=}}

\usepackage{xcolor}
\definecolor{hintgray}{RGB}{92,92,92}

\definecolor{col1}{RGB}{254,195,10}
\definecolor{col2}{RGB}{251,0,6}
\definecolor{col3}{RGB}{252,11,135}
\definecolor{col4}{RGB}{17,139,2}
\definecolor{col5}{RGB}{14,131,254}
\definecolor{col6}{RGB}{82,0,135}

\definecolor{brLightGreen}{HTML}{addd8e}
\definecolor{slidesBackground}{RGB}{250,250,250}
\definecolor{brLightGray}{HTML}{f0f0f0}

\definecolor{orange}{RGB}{230,159,0}
\definecolor{skyblue}{RGB}{86,180,233}
\definecolor{brBlue}{RGB}{0,114,178}
\definecolor{bluishgreen}{RGB}{0,158,115}
\definecolor{vermillion}{RGB}{213,94,0}
\definecolor{reddishpurple}{RGB}{204,121,167}

\colorlet{heyvlColor}{vermillion!60!black}
\definecolor{prepostColor}{RGB}{0,69,107}
\colorlet{stmtColor}{bluishgreen!70!black}

\newcommand{\extVersionCite}{\cite{Schroer:1029645}}
\newcommand{\appRefOrExt}[1]{\iftoggle{arxiv}{\Cref{#1}}{\extVersionCite}}

\newcommand{\colheylo}[1]{{\color{prepostColor}#1}}

\newcommand{\expa}{\colheylo{\ensuremath{X}}}
\newcommand{\expb}{\colheylo{\ensuremath{Y}}}

\newcommand{\expleq}{\ensuremath{\preceq}}

\newcommand{\substBy}[2]{[{#1} \mapsto {#2}]}

\newcommand{\iverson}[1]{\left[{#1}\right]}

\newcommand{\lam}[2]{\lambda {#1}.~{#2}}

\newcommand{\monus}{~\dot{-}~}

\newcommand{\lfp}[2]{\mathrm{lfp}\,{#1}.~{#2}}

\newcommand{\sfsymbol}[1]{\textsf{\upshape {#1}}}
\newcommand{\Vars}{\sfsymbol{Vars}\xspace}

\newcommand{\pGCL}{\sfsymbol{pGCL}\xspace}

\newcommand{\ArithExp}{\sfsymbol{AExp}\xspace}
\newcommand{\BExp}{\sfsymbol{BExp}\xspace}

\newcommand{\aexpr}{\ensuremath{a}} %
\newcommand{\bexpr}{\ensuremath{b}} %

\newcommand{\Nats}{\mathbb{N}}

\newcommand{\Reals}{\mathbb{R}}
\newcommand{\PosReals}{\mathbb{R}_{\geq 0}}
\newcommand{\PosRealsInf}{\mathbb{R}_{\geq 0}^{\infty}}

\newcommand{\States}{\mathsf{States}}
\newcommand{\State}{\sigma}
\newcommand{\Vals}{\mathsf{Vals}}

\newcommand{\symStmt}[1]{\ensuremath{{\color{stmtColor}{#1}}}}
\newcommand{\symSkip}{\symStmt{\texttt{skip}}}
\newcommand{\symAssign}{\,{\coloneqq}\,}

\newcommand{\symSemi}{\symStmt{\texttt{;}~}}
\newcommand{\symIf}{\symStmt{\texttt{if}}}
\newcommand{\symElse}{\symStmt{\texttt{else}}}
\newcommand{\symAssert}{\symStmt{\texttt{assert}}}

\newcommand{\symWhile}{\symStmt{\texttt{while}}}

\newcommand{\symTick}{\symStmt{\texttt{reward}}}
\newcommand{\symProc}{\symStmt{\texttt{proc}}}
\newcommand{\symcoProc}{\symStmt{\texttt{coproc}}}

\newcommand{\symPre}{\symStmt{\texttt{pre}}}

\newcommand{\blockStart}{\ensuremath{\{}}
\newcommand{\blockEnd}{\ensuremath{\}}}

\newcommand{\Requires}[1]{\symPre~{\color{prepostColor}#1}}

\newcommand{\multi}[1]{\ensuremath{\overline{#1}}}

\newcommand{\stmt}{{\color{heyvlColor}\ensuremath{S}}}
\newcommand{\stmtP}{{\color{heyvlColor}\ensuremath{S'}}}

\newcommand{\stmtOne}{{\color{heyvlColor}\ensuremath{S_1}}}

\newcommand{\stmtTwo}{{\color{heyvlColor}\ensuremath{S_2}}}

\newcommand{\stmtSkip}{\symSkip}
\newcommand{\stmtAsgn}[2]{\ensuremath{{#1} \symAssign {#2}}}

\newcommand{\stmtSeq}[2]{\ensuremath{{#1}\symSemi {#2}}}
\newcommand{\stmtIf}[3]{\ensuremath{\symIf~({#1})~\{ {#2} \}~\symElse~\{ {#3} \}}}
\newcommand{\stmtIfStart}[1]{\ensuremath{\symIf~({#1})~\{ }}
\newcommand{\stmtProb}[3]{\ensuremath{\{ {#2} \} ~[{#1}]~ \{ {#3} \}}}
\newcommand{\stmtWhile}[2]{\ensuremath{\symWhile~({#1})~\{~ {#2} ~\}}}
\newcommand{\headerWhile}[1]{\ensuremath{\symWhile~({#1})}}

\newcommand{\stmtTick}[1]{\ensuremath{\symTick~{#1}}}
\newcommand{\stmtReward}[1]{\stmtTick{#1}}

\newcommand{\exprTrue}{\ensuremath{\texttt{true}}}
\newcommand{\exprFalse}{\ensuremath{\texttt{false}}}
\newcommand{\exprFlip}[1]{\ensuremath{\symStmt{\texttt{flip}}({#1})}}

\newcommand{\eval}[1]{\llbracket{#1}\rrbracket}

\newcommand{\symWp}{\sfsymbol{wp}}

\renewcommand{\wp}[1]{\symWp\llbracket{#1}\rrbracket}
\newcommand{\wpleftright}[1]{\symWp\left\llbracket{#1}\right\rrbracket}

\newcommand{\symErt}{\sfsymbol{ert}}
\newcommand{\ert}[1]{\symErt\llbracket{#1}\rrbracket}

\newcommand{\symAert}{\sfsymbol{aert}}

\newcommand{\Expectations}{\mathbb{E}}

\newcommand{\rt}{\tau}
\newcommand{\rwTrans}[2]{\mathcal{T}_{#1}(#2)}
\newcommand{\rwTransP}[2]{\mathcal{T}'_{#1}(#2)}

\newcommand{\Expected}{\mathsf{E}}

\newcommand{\Conf}{\mathfrak{S}}
\newcommand{\conf}{s}

\newcommand{\Dist}[1]{\mathsf{Dist}({#1})}

\newcommand{\PathsN}[2]{\mathsf{Paths}^{#2}_{#1}}
\newcommand{\progMdp}[2]{\mathcal{M}_{#1}^{#2}}

\newcommand{\Prob}{\mathsf{Prob}}
\newcommand{\ProbP}{\mathsf{Prob'}}

\newcommand{\symRew}{\mathsf{rew}}
\newcommand{\Rew}[1]{\symRew({#1})}
\newcommand{\trans}[1]{\xrightarrow{#1}}

\newcommand{\term}{{\Downarrow}}
\newcommand{\final}{\bot}

\newcommand{\symRewP}{\mathsf{rew}'}
\newcommand{\RewP}[1]{\symRewP({#1})}

\newcommand{\RewState}[2]{\symRew^{#1}({#2})}
\newcommand{\RewInd}[2]{\symRew_{#1}({#2})}
\newcommand{\RewStateInd}[3]{\symRew^{#1}_{#2}({#3})}

\newcommand{\qedhere}{\hfill\ensuremath{\squareforqed}}

\definecolor{mydarkgreen}{RGB}{64,150,64}
\newcommand{\highlight}[1]{\colorbox{white!94!mydarkgreen}{$\displaystyle#1$\hspace{-0.5em}}}

\renewcommand{\monus}{\ensuremath{\ominus}}

%% file: sections/0_abstract.tex
\begin{abstract}
We present a one-fits-all programmatic approach to reason about a
plethora of objectives on probabilistic programs. The first ingredient
is to add a reward-statement to the language. We then define a program
transformation applying a monotone function to the cumulative reward of
the program. The key idea is that this transformation uses incremental
differences in the reward. This simple, elegant approach enables to
express e.g., higher moments, threshold probabilities
of rewards, the expected excess over a budget, and
moment-generating functions. All these objectives can now be
analyzed using a single existing approach: probabilistic wp-reasoning.
We automated verification using the Caesar deductive verifier
and report on the application of the transformation to some examples.
\end{abstract}

%% file: sections/1_introduction.tex
\section{Introduction}\label{sec:introduction}

\paragraph{Probabilistic programs.}
Probabilistic programs combine ordinary programming structures such as loops and conditionals with probabilistic choices.
They are a powerful modeling tool for a wide range of applications, including randomized algorithms, security mechanisms, and probabilistic databases.
Due to the intrinsic randomness, the runtime of a probabilistic program follows a probability distribution.
\Cref{fig:example-program-a} models a web server retrying a database call until success. Variable \texttt{done} records success; each loop iteration yields a \stmtReward{1} (1 second). Each iteration succeeds with probability $\nicefrac{1}{2}$ and the program terminates with probability one (\emph{almost-sure termination}).
Its expected runtime is $2$ (\emph{positive almost-sure termination}).

\paragraph{Same expected runtime, different variance.}
We add a cache (\Cref{fig:example-program-b}) raising the success probability to $\nicefrac{2}{3}$ at an initialization cost of $\nicefrac{1}{2}$ seconds.
The new program also terminates almost surely, and perhaps more surprisingly, its expected runtime is also $2$.
Despite identical expected runtimes, their runtime distributions differ (\Cref{fig:example-programs-distribution}).
The probability that \Cref{fig:example-program-a} terminates earlier is higher, while \Cref{fig:example-program-b} concentrates more tightly around 2s.
The variances are $2\,\text{s}^2$ (no cache) versus $\nicefrac{3}{4}\,\text{s}^2$ (cache), making the runtime of the cached variant more predictable.
The example shows that the expected value -- an \emph{average} -- is not always sufficient to capture the behavior of a probabilistic program.

\begin{figure}[t]
    \centering
    \begin{subfigure}[t]{0.45\textwidth}
        \centering
        \begin{align*}
             &                                     \\
             & \stmtAsgn{done}{\exprFalse}\symSemi \\
             & \symWhile~(\neg done)~\blockStart   \\
             & \quad \stmtReward{1}\symSemi        \\ %
             & \quad
            \underset{\text{\emph{probabilistic choice}}}{
                \stmtProb{\nicefrac{1}{2}}{\stmtAsgn{done}{\exprTrue}}{\stmtSkip}
            }
            \\[-1em]
             & \blockEnd
        \end{align*}
        \caption{Program modeling a web server which retries a database call, succeeding with probability $\nicefrac{1}{2}$. Each database call takes 1 second.}
        \label{fig:example-program-a}
    \end{subfigure}
    \hfill
    \begin{subfigure}[t]{0.45\textwidth}
        \centering
        \begin{align*}
             & \stmtReward{\nicefrac{1}{2}}\symSemi \\ %
             & \stmtAsgn{done}{\exprFalse}\symSemi  \\
             & \symWhile~(\neg done)~\blockStart    \\
             & \quad \stmtReward{1}\symSemi         \\ %
             & \quad
            \underset{\text{\emph{probabilistic choice}}}{
                \stmtProb{\nicefrac{2}{3}}{\stmtAsgn{done}{\exprTrue}}{\stmtSkip}
            }
            \\[-1em]
             & \blockEnd
        \end{align*}
        \caption{Variant with a cache. It initializes a cache (taking $\nicefrac{1}{2}$ seconds) and then retries a database call which succeeds with probability $\nicefrac{2}{3}$. Each database call takes $1$ second.}
        \label{fig:example-program-b}
    \end{subfigure}
    \caption{Example programs modeling a web server retrying a database call.}
    \label{fig:example-programs}
\end{figure}

\begin{figure}[t]
    \centering
    {
        \tiny{}
        \begin{tikzpicture}
            \begin{axis}[
                    ybar,
                    ymin=0,
                    xlabel={Runtime},
                    ylabel={Probability},
                    xtick={1,1.5,...,7.5},
                    bar width=0.4cm,
                    bar shift=0pt,
                    width=0.8\textwidth,
                    height=2.5cm,
                    enlarge x limits=0.15,
                ]
                \addplot [
                    red,
                    fill=red!90!black,
                    opacity=0.5,
                    domain=1:7,
                    samples=7,
                ] {(1-0.5)^x};
                \addlegendentry{\Cref{fig:example-program-a}};
                \addplot [
                    blue,
                    fill=blue!90!black,
                    opacity=0.5,
                    domain=1.5:7.5,
                    samples=7,
                ] {(2/3) * (1/3)^(x-1.5)};
                \addlegendentry{\Cref{fig:example-program-b}};
            \end{axis}
        \end{tikzpicture}
    }
    \caption{Probability distribution of runtimes of the programs in \Cref{fig:example-programs}.}
    \label{fig:example-programs-distribution}
\end{figure}
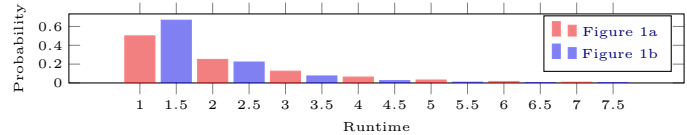

\paragraph{Deductive verification of probabilistic programs.}
Dedicated techniques have been developed to reason about different objectives of probabilistic programs.
The weakest pre-expectation ($\symWp$) semantics \cite{DBLP:series/mcs/McIverM05} is a well-established formalism to reason about expected values on termination.
An extensive catalog of proof rules has been developed, including \cite{DBLP:conf/esop/BatzKKM18,DBLP:journals/pacmpl/McIverMKK18,DBLP:journals/pacmpl/HarkKGK20,DBLP:conf/cav/BatzCKKMS20,DBLP:journals/pacmpl/FengCSKKZ23}.
The $\symWp$ calculus has been extended for reasoning about expected runtimes~\cite{DBLP:conf/esop/KaminskiKMO16,DBLP:books/cu/20/KaminskiKM20} and general rewards~\cite{DBLP:conf/birthday/BatzKM025}.
Tool-assisted $\symWp$-based verification includes \emph{HOL} mechanization~\cite{DBLP:journals/tcs/HurdMM05} and the \emph{Caesar} deductive verifier~\cite{DBLP:journals/pacmpl/SchroerBKKM23}.
Tailored variations have been proposed for covariances~\cite{DBLP:conf/qest/KaminskiKM16}, higher moments of rewards~\cite{DBLP:conf/tacas/KuraUH19,DBLP:conf/pldi/Wang0R21}, and for amortized complexity analysis~\cite{DBLP:journals/pacmpl/BatzKKMV23}.

\paragraph{Programmatic rewards as a unifying abstraction.}
Different objectives currently require separate $\symWp$-calculi, hindering reuse.
We aim for a \emph{one-fits-all} approach so that existing $\symWp$ semantics, proof rules, and tools can be reused for a broad range of objectives.
We do so by adding a $\symTick$ statement to the programming language, inspired by~\cite{DBLP:journals/pacmpl/AvanziniMS20,DBLP:journals/pacmpl/BatzKKMV23}, to collect (non-negative) \emph{rewards} during program execution.
The \emph{expected cumulative reward} is then the expected sum of all rewards collected during a program run.
We show how different objectives can be modeled using reward statements, including expected runtimes, expected visiting times, and discounted rewards.

\paragraph{Incremental differences for reward transformations.}
To tackle objectives like higher moments of expected rewards, we need to reason about \emph{transformed} expected rewards of program $\stmt$ such as $\Expected(\Rew{\stmt}^2)$.
Our key idea is to make use of \emph{incremental differences} of rewards, inspired by the $\symAert$ calculus~\cite{DBLP:journals/pacmpl/BatzKKMV23}.
We define a simple program transformation that reduces reasoning about complex expected rewards to reasoning about standard expected rewards, allowing us to reuse existing $\symWp$ theory, techniques, and tools.

\paragraph{Contributions and outline.}
We provide one programming language and one weakest pre-expectation semantics to reason about a broad range of quantitative objectives of probabilistic programs in a unified way.
\Cref{sec:foundations} introduces the syntax and semantics for probabilistic programs.
Our main contributions are:
\begin{enumerate}
    \item A programmatic encoding for multiple quantitative objectives (\Cref{sec:programmatic-reward-modeling}),
    \item A program transformation (\Cref{sec:program-transformation}) that is proven sound w.r.t. a Markov chain transformation (\Cref{sec:mc-transformation}),
    \item Application to different objectives, including higher moments. (\Cref{sec:reward-transformation-functions}),
\end{enumerate}
Finally, we briefly touch on automation in \Cref{sec:automation}, compare related work in \Cref{sec:related-work} and conclude in \Cref{sec:conclusion}.

%% file: sections/2_background.tex
\section{Foundations: Semantics for Probabilistic Programs}\label{sec:foundations}

In this section, we introduce the syntax and semantics of probabilistic programs with $\symTick$ statements.
Our presentation is based on \cite{DBLP:conf/birthday/BatzKM025}, with two key adaptations: (1) we allow arbitrary expressions $\aexpr$ in $\stmtTick{\aexpr}$ statements instead of only constants, and (2) we allow program variables to take on values from the set $\PosRealsInf = \mathbb{R}_{\geq 0} \cup \Set{\infty}$.
We formally define the syntax of probabilistic guarded command language ($\pGCL$) programs, describe their operational semantics via Markov chains, and present the weakest pre-expectation semantics.

\subsection{Syntax of Probabilistic Programs}

\paragraph{States and expressions.}
We consider a finite set of variables $\Vars$.
A \emph{program state} $\State$ is a mapping from variables to values, i.e., $\State \in \States = \Vars \to \Vals$.
In this work, we will use $\Vals = \PosRealsInf$.
We write $\State\substBy{x}{e}$ for the state $\State$ updated so that $x$ maps to the value of $e$ in $\State$ and all other variables are untouched, i.e. $\State\substBy{x}{e}(y) = e(\State)$ if $y = x$, and $\State(y)$ otherwise.
An arithmetic expression $\aexpr \in \ArithExp$ is a mapping from program states to values, i.e. $\ArithExp = \States \to \Vals$.
Boolean expressions $\bexpr \in \BExp$ map program states to $\Set{\exprTrue,~ \exprFalse}$.
\paragraph{Programs.}
The syntax of $\pGCL$ programs $\stmt$ is defined by the following grammar, where $x \in \Vars$, $\aexpr \in \ArithExp$, $p \in [0,1]$, and $\bexpr \in \BExp$:
\[
    \begin{array}{rcl}
        \stmt & ::=  & \stmtSkip \mid \stmtAsgn{x}{\aexpr} \mid \stmtTick{\aexpr} \mid \stmtOne \symSemi \stmtTwo                 \\
              & \mid & \stmtProb{p}{\stmtOne}{\stmtTwo} \mid \stmtIf{\bexpr}{\stmtOne}{\stmtTwo} \mid \stmtWhile{\bexpr}{\stmt}~.
    \end{array}
\]
The $\stmtSkip$ statement does nothing.
The assignment $\stmtAsgn{x}{\aexpr}$ assigns the value of expression $\aexpr$ to variable $x$.
The $\stmtTick{\aexpr}$ statement collects a non-negative reward $\aexpr$ when executed.
Note that $\aexpr$ is an expression that may contain program variables.
The sequential composition $\stmtOne \symSemi \stmtTwo$ executes $\stmtOne$ followed by $\stmtTwo$.
The probabilistic choice $\stmtProb{p}{\stmtOne}{\stmtTwo}$ executes $\stmtOne$ with probability $p$ and $\stmtTwo$ with probability $1-p$.
The conditional $\stmtIf{\bexpr}{\stmtOne}{\stmtTwo}$ executes $\stmtOne$ if the boolean expression $\bexpr$ evaluates to true, and $\stmtTwo$ otherwise.
The loop $\stmtWhile{\bexpr}{\stmt}$ repeatedly executes $\stmt$ as long as the boolean expression $\bexpr$ evaluates to true.

\subsection{Operational Semantics}

\begin{wrapfigure}{r}{0.25\textwidth}
    \centering
    \vspace*{-0.7cm}
    {\footnotesize
    \begin{tikzpicture}[
        state/.style={circle, draw},
    ]
        \node[state, initial, initial where=left, initial text={}] (s1) [] {$s_1$};
        \node[state] (s2) [below of=s1] {$s_2$};
        \node[state] (s3) [left of=s2] {$s_3$};
        \node[above=-0.0cm of s3, font=\scriptsize] {$\Rew{{s_3}} {=} 1$};
        \node[state] (s4) [below left=0.5cm and 0.0cm of s3] {$s_5$};
        \node[state] (s5) [below right of=s4] {$s_6$};
        \node[state](s6) [below right=0.5cm and 0.0cm of s3] {$s_4$};
        \node[state] (term) [below right of=s6] {$s_7$};
        \node[state] (final) [below left of=term] {$\bot$};

        \path[->] (s1) edge node {} (s2);
        \path[->] (s2) edge node {} (s3);
        \path[->] (s3) edge node[left] {\scriptsize{}0.5} (s4);
        \path[->] (s3) edge node[right] {\scriptsize{}0.5} (s6);
        \path[->] (s4) edge node {} (s5);
        \path[->] (s5) edge node {} (term);
        \path[->] (term) edge node {} (final);
        \path[->] (final) edge[loop left] node {} (final);
        \path[->,bend right=60] (s6) edge node {} (s2);
    \end{tikzpicture}}
    \caption{Operational MC for \Cref{fig:example-program-a}.}
    \label{fig:webserver-markov-chain}
    \vspace*{-0.5cm}
\end{wrapfigure}

The operational semantics of programs is given by Markov chains with rewards, whose foundations we briefly recall.

A \emph{Markov chain} is a tuple $\mathcal{M} = (\Conf, \Prob, \State_1, \symRew)$, where $\Conf$ is a (countable) set of configurations, $\Prob \colon \Conf \to \Dist{\Conf}$ is the transition function, $\State_1 \in \Conf$ is the initial configuration, and $\symRew \colon \Conf \to \PosRealsInf$ is the reward function.

A (finite) \emph{path} $\pi \in \PathsN{\mathcal{M}}{=n}$ of length $n$ in $\mathcal{M}$ is a sequence of configurations $\pi = \conf_1 \conf_2 \ldots \conf_n$ with $\Prob(\conf_i)(\conf_{i+1}) > 0$ for all $1 \leq i < n$.
We lift $\Prob$ and $\symRew$ to finite paths by $\Prob(\conf_1 \ldots \conf_n) = \prod_{1 \leq i < n} \Prob(\conf_i)(\conf_{i+1})$ and $\Rew{\conf_1 \ldots \conf_n} = \sum_{1 \leq i \leq n} \Rew{\conf_i}$.
Let $\Expected(\Rew{\mathcal{M}})$ denote the expected cumulative reward of $\mathcal{M}$, given by $\sup_{n \in \Nats} \sum_{\pi \in \PathsN{\mathcal{M}}{=n}(\State)} \Prob(\pi) \cdot \Rew{\pi}$.
More generally, the reward transformed by a function $f \colon \PosRealsInf \to \PosRealsInf$ is defined as $\Expected(f(\Rew{\mathcal{M}})) = \sup_{n \in \Nats} \sum_{\pi \in \PathsN{\mathcal{M}}{=n}(\State)} \Prob(\pi) \cdot f(\Rew{\pi})$.

The \emph{operational Markov chain} $\progMdp{\stmt}{\State_1} = (\Conf, \Prob, \conf_1, \symRew)$ of program $\stmt$ is given by the reachable\footnote{The inference rules admit only finitely many transitions from each configuration, hence the set of reachable configurations is countable.} configurations $\Conf \subseteq (\pGCL \cup \set{\term}) \times \States \cup \Set{\final}$ with transition probabilities $\Prob$ (see \appRefOrExt{fig:operational-semantics}), initial configuration $\conf_1 = (\stmt, \State_1)$, and reward function $\symRew$ defined as follows:
\begin{align*}
    \Rew{(\stmt, \State)} = \begin{cases}
        r & \text{if } (\exists \stmt' \in \pGCL.~ \stmt = \stmtTick{r}\symSemi \stmt') \lor (\stmt = \stmtTick{r}) \\
        0 & \text{otherwise}
    \end{cases}
\end{align*}
The operational Markov chain for the program in \Cref{fig:example-program-a} is shown in \Cref{fig:webserver-markov-chain}.
We write $\Expected(\RewState{\State_1}{\stmt})$ to denote the expected cumulative reward of the operational Markov chain $\progMdp{\stmt}{\State_1}$, i.e., $\Expected(\Rew{\progMdp{\stmt}{\State_1}})$.
Let $\Expected(\Rew{\stmt}) = \lam{\State}{\Expected(\RewState{\State}{\stmt})}$.

\subsection{Weakest Pre-Expectation Semantics}\label{sec:wp-semantics}

An \emph{expectation} is a function $f \colon \States \to \PosRealsInf$.
Let $\Expectations$ denote the set of all expectations.
Expectations form a complete lattice $(\Expectations, \expleq)$ with pointwise order $\expa \expleq \expb$ iff $\expa(\State) \leq \expb(\State)$ for all $\State \in \States$.
Thus, all infima and suprema exist, and by Kleene's fixed-point theorem, least fixed points $\lfp{\expb}{\Phi(\expb)}$ of monotone functions $\Phi \colon \Expectations \to \Expectations$ exist.
Let $\iverson{\bexpr} \in \Expectations$ map a Boolean expression $\bexpr$ to $\iverson{\bexpr}(\State) = 1$ iff $\bexpr(\State) = \exprTrue$, and to $0$ otherwise.
A substitution $\expa\substBy{x}{\aexpr}$ of variable $x$ by expression $\aexpr$ in $\expa \in \Expectations$ is given by $\expa\substBy{x}{\aexpr}(\State) = \expa(\State\substBy{x}{\aexpr})$ for all $\State \in \States$.
Arithmetic on expectations is lifted pointwise, e.g. $(\expa+\expb)(\State)=\expa(\State)+\expb(\State)$.

The \emph{weakest pre-expectation} for a program $\stmt$ is a function $\wp{\stmt} \colon \Expectations \to \Expectations$ defined by structural induction on $\stmt$; see \Cref{fig:wp-semantics}.
Intuitively, $\wp{\stmt}(\expa)$ is the expected cumulative reward of the program $\stmt$ starting from the initial state $\State$, given that the reward earned on termination is $\expa$.
The semantics is standard\footnote{The $\symWp$ calculus was originally defined without $\symTick$ statements~\cite{DBLP:series/mcs/McIverM05,DBLP:phd/dnb/Kaminski19};
we use the extended calculus of~\cite{DBLP:conf/birthday/BatzKM025} with $\symTick$ statements.}, but we highlight $\symTick$ and while loops.
For a reward statement $\stmtTick{\aexpr}$, we have $\wp{\stmtTick{\aexpr}}(\expa) = \aexpr + \expa$, i.e. the result of evaluating $\aexpr$ in the current state is added to $\expa$.
For a loop $\stmtWhile{\bexpr}{\stmt}$, the pre-expectation is given by the least fixed point of the functional $\expb \mapsto \iverson{\bexpr} \cdot \wp{\stmt}(\expb) + \iverson{\neg \bexpr} \cdot \expa$, dubbed the \emph{loop-characteristic functional w.r.t. post $\expa$}.

\begin{figure}[t]
    \renewcommand{\arraystretch}{1.2}%
    \setlength{\tabcolsep}{6pt}%
    \begin{tabular}{@{}l@{\hspace{1.0em}}l | l@{\hspace{1.0em}}l@{}}
        \toprule
        $\stmt$                      & $\wp{\stmt}(\expa)$                   & $\stmt$                                        & $\wp{\stmt}(\expa)$                                                                                                        \\
        \midrule
        $\stmtSkip$                  & $\expa$                               & $\stmtProb{p}{\stmtOne}{\stmtTwo}$             & $p \,{\cdot}\, \wp{\stmtOne}(\expa) {+} (1{-}p) \,{\cdot}\, \wp{\stmtTwo}(\expa)$                                          \\
        $\stmtAsgn{x}{\aexpr}$       & $\expa\substBy{x}{\aexpr}$            & $\stmtIfStart{\bexpr}~\stmtOne~\blockEnd$      & \multirow{2}{*}{$\iverson{\bexpr} {\cdot}\, \wp{\stmtOne}(\expa) {+} \iverson{\neg\bexpr} {\cdot}\, \wp{\stmtTwo}(\expa)$} \\
        $\stmtTick{\aexpr}$          & $\aexpr + \expa$                      & $\quad\symElse~\blockStart~\stmtTwo~\blockEnd$ &                                                                                                                            \\
        $\stmtOne \symSemi \stmtTwo$ & $\wp{\stmtOne}(\wp{\stmtTwo}(\expa))$ & $\stmtWhile{\bexpr}{\stmtP}$                    & $\lfp{\expb}{\iverson{\bexpr} \cdot \wp{\stmtP}(\expb) + \iverson{\neg \bexpr} \cdot \expa}$                                \\
        \bottomrule
    \end{tabular}
    \caption{Inductive definition of the \emph{weakest pre-expectation semantics} for program $\stmt$, $\wp{\stmt} \colon \Expectations \to \Expectations$ where $\expa \in \Expectations$ is the \emph{post-expectation}.}
    \label{fig:wp-semantics}
\end{figure}

For example, the least fixed point for the loop in \Cref{fig:example-program-a} with post $\expa = 0$ is given by $(\lfp{\expb}{\iverson{\neg done} \cdot (1 + 0.5 \cdot \expb\substBy{done}{\exprTrue} + 0.5 \cdot \expb)}) = \iverson{\neg done} \cdot 2$.
For the full program $\stmt$, we have $\wp{\stmt}(0) = (\iverson{\neg done} \cdot 2)\substBy{done}{\exprFalse} = 2$.

\paragraph{Soundness.}
The expression $\wp{\stmt}(\expa)(\State)$ evaluates to the expected cumulative reward in the Markov chain corresponding to the semantics of $\stmt$ starting in $\State$~\cite[Theorem 6]{DBLP:conf/birthday/BatzKM025}, i.e. $\wp{\stmt}(\expa)(\State) = \Expected(\RewStateInd{\State}{\expa}{\stmt})$.
Here, $\RewStateInd{\State}{\expa}{\stmt}$ is the \emph{reward function induced by $\expa$} on top of the operational Markov chain $\progMdp{\stmt}{\State}$, which collects the reward $\expa(\State)$ upon termination, and otherwise the reward defined for each statement $\stmt$ in configuration $(\stmt, \State)$ according to \appRefOrExt{fig:operational-semantics}.
Formally, it is defined as $\RewInd{\expa}{(\term,\State)} = \expa(\State)$ and $\RewInd{\expa}{(\stmt,\State)} = \Rew{(\stmt,\State)}$ otherwise.
We use abbreviations $\RewStateInd{\State}{\expa}{\stmt} = \RewInd{\expa}{\progMdp{\stmt}{\State}}$ and $\Expected(\RewInd{\expa}{\stmt}) = \lam{\State}{\Expected(\RewStateInd{\State}{\expa}{\stmt})}$.

\paragraph{Healthiness conditions.}
Corresponding to properties of expected values, $\symWp$ satisfies several well-behavedness properties, often called \emph{healthiness conditions}~\cite{DBLP:series/mcs/McIverM05,DBLP:phd/dnb/Kaminski19,DBLP:conf/birthday/BatzKM025}.
Let $\stmt \in \pGCL$ be a program and $\expa, \expb \in \Expectations$ be expectations.
Then, $\wp{\stmt}$ is monotonic, i.e. for $\expa \expleq \expb$, we have $\wp{\stmt}(\expa) \expleq \wp{\stmt}(\expb)$.
Further, $\wp{\stmt}$ is linear for $\symTick$-free programs $\stmt$, i.e. for all $\expa, \expb \in \Expectations$ and $a \in \PosReals$, we have $\wp{\stmt}(a \cdot \expa + \expb) = a \cdot \wp{\stmt}(\expa) + \wp{\stmt}(\expb)$.

%% file: sections/3_reward_structures.tex
\section{Programmatic Reward Modeling}\label{sec:programmatic-reward-modeling}

We introduce \emph{programmatic reward modeling} for probabilistic programs via the program statement $\symTick$.
By expressing rewards via $\symTick$, we can directly incorporate different reward structures into the program text.

\subsection{Rewards on Termination}\label{sec:rewards-on-termination}

A common reward structure collects rewards only on termination, as considered by the original weakest pre-expectation semantics $\symWp$~\cite{DBLP:series/mcs/McIverM05,DBLP:journals/pe/GretzKM14,DBLP:phd/dnb/Kaminski19} (without $\symTick$ statements).
Here, for a program $\stmt$ and $f \in \Expectations$, $\wp{\stmt}(f)$ is the expected value of $f$ on termination of $\stmt$.
Using $\symWp$, we can also easily query for transformed expected rewards on termination, e.g. $\wp{\stmt}(x^2)$ to obtain the second moment of the value of $x$ on termination of $\stmt$.

For a more programmatic approach, we generalize the classical idea of embedding a post-condition into a program via an $\symAssert$ statement, which eliminates the need to specify the post-condition separately.
Instead of specifying $\stmt$ explicitly with some post-expectation $\expa$, we can use the $\symTick$ statement to collect rewards on termination, i.e. the sequence $\stmt\symSemi \stmtTick{\expa}$, and use $\expa = 0$.
The corresponding soundness statement is also simplified, replacing $\symRew_\expa$ by $\symRew$.
Here, it is essential that we allow arbitrary arithmetic functions $\expa$ in the $\symTick$ statement, including expectations $\Expectations$ (c.f. \Cref{sec:foundations}).\footnote{Although not necessary for our setup, we conjecture that the assertion language from~\cite{DBLP:journals/pacmpl/BatzKKM21} can be extended to support $\symTick$ statements while remaining \emph{expressive}, i.e. for all syntactic $\expa \in \Expectations$, one can syntactically express pre-expectation $\wp{\stmt}(\expa)$.}
\begin{restatable}[Programmatic Rewards on Termination]{lemma}{lemmaRewardsOnTermination}\label{lem:rewards-on-termination}
    Let $\stmt \in \pGCL$ and $\expa \in \Expectations$. Then, $\wp{\stmt}(\expa) \eeq \wp{\stmt\symSemi \stmtTick{\expa}}(0) \eeq \Expected(\Rew{\stmt\symSemi \stmtTick{\expa}})$.
\end{restatable}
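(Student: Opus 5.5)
The two equalities are proved separately. The first is a direct calculation in the $\symWp$ calculus of \Cref{fig:wp-semantics}. The second follows from the soundness theorem of \cite[Theorem 6]{DBLP:conf/birthday/BatzKM025} recalled in \Cref{sec:wp-semantics}, together with a small observation about the reward function induced by the post-expectation $0$.

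For the first equality, unfold the rules for sequential composition and for $\symTick$:
\[
  \wp{\stmt\symSemi \stmtTick{\expa}}(0) = \wp{\stmt}\bigl(\wp{\stmtTick{\expa}}(0)\bigr) = \wp{\stmt}(\expa + 0) = \wp{\stmt}(\expa).
\]
This uses only that $\expa \in \Expectations$ is an admissible expression in a reward statement, since $\ArithExp = \States \to \Vals = \Expectations$ with $\Vals = \PosRealsInf$. No induction over $\stmt$ is needed.

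For the second equality, apply soundness to the program $\stmt\symSemi\stmtTick{\expa}$ with post-expectation $0$. This gives, for every $\State$,
\[
  \wp{\stmt\symSemi\stmtTick{\expa}}(0)(\State) = \Expected\bigl(\RewStateInd{\State}{0}{\stmt\symSemi\stmtTick{\expa}}\bigr).
\]
It then suffices to show that $\RewInd{0}{\cdot}$ and $\Rew{\cdot}$ coincide on every configuration of $\progMdp{\stmt\symSemi\stmtTick{\expa}}{\State}$. By definition they agree on all configurations except terminal ones $(\term,\State')$. On those, $\RewInd{0}{(\term,\State')} = 0$, and $\Rew{(\term,\State')} = 0$ because $\term$ is not of the form $\stmtTick{r}$ or $\stmtTick{r}\symSemi\stmt'$. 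The configuration $\final$ also carries reward $0$ under both functions. Hence the two Markov chains with rewards are identical. Since the expected cumulative reward is defined purely in terms of $\Prob$ and the reward function, the expectations agree, which yields $\Expected(\Rew{\stmt\symSemi\stmtTick{\expa}})$ pointwise in $\State$.

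The main obstacle is reconciling the two readings of the statement $\stmtTick{\expa}$ for an arbitrary expression. In the operational reward function, $\Rew{(\stmtTick{r},\State)} = r$ must be read as the value $\expa(\State)$ evaluated in the current state. This is exactly the paper's first adaptation of \cite{DBLP:conf/birthday/BatzKM025}, which allows arbitrary expressions in reward statements instead of constants. I would therefore check that the soundness theorem carries over to this setting. The check is a routine inspection of the reward case of its proof: $\wp{\stmtTick{\aexpr}}(\expa) = \aexpr + \expa$ corresponds to collecting $\aexpr(\State)$ in the configuration $(\stmtTick{\aexpr},\State)$ and then moving to $(\term,\State)$ in the same state. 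I would take this as given, as the paper does in \Cref{sec:wp-semantics}.
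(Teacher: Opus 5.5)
Your proof is correct and follows the paper's argument. You unfold $\symWp$ for sequential composition and $\symTick$, apply the soundness theorem to $\stmt\symSemi\stmtTick{\expa}$ with post-expectation $0$, and observe that $\symRew_0$ coincides with $\symRew$ because terminal configurations carry reward $0$ under both; your explicit check of this identification, and your remark that soundness must carry over to expression-valued rewards, spell out what the paper's one-line proof leaves implicit.
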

As the programmatic view indicates, this model only collects rewards on termination of the program.
In particular, diverging executions collect no rewards.

\subsection{Run-Times and Resource Consumption}\label{sec:run-times-and-resource-consumption}

The formal verification of expected run-times, and more generally resource consumption, is a well-studied topic in the literature~\cite{DBLP:conf/esop/KaminskiKMO16,DBLP:journals/jacm/KaminskiKMO18,DBLP:conf/tacas/KuraUH19,DBLP:books/cu/20/KaminskiKM20,DBLP:conf/pldi/Wang0R21}.
The analysis is subtle.
We already examined \Cref{fig:example-programs} which has a finite expected run-time despite having an infinite execution path (the database call never succeeding).
In the following, we examine \Cref{fig:approaches-runtime-tracking} which shows three different reward models for tracking runtimes for a schematic program consisting of a single while loop.

\paragraph{Unsoundness with counter variables.}
In \Cref{fig:rewards-termination}, the runtime is tracked in variable $\rt$ and the reward of $\rt$ is collected on termination of the program.
However, for e.g. the infinite loop ($\bexpr = \exprTrue$), the reward is zero, because the execution never reaches the $\stmtTick{\rt}$ statement.
This technique is unsound~\cite{DBLP:journals/jacm/KaminskiKMO18}.

An adaptation to track the second moment of the runtime is similarly deficient.
\Cref{fig:rewards-termination-squared} is a programmatic version of an approach of the $\mathsf{rt}$ calculus from~\cite{DBLP:conf/qest/KaminskiKM16}, developed to analyze covariances of runtimes.\footnote{The $\mathsf{rt}$ calculus from~\cite{DBLP:conf/qest/KaminskiKM16} is distinct from its successor, the $\symErt$ calculus~\cite{DBLP:conf/esop/KaminskiKMO16,DBLP:journals/jacm/KaminskiKMO18}.}
It collects the reward $\rt^2$ on termination.
But e.g. for the infinite loop, the reward is again zero.

A sufficient condition for the soundness of the counter variable approach (for the first moment of runtimes) was shown in~\cite{winkler_runtime_ast}: \emph{almost-sure termination} (AST), which is termination with probability $1$.
Proving AST is undecidable~\cite{DBLP:conf/mfcs/KaminskiK15}, and we therefore aim for an approach that avoids such a strong side-condition.

\begin{figure}[t]
    \centering
    \begin{subfigure}[c]{0.33\textwidth}
        \centering
        \begin{align*}
             & \stmtAsgn{\rt}{0}\symSemi             \\
             & \headerWhile{\bexpr}~\blockStart      \\
             & \quad \stmtAsgn{\rt}{\rt + 1}\symSemi \\
             & \quad \stmtP                          \\
             & \blockEnd                             \\
             & \stmtTick{\rt}
        \end{align*}
        \caption{Tracking runtimes via\\ rewards on termination,\\ corresponding to $\wp{\stmt}(\rt)$.}
        \label{fig:rewards-termination}
    \end{subfigure}%
    \begin{subfigure}[c]{0.33\textwidth}
        \centering
        \begin{align*}
             & \stmtAsgn{\rt}{0}\symSemi             \\
             & \headerWhile{\bexpr}~\blockStart      \\
             & \quad \stmtAsgn{\rt}{\rt + 1}\symSemi \\
             & \quad \stmtP                          \\
             & \blockEnd                             \\
             & \stmtTick{\rt^2}
        \end{align*}
        \caption{Tracking squared\\ runtimes via rewards on\\ termination, corresponding\\ to $\wp{\stmt}(\rt^2)$.}
        \label{fig:rewards-termination-squared}
    \end{subfigure}%
    \begin{subfigure}[c]{0.33\textwidth}
        \centering
        \begin{align*}
             & \phantom{\stmtAsgn{\rt}{0}\symSemi} \\
             & \headerWhile{\bexpr}~\blockStart    \\
             & \quad \stmtTick{1}\symSemi          \\
             & \quad \stmtP                        \\
             & \blockEnd                           \\
             & \phantom{\stmtTick{\rt^2}}
        \end{align*}
        \caption{Tracking runtimes via\\ incremental rewards,\\ corresponding to $\ert{\stmt}(0)$.}
        \label{fig:incremental-rewards}
    \end{subfigure}%

    \caption{Three different reward models for tracking runtimes for a schematic program consisting of a single while loop. The runtime of interest is the number of times the loop body $\stmtP$ is executed. Only \Cref{fig:incremental-rewards} tracks the runtime correctly, whereas \Cref{fig:rewards-termination,fig:rewards-termination-squared} have the result zero for a diverging loop ($\bexpr = \exprTrue$).}
    \label{fig:approaches-runtime-tracking}
\end{figure}

\paragraph{Incremental runtime collection.}
The $\symErt$ calculus~\cite{DBLP:conf/esop/KaminskiKMO16,DBLP:journals/jacm/KaminskiKMO18} is an extension of $\symWp$ to reason about expected runtimes for all -- including non-AST -- programs.\footnote{For simplicity, we consider the runtime model which counts loop iterations.}
The key idea is to collect the reward in each loop iteration incrementally.
Therefore, the reward is always incremented and diverging paths will collect infinite rewards.
To avoid using yet another calculus, we leverage programmatic reward modeling, as shown in \Cref{fig:incremental-rewards}.
The approach is sound without side-conditions on termination.

\begin{restatable}[Incremental Runtime Collection]{lemma}{lemmaIncrementalRuntimeTracking}\label{lem:incremental-runtime-tracking}
    Let loop-free $\stmtP \in \pGCL$ and $\bexpr \in \BExp$.
    Then, $\wp{\stmtWhile{\bexpr}{\stmtTick{1}\symSemi \stmtP}} \eeq \ert{\stmtWhile{\bexpr}{\stmtP}}$.
\end{restatable}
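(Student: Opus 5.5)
Both sides are defined as least fixed points of loop-characteristic functionals, so I would show that the two functionals coincide and conclude that their least fixed points coincide. Fix a post-expectation $\expa \in \Expectations$. Recall the $\symErt$ rule for loops (counting loop iterations, as in the footnote): $\ert{\stmtWhile{\bexpr}{\stmtP}}(\expa) = \lfp{\expb}{\iverson{\bexpr} \cdot (1 + \ert{\stmtP}(\expb)) + \iverson{\neg \bexpr} \cdot \expa}$. On the $\symWp$ side, the rules for loops, sequential composition and $\symTick$ in \Cref{fig:wp-semantics} give
\[
  \wp{\stmtWhile{\bexpr}{\stmtTick{1}\symSemi \stmtP}}(\expa) = \lfp{\expb}{\iverson{\bexpr} \cdot (1 + \wp{\stmtP}(\expb)) + \iverson{\neg \bexpr} \cdot \expa}.
\]
The two functionals therefore differ only in using $\wp{\stmtP}$ versus $\ert{\stmtP}$.

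The key step is an auxiliary claim: for loop-free $\stmtP$ and every $\expb$, $\wp{\stmtP}(\expb) = \ert{\stmtP}(\expb)$. I would prove it by structural induction over the loop-free fragment: $\stmtSkip$, assignment, sequential composition, probabilistic choice and conditional. In the iteration-counting runtime model the $\symErt$ rules for these constructs coincide with the $\symWp$ rules, and the induction hypothesis closes the composite cases, e.g.\ $\wp{\stmtOne\symSemi\stmtTwo}(\expb) = \wp{\stmtOne}(\wp{\stmtTwo}(\expb)) = \ert{\stmtOne}(\ert{\stmtTwo}(\expb))$.

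Two cases need care. The first is $\symTick$ statements inside $\stmtP$. Since $\symErt$ is a calculus for $\symTick$-free programs, I would read the lemma as implicitly assuming $\stmtP$ is $\symTick$-free. Alternatively, one extends $\symErt$ with the same rule $\aexpr + \expb$, and then this case is trivial.

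The second, which I expect to be the main obstacle, is pinning down exactly which runtime model of $\symErt$ is meant. In the original calculus of Kaminski et al., atomic statements such as assignments also cost $1$ and guards are charged. With those costs the claim is false, and the identity would instead require inserting $\symTick$ statements at every atomic statement. I would therefore state explicitly the simplified $\symErt$ in which only loop iterations cost $1$, i.e.\ loop-free constructs are $\symWp$-like and the loop rule adds $\iverson{\bexpr}\cdot 1$. This is what the footnote indicates. With that convention fixed, the induction is routine.

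Once the auxiliary claim holds, the two functionals are equal as functions $\Expectations \to \Expectations$, so their least fixed points, which exist by Kleene's theorem since both functionals are monotone, are equal. Since $\expa$ was arbitrary, the lemma follows.
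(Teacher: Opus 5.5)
Your proposal is correct and follows the paper's proof: you unfold the $\symWp$ loop rule through sequential composition and $\stmtTick{1}$ to get the functional $\iverson{\bexpr}\cdot(1+\wp{\stmtP}(\expb)) + \iverson{\neg\bexpr}\cdot\expa$, then identify it with the iteration-counting $\symErt$ loop functional using $\wp{\stmtP} = \ert{\stmtP}$ for loop-free $\stmtP$. Your structural induction for that identity, and your caveats that $\stmtP$ be $\symTick$-free and that only loop iterations are charged, spell out what the paper states in one line; its $\symErt$ table likewise has no $\symTick$ case and charges only loop iterations.
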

Thus, to correctly track the expected runtime of a program, one should avoid counter variables and instead use the $\symTick$ statement to collect the reward incrementally.
Extending this approach to other objectives such as higher moments of runtimes is the goal of \Cref{sec:reward-transformations}.

\subsection{Ghost Variables for Reward Modeling}

We now illustrate how using \emph{ghost variables}, we can encode a variety of reward structures via the $\symTick$ statement.
Ghost variables are program variables that do not affect the program's functional behavior, but are used to aid verification.
The term \emph{ghost variable} is standard in deductive verification; they are also called auxiliary (or observation) variables.
The counter variable approach presented just before is one such example.
We present several gadgets for different quantitative objectives, indicated as program transformations $\stmt \rightsquigarrow \stmtP$.
Assuming that ghost variables are initialized to $0$; the result of $\wp{\stmtP}(0)$ is the desired reward.

\paragraph{Discounted Rewards.}\label{sec:discounted-rewards}
In some applications, future rewards are considered less valuable than immediate rewards.
This is known as \emph{discounting}~\cite{puterman-discounting}.
We encode discounting with a ghost variable $\rt$ to track program steps, collecting rewards discounted by a factor $\gamma \in [0,1]$ at each time step.
\[
    \stmtTick{\aexpr} \quad\rightsquigarrow\quad \stmtTick{\gamma^\rt \cdot \aexpr}\symSemi \stmtAsgn{\rt}{\rt + 1}~.
\]

\paragraph{Step-Indexed Expected Values.}\label{sec:step-indexed-expected-values}
We obtain \emph{step-indexed} expected values, considered by e.g. \cite{DBLP:journals/pacmpl/MoosbruggerSBK22}, using a ghost variable $\rt$ to track program steps and a logical variable $N$ for the step of interest:
\[
    \stmtTick{\aexpr} \quad\rightsquigarrow\quad \stmtIfStart{\rt = N}~\stmtTick{\aexpr}~\blockEnd\symSemi \stmtAsgn{\rt}{\rt + 1}~.
\]
When instead condition $\rt \leq N$ is used, we track cumulative expected values up to step $N$.
This allows reasoning about \emph{long-run average rewards}, i.e. the limit of the average reward up to step $N$ as $N$ goes to infinity: $\limsup_{N \to \infty} \nicefrac{1}{N} \cdot \wp{\stmt}(0)$.

\paragraph{Expected Visiting Times.}\label{sec:expected-visiting-times}
The $\symWp$ calculus allows symbolic conditions $\bexpr \in \BExp$.
This enables reasoning about \emph{expected visiting times} (a.k.a. \emph{occupation measures}), i.e. the expected number of times states satisfying $\bexpr$ are visited during program execution.
\[
    \stmt \quad\rightsquigarrow\quad \stmt\symSemi \stmtIfStart{\bexpr}~\stmtTick{1}~\blockEnd
\]
Once computed, further expected values of random variable $X$ reduce to simple multiplication of $X$ by the expected visiting time~\cite{DBLP:journals/siamcomp/SharirPH84}.
This has been successfully used in probabilistic model checking~\cite{DBLP:journals/jar/MertensKQW25}.

\paragraph{First Visit and Return Times.}\label{sec:first-visit-times}\label{sec:first-return-times}
Similarly, gadgets to track the first visit time and first return time of a state satisfying condition $\bexpr \in \BExp$ can be built.
The ghost variable $\phi$ stores whether this is the first visit/return.
\begin{itemize}
    \item \emph{First Visit:}\quad $\stmt \:\rightsquigarrow\: \stmt\symSemi \stmtIfStart{\bexpr}~ \stmtTick{[\phi = 0]}\symSemi \stmtAsgn{\phi}{1}~\blockEnd$
    \item \emph{First Return:}\quad $\stmt \:\rightsquigarrow\: \stmt\symSemi \stmtIfStart{\bexpr}~ \stmtTick{[\phi = 1]}\symSemi \stmtAsgn{\phi}{\min(\phi + 1, 2)}~\blockEnd$
\end{itemize}

%% file: sections/4_encoding_moments_in_rewards.tex
\section{Reward Transformations}\label{sec:reward-transformations}

We now consider \emph{$f$-transformed cumulative rewards}, i.e. the expected value of some function $f$ applied to the cumulative reward.
Whereas it is simple to apply $f$ to expected rewards \emph{on termination} using the weakest pre-expectation calculus (\Cref{sec:rewards-on-termination}), this is not the case for cumulative expected rewards (\Cref{sec:run-times-and-resource-consumption}).
Instead of defining yet another variation of $\symWp$, we propose a program transformation $\rwTrans{f}{\stmt}$ for monotone functions $f \colon \PosRealsInf \to \PosRealsInf$.
The program $\stmt$ with expected reward $\Expected(\Rew{\stmt})$ is transformed to program $\rwTrans{f}{\stmt}$ whose expected reward is the $f$-transformed expected reward, i.e. $\Expected(\Rew{\rwTrans{f}{\stmt}}) = \Expected(f(\Rew{\stmt}))$.

\paragraph{Incremental reward updates.}
At the heart of the reward transformation is the idea of collecting $f$-transformed rewards incrementally, an idea inspired by the $\symAert$ calculus~\cite{DBLP:journals/pacmpl/BatzKKMV23}.
A $\stmtTick{\aexpr}$ statement simply adds $\aexpr$ to the cumulative reward; if the current cumulative reward is $\rt$, the new cumulative reward is $\rt + \aexpr$.
For $f$-transformed cumulative rewards, the collected reward equals $f(\rt)$ and the new cumulative reward equals $f(\rt {+} \aexpr)$.
The key idea is to (1) \emph{subtract} the previous cumulative reward $f(\rt)$ and (2) \emph{add} the new cumulative reward $f(\rt {+} \aexpr)$.

Consider the modeling of expected runtimes, where $1$ reward is collected at each step.
Then, the sequence of rewards that starts with $f(\rt)$ has the telescoping property and collapses to the final cumulative reward $f(\rt + N)$:\footnote{We define $\infty - x = \infty$ for all $x \in \PosRealsInf$ (in particular $\infty - \infty = \infty$) so that the sum is infinite if any $f(\rt + i)$ is infinite.}
\begin{align}\label{eq:telescoping}
     & \underbrace{f(\rt)}_\text{(0) previous} + \sum_{i=0}^{N-1} \underbrace{- f(\rt + i)}_\text{(1) reset} + \underbrace{f(\rt + i + 1)}_\text{(2) set}  \eeq f(\rt + N)~.
\end{align}
Since $f$ is monotone, the difference $f(\rt + \aexpr) - f(\rt)$ is non-negative, and we can encode this change in the program via a $\symTick$ statement.

\subsection{Program Transformation}\label{sec:program-transformation}

The program transformation is simple: we introduce a ghost variable $\rt$ to track the cumulative reward of the original program $\stmt$ and transform every $\stmtTick{\aexpr}$ statement to collect the difference $f(\rt + \aexpr) - f(\rt)$.
In the following, we denote by $f \colon \PosRealsInf \to \PosRealsInf$ a monotone function to ensure non-negative reward differences.

\begin{definition}[Reward Program Transformation]\label{def:reward-program-transformation}
    Let $\stmt \in \pGCL$ and $\rt$ be a fresh variable not occurring in $\stmt$.
    The \emph{$f$-transformed program} $\rwTrans{f}{\stmt}$ is
    \[
        \rwTrans{f}{\stmt} \eeq \stmtAsgn{\rt}{0}\symSemi \stmtTick{f(0)}\symSemi \rwTransP{f}{\stmt}~,
    \]
    where the auxiliary transformation $\rwTransP{f}{\stmt}$ replaces $\symTick$ statements:
    $$\rwTransP{f}{\stmtTick{\aexpr}} \eeq \stmtTick{(f(\rt + \aexpr) - f(\rt))}\symSemi \stmtAsgn{\rt}{\rt + \aexpr}~,$$
    and other statements are left unchanged.
\end{definition}
\Cref{fig:example-programs-transformed} shows the transformations of the example programs from \Cref{fig:example-programs} using $f(x) = x^2$, where the highlighted statements originate from the transformation.
For example, the $\stmtTick{1}$ statement in \Cref{fig:example-program-a} is transformed to $\stmtTick{( (\rt + 1)^2 - \rt^2 )}\symSemi \stmtAsgn{\rt}{\rt + 1}$ in \Cref{fig:example-program-a-transformed}.

\begin{figure}[t]
    \centering
    \begin{subfigure}[t]{0.45\textwidth}
        \centering
        \begin{align*}
             & \highlight{\stmtAsgn{\rt}{0}\symSemi}                                      \\
             & \highlight{\stmtReward{0}\symSemi}                                         \\
             & \stmtAsgn{done}{\exprFalse}\symSemi                            \\
             & \symWhile~(\neg done)~\blockStart                              \\
             & \quad \highlight{\stmtReward{2 \cdot \rt + 1}\symSemi}                     \\
             & \quad \highlight{\stmtAsgn{\rt}{\rt + 1}\symSemi}                          \\
             & \quad \stmtProb{\nicefrac{1}{2}}{\stmtAsgn{done}{\exprTrue}}{\stmtSkip} \\
             & \blockEnd
        \end{align*}
        \caption{Transformed version of \Cref{fig:example-program-a}.}
        \label{fig:example-program-a-transformed}
    \end{subfigure}
    \hfill
    \begin{subfigure}[t]{0.45\textwidth}
        \centering
        \begin{align*}
             & \highlight{\stmtAsgn{\rt}{0}\symSemi}                                      \\
             & \highlight{\stmtReward{\nicefrac{1}{4}}\symSemi}                           \\
             & \stmtAsgn{done}{\exprFalse}\symSemi                            \\
             & \symWhile~(\neg done)~\blockStart                              \\
             & \quad \highlight{\stmtReward{2 \cdot \tau + 1}\symSemi}                    \\
             & \quad \highlight{\stmtAsgn{\rt}{\rt + 1}\symSemi}                          \\
             & \quad \stmtProb{\nicefrac{2}{3}}{\stmtAsgn{done}{\exprTrue}}{\stmtSkip} \\
             & \blockEnd
        \end{align*}
        \caption{Transformed version of \Cref{fig:example-program-b}.}
        \label{fig:example-program-b-transformed}
    \end{subfigure}
    \caption{Transformations $\rwTrans{f}{\stmt}$ of programs $\stmt$ in \Cref{fig:example-programs} with $f(x) = x^2$.}
    \label{fig:example-programs-transformed}
\end{figure}

The expected reward of the transformed program $\rwTrans{f}{\stmt}$ is equal to the expected value of $f$ applied to the reward of the original program $\stmt$.
\begin{restatable}[Soundness]{theorem}{thmCorrectnessRewardTranslation}\label{thm:correctness-reward-translation}
    For $\stmt \in \pGCL$ and monotone $f \colon \PosRealsInf \to \PosRealsInf$: $$\Expected(\Rew{\rwTrans{f}{\stmt}}) = \Expected(f(\Rew{\stmt}))~.$$
\end{restatable}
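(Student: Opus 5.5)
I would prove the statement at the level of operational Markov chains, pointwise in the initial state $\State$. The plan is to relate $\progMdp{\rwTrans{f}{\stmt}}{\State}$ to $\progMdp{\stmt}{\State}$ by a path correspondence, and then to compare the two suprema over path lengths.

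\textbf{Step 1: path correspondence.} I would define a map from configurations $(\stmtP,\State')$ of $\progMdp{\stmt}{\State}$ to configurations $(\rwTransP{f}{\stmtP},\State'\substBy{\rt}{r})$, where $r$ is the reward accumulated so far along the path. Because $\rt$ is fresh, it never influences guards, assignments or probabilities of $\stmt$. Hence each path $\pi$ of the original chain corresponds to a unique path $\hat\pi$ of the transformed chain with the same probability. The only differences are stuttering steps: the initial $\stmtAsgn{\rt}{0}\symSemi\stmtTick{f(0)}$ prefix, and the extra assignment $\stmtAsgn{\rt}{\rt+\aexpr}$ after each rewritten $\symTick$. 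Every transformed path arises this way, up to a bounded number of extra steps per original step.

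I would state an invariant: whenever the transformed chain sits in a configuration mapped from a configuration reached by original prefix $\pi'$, the variable $\rt$ holds $\Rew{\pi'}$. A reward of $\aexpr$ in the original is matched by a reward $f(\rt+\aexpr)-f(\rt)$, followed by $\rt$ becoming $\rt+\aexpr$.

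\textbf{Step 2: telescoping.} By induction on the length of $\pi$, using \eqref{eq:telescoping}, the accumulated reward satisfies $\Rew{\hat\pi}=f(0)+\sum_i\bigl(f(r_i+a_i)-f(r_i)\bigr)=f(\Rew{\pi})$. Monotonicity of $f$ makes each difference non-negative, so the rewritten $\symTick$ statements are legal.

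The case of infinite values needs care. Once $f(r_i)=\infty$, the convention $\infty-x=\infty$ keeps the sum at $\infty$, which agrees with $f(\Rew\pi)=\infty$ by monotonicity. If $r_i=\infty$ while $f(\infty)<\infty$, the difference is $f(\infty)-f(\infty)=0$ and the value stays $f(\infty)$, so the identity still holds.

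\textbf{Step 3: comparing the suprema.} Extra steps only interleave zero-reward stutter states, and there is at most a constant factor $c$ (say $3$) of extra steps per original step. So each finite-horizon sum of the transformed chain at horizon $n$ is bounded between original-chain sums at horizons $\lfloor n/c\rfloor$ and $n$. The same holds with the roles swapped. To use this, I would group transformed paths of length $n$ by their projection. The subtle point is that the transformed path may be cut between a $\symTick$ and its $\rt$-update, but the reward at that point already equals $f$ of the updated cumulative reward, so the bound still holds.

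I expect this step to be the main obstacle. The difficulty is that the expectation is defined as $\sup_n$ of finite-horizon sums of $\Prob(\pi)\cdot f(\Rew\pi)$, and the horizons do not align exactly. The argument needs monotonicity in $n$: since $f$ is monotone and rewards are non-negative, $\sum_{\pi\in\PathsN{}{=n}}\Prob(\pi)f(\Rew\pi)$ is nondecreasing in $n$. With that, the cofinal sandwich gives equal suprema. Monotonicity in $n$ itself follows because extending a path only increases its reward, and the probabilities of all one-step extensions of a path sum to $\Prob(\pi)$ (including the absorbing $\final$ self-loop).
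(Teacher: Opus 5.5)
Your proposal is correct and follows essentially the paper's argument. Both rest on a probability-preserving path correspondence in which the ghost variable $\rt$ carries the accumulated original reward, combined with telescoping of the increments $f(\rt+\aexpr)-f(\rt)$. The paper factors this through the abstract chain transformation $\rwTrans{f}{\mathcal{M}}$, assumes $f(0)=0$ there and lifts that restriction with a separate lemma. You instead compare the two program chains directly, and your Step~3 horizon sandwich makes rigorous the stuttering-step mismatch that the paper only asserts away.
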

In terms of weakest pre-expectation semantics, we can express \Cref{thm:correctness-reward-translation} as $\Expected(\Rew{\rwTrans{f}{\stmt}}) = \wp{\rwTrans{f}{\stmt}}(0) = \Expected(f(\Rew{\stmt}))$.
In particular, the $\symWp$ semantics of the (auxiliary) transformation of a $\symTick$ statement reflects the telescoping idea from \Cref{eq:telescoping}.
For an expectation $\expa \in \Expectations$, we have:
\[
    \wp{\rwTransP{f}{\stmtTick{\aexpr}}}(\expa) \morespace{=} \lam{\State}{\underbrace{f(\State(\rt) + \aexpr(\State))}_\text{(2) set} - \underbrace{f(\State(\rt))}_\text{(1) reset} + \underbrace{\expa(\State\substBy{\rt}{\rt + \aexpr})}_\text{next}}~.
\]
The program transformation preserves healthiness conditions similar to $\symWp$.
\begin{restatable}[Monotonicity]{theorem}{theoremMonotonicTransformations}\label{thm:monotonic-transformations}
    Let $\stmt \in \pGCL$ and let $f, g \colon \PosRealsInf \to \PosRealsInf$ be monotone with $f \expleq g$.
    For all $\expa \in \Expectations$, we have $\wp{\rwTrans{f}{\stmt}}(\expa) \expleq \wp{\rwTrans{g}{\stmt}}(\expa)$.
\end{restatable}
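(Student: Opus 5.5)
The plan is to avoid a structural induction on $\stmt$ and argue on the operational Markov chains instead, using soundness of $\symWp$~\cite[Theorem 6]{DBLP:conf/birthday/BatzKM025}. A direct induction fails because $f \expleq g$ does \emph{not} imply $f(\rt+\aexpr)-f(\rt) \leq g(\rt+\aexpr)-g(\rt)$. The incremental rewards of $\rwTrans{f}{\stmt}$ are therefore not pointwise dominated by those of $\rwTrans{g}{\stmt}$. Only the \emph{cumulative} rewards are ordered, by the telescoping identity \Cref{eq:telescoping}. By soundness, $\wp{\rwTrans{f}{\stmt}}(\expa)(\State) = \Expected(\RewStateInd{\State}{\expa}{\rwTrans{f}{\stmt}})$, and likewise for $g$. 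So it suffices to compare the expected cumulative $\expa$-induced rewards of the two operational Markov chains.

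\emph{Step 1: structural correspondence.} The programs $\rwTrans{f}{\stmt}$ and $\rwTrans{g}{\stmt}$ differ only in the reward expressions inside $\symTick$ statements. Reward expressions do not influence control flow or the state: the operational rule for $\stmtTick{\aexpr}$ leaves $\State$ unchanged. Hence the map replacing $f$ by $g$ in every configuration is a bijection between reachable configurations of $\progMdp{\rwTrans{f}{\stmt}}{\State}$ and $\progMdp{\rwTrans{g}{\stmt}}{\State}$. It preserves transition probabilities and the program states, including the value of the ghost variable $\rt$. In particular, for every $n$ it induces a probability-preserving bijection $\pi \mapsto \pi'$ on $\PathsN{}{=n}$.

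\emph{Step 2: pathwise telescoping.} Fix a finite path $\pi$ of $\progMdp{\rwTrans{f}{\stmt}}{\State}$, and let $\rt_\pi$ be the value of $\rt$ in the last state of $\pi$ reached after completing the initial $\stmtAsgn{\rt}{0}\symSemi\stmtTick{f(0)}$. The value $\rt_\pi$ is the cumulative reward the original program has collected so far, and it is the same along $\pi'$. I would show by induction on the length of $\pi$ that
\[
\RewInd{\expa}{\pi} \in \{\, f(\rt_\pi),\; f(\rt_\pi + \aexpr(\State')) \,\}\ \text{(plus } \expa(\State_{\term}) \text{ if } \pi \text{ visits } (\term,\State_{\term})\text{)}.
\]
The second alternative covers paths ending between the $\symTick$ configuration and the subsequent $\stmtAsgn{\rt}{\rt+\aexpr}$, in the state $\State'$ of that configuration. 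Path prefixes before the initial reward have reward $0$. Each transformed $\symTick$ contributes $f(\rt+\aexpr)-f(\rt)$, and the following assignment updates $\rt$, so the sum collapses exactly as in \Cref{eq:telescoping}. The same expression with $g$ holds for $\pi'$, with identical $\rt_\pi$, $\State'$ and $\State_{\term}$. Since $f \expleq g$, we get $\RewInd{\expa}{\pi} \leq \RewInd{\expa}{\pi'}$.

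\emph{Step 3: conclude.} Summing $\Prob(\pi)\cdot\RewInd{\expa}{\pi}$ over paths of length $n$ and using the bijection from Step 1 gives the inequality for each $n$. Taking $\sup_n$ and applying soundness yields $\wp{\rwTrans{f}{\stmt}}(\expa) \expleq \wp{\rwTrans{g}{\stmt}}(\expa)$.

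The main obstacle is Step 2 in the presence of infinite values under the convention $\infty - x = \infty$. Once some $f(\rt)=\infty$, the telescoping sum need not literally equal $f(\rt_\pi)$. I would handle this by proving the slightly different invariant that the cumulative reward equals $f(\rt_\pi)$ when all intermediate $f$-values are finite, and equals $\infty$ otherwise. By monotonicity of $f$, the first infinite value persists, so $f(\rt_\pi) = \infty$ as well and the invariant again reads ``cumulative reward $= f(\rt_\pi)$''. Moreover $f(\rt)=\infty$ forces $g(\rt)=\infty$, so the $g$-path is infinite from the same point on, and the pathwise inequality survives.
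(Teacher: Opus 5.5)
Your proof is correct, but it follows a different route from the paper. The paper argues purely at the $\symWp$ level. It proves by structural induction on $\stmt$ the auxiliary claim $\wp{\rwTransP{f}{\stmt}}(\expa) \expleq \wp{\rwTransP{g}{\stmt}}(\expa)$ for all $\expa$, handling loops by comparing Kleene iterates. It then treats the prefix $\stmtAsgn{\rt}{0}\symSemi\stmtTick{f(0)}$ separately using $f(0)\leq g(0)$.

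In the reward case the paper asserts $f(\rt+\aexpr)-f(\rt) \leq g(\rt+\aexpr)-g(\rt)$ ``by monotonicity of $f$ and $g$''. This is exactly the step you identify as invalid. Take $f(x)=\min(x,1)$, $g\equiv 1$, $\stmt=\stmtTick{1}$, $\expa=0$ and $\rt=0$: the two sides are $1$ and $0$. So the paper's auxiliary claim is false. The theorem is nonetheless true, for the cumulative telescoping reason your Step 2 makes explicit.

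Your route costs a dependence on $\symWp$ soundness and an operational path bijection, argued at the same level of informality as the paper's own proof of the soundness theorem. In return, as a by-product, it shows that the $n$-step expected rewards are ordered for every $n$. Your concern about $\infty$ is harmless. With the convention $\infty - x = \infty$ and monotonicity, the identity ``cumulative reward $= f(\rt_\pi)$'' holds literally.

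If you want to stay inside the calculus, the induction can be repaired by shifting with the potential $f(\rt)$. Prove that $\expa + f(\rt) \expleq \expa' + g(\rt)$ implies $\wp{\rwTransP{f}{\stmt}}(\expa) + f(\rt) \expleq \wp{\rwTransP{g}{\stmt}}(\expa') + g(\rt)$.

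\begin{itemize}
\item In the reward case, the left-hand side is just $(\expa + f(\rt))\substBy{\rt}{\rt+\aexpr}$, which is your telescoping in one line.
\item The assumption $f \expleq g$ is needed only in two places: at the base $0 + f(\rt) \expleq 0 + g(\rt)$ of the Kleene iteration for loops, and at the top level with $\expa = \expa'$.
\item Substituting $\rt \mapsto 0$ then gives the theorem without any appeal to the operational semantics.
\end{itemize}
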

\noindent%
The transformation is linear with respect to scaling and shifting.
\begin{restatable}[Linearity]{theorem}{theoremLinearity}
    Let $\stmt \in \pGCL$ and let $f \colon \PosRealsInf \to \PosRealsInf$ be monotone.
    Then, for all $\alpha,\beta \in \PosReals$, we have $\wp{\rwTrans{\alpha \cdot f + \beta}{\stmt}}(0) = \alpha \cdot \wp{\rwTrans{f}{\stmt}}(0) + \beta$.
\end{restatable}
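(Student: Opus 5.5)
The cleanest route goes through \Cref{thm:correctness-reward-translation}, which turns both sides into statements about expectations of transformed cumulative rewards. First I will check that $\alpha \cdot f + \beta$ is again a monotone function $\PosRealsInf \to \PosRealsInf$, since $\alpha, \beta \geq 0$; so the transformation $\rwTrans{\alpha \cdot f + \beta}{\stmt}$ is well-defined and the theorem applies. Using the identity $\wp{\rwTrans{g}{\stmt}}(0) = \Expected(\Rew{\rwTrans{g}{\stmt}}) = \Expected(g(\Rew{\stmt}))$, stated right after \Cref{thm:correctness-reward-translation}, it then suffices to show
\[
\Expected((\alpha \cdot f + \beta)(\Rew{\stmt})) = \alpha \cdot \Expected(f(\Rew{\stmt})) + \beta
\]
pointwise in the initial state $\State$.

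For this I will unfold the definition of $\Expected(g(\Rew{\mathcal{M}}))$ as $\sup_n \sum_{\pi \in \PathsN{\mathcal{M}}{=n}} \Prob(\pi) \cdot g(\Rew{\pi})$. For each fixed $n$, the finite-or-countable sum splits by linearity of nonnegative sums (valid in $\PosRealsInf$) into $\alpha \cdot \sum_\pi \Prob(\pi) f(\Rew{\pi}) + \beta \cdot \sum_\pi \Prob(\pi)$.

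The main obstacle is the term $\sum_{\pi \in \PathsN{\mathcal{M}}{=n}} \Prob(\pi)$, which I need to be exactly $1$. This holds because the operational Markov chain has no deadlocks: terminated configurations move to $\final$, and $\final$ carries a self-loop. Every configuration therefore has a full probability distribution of successors, so the length-$n$ paths from the initial configuration carry total probability $1$; this follows by an easy induction on $n$. A potential subtlety is that diverging runs keep mass at every length, but they still contribute to the sum, so the total stays $1$ regardless of termination.

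Finally, I will exchange $\sup_n$ with the affine map $x \mapsto \alpha x + \beta$. The sequence $a_n = \sum_\pi \Prob(\pi) f(\Rew{\pi})$ is monotone in $n$, because rewards are nonnegative and $f$ is monotone, so a path's reward only grows under extension. The affine map is monotone and continuous on $\PosRealsInf$ (with the convention $0 \cdot \infty = 0$ covering $\alpha = 0$), so it commutes with suprema of increasing sequences. This gives $\sup_n(\alpha a_n + \beta) = \alpha \sup_n a_n + \beta$ and completes the proof.

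As an alternative that avoids \Cref{thm:correctness-reward-translation}, one could prove directly by structural induction that $\wp{\rwTransP{\alpha f + \beta}{\stmt}}(\alpha \expa)$ relates to $\alpha\,\wp{\rwTransP{f}{\stmt}}(\expa)$. The additive constant $\beta$ cancels in each difference $f(\rt + \aexpr) - f(\rt)$ and survives only in the initial $\stmtTick{(\alpha f(0) + \beta)}$. However, this approach requires scaling through least fixed points and care with $\infty - \infty$, so I would use the semantic route above.
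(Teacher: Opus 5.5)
Your argument is correct, but it takes a different route from the paper.

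The paper stays entirely inside the $\symWp$ calculus. It proves by structural induction on $\stmt$ the stronger identity $\wp{\rwTransP{\alpha f+\beta}{\stmt}}(\alpha\cdot\expa) = \alpha\cdot\wp{\rwTransP{f}{\stmt}}(\expa)$ for every post-expectation $\expa$. The only interesting case is $\stmtTick{\aexpr}$, where $\beta$ cancels in the increment. The paper then notes that $\beta$ survives only in the leading $\stmtTick{(\alpha f(0)+\beta)}$. This is essentially the alternative you sketch in your last paragraph.

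You instead use \Cref{thm:correctness-reward-translation} and $\symWp$-soundness to reduce the claim to linearity of expectation, $\Expected((\alpha f+\beta)(\Rew{\stmt})) = \alpha\cdot\Expected(f(\Rew{\stmt}))+\beta$. This is not circular, because that theorem is proved via the Markov-chain construction, not via linearity. Your one genuinely semantic ingredient is that the length-$n$ paths carry total probability $1$, diverging runs included. Your no-deadlock induction establishes this correctly: $\term$ steps to $\final$, which self-loops, and every configuration has a full successor distribution.

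What your route buys: it is shorter and explains \emph{why} the statement holds. It also sidesteps the while-loop case (pushing the factor $\alpha$ through the Kleene iterates), which the paper leaves as straightforward, and the $\infty-\infty$ conventions in the reward increment.

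What it costs: it rests on two soundness theorems, one external and one whose proof in the paper is a comparatively informal comparison of Markov chains. It also only yields the post-$0$ statement.

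The paper's induction is self-contained in the calculus and gives the version for arbitrary posts, $\wp{\rwTrans{\alpha f+\beta}{\stmt}}(\alpha\cdot\expa) = \alpha\cdot\wp{\rwTrans{f}{\stmt}}(\expa)+\beta$.
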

\noindent%
For linear transformation functions, we can avoid the ghost variable $\rt$ from the transformed program $\rwTrans{f}{\stmt}$ entirely.
\begin{restatable}[Ghost Buster]{theorem}{theoremGhostBuster}\label{theorem:ghost-buster}
    Let $f \colon \PosRealsInf \to \PosRealsInf$ be of the form $f(x) = \alpha \cdot x + \beta$ for some $\alpha,\beta \in \PosReals$.
    Then,
    \begin{align*}
        &\rwTrans{f}{\stmt} \eeq \stmtTick{\beta}\symSemi \rwTransP{f}{\stmt}
        \qquad\text{and}\qquad
        \rwTransP{f}{\stmtTick{\aexpr}} \eeq \stmtTick{\alpha \cdot \aexpr}~.
    \end{align*}
\end{restatable}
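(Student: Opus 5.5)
The statement is an equality of programs up to the ghost variable $\rt$, which I read as equality of $\symWp$ semantics for every post-expectation that does not mention $\rt$, and hence (by the soundness of $\symWp$) equality of expected rewards. The plan is to compute directly from \Cref{def:reward-program-transformation} and then remove the dead variable $\rt$.

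\textbf{Step 1: compute the transformed statements.} For $f(x) = \alpha \cdot x + \beta$ and any state $\State$ with finite $\State(\rt)$ and $\aexpr(\State)$, the reward difference is
\[
f(\State(\rt) + \aexpr(\State)) - f(\State(\rt)) = \alpha \cdot \aexpr(\State).
\]
So $\rwTransP{f}{\stmtTick{\aexpr}}$ becomes $\stmtTick{\alpha\cdot\aexpr}\symSemi\stmtAsgn{\rt}{\rt+\aexpr}$. Likewise, $f(0) = \beta$, so $\rwTrans{f}{\stmt}$ starts with $\stmtAsgn{\rt}{0}\symSemi\stmtTick{\beta}$. The infinite case is handled by the footnote convention $\infty - x = \infty$. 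If $\State(\rt) = \infty$, both readings give infinite reward, provided $\alpha > 0$; if $\alpha = 0$, then $f$ is constant and both differences vanish. If $\aexpr(\State) = \infty$ with $\State(\rt)$ finite, both sides are $\infty$ (or $0$ when $\alpha = 0$).

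\textbf{Step 2: eliminate $\rt$.} After Step 1, $\rt$ is written but never read: it occurs only in the assignments $\stmtAsgn{\rt}{0}$ and $\stmtAsgn{\rt}{\rt+\aexpr}$, and it is fresh for $\stmt$ and for the post-expectation. I would prove the following by structural induction on programs $\stmtP$ in which $\rt$ does not occur in guards, reward expressions, or right-hand sides other than its own updates. Let $\stmtP^{-}$ be $\stmtP$ with every assignment to $\rt$ deleted. Then, for every $\expa$ independent of $\rt$, we have $\wp{\stmtP}(\expa) = \wp{\stmtP^{-}}(\expa)$, and both sides are independent of $\rt$.

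The cases for skip, assignment, reward, sequential composition, probabilistic choice and conditional are immediate. The assignment case relies on substitution preserving $\rt$-independence. The loop case is the main obstacle. There I would argue via Kleene iteration from $0$. The loop-characteristic functionals of $\stmtWhile{\bexpr}{\stmtP}$ and $\stmtWhile{\bexpr}{\stmtP^{-}}$ agree on $\rt$-independent expectations, by the induction hypothesis applied to the body. Both functionals map $\rt$-independent expectations to $\rt$-independent ones. Hence all iterates coincide and stay $\rt$-independent, and so do their suprema, which are the least fixed points.

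Deleting the leading $\stmtAsgn{\rt}{0}$ is the same instance. Combining Steps 1 and 2 yields exactly $\stmtTick{\beta}\symSemi\rwTransP{f}{\stmt}$ with $\rwTransP{f}{\stmtTick{\aexpr}} = \stmtTick{\alpha\cdot\aexpr}$, as claimed.
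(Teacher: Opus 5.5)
Your proposal follows the paper's proof. Both compute $f(\rt+\aexpr)-f(\rt)=\alpha\cdot\aexpr$ and $f(0)=\beta$ and then discard the ghost variable; your Step~2 induction, including the Kleene argument for loops, proves what the paper only asserts, namely that $\rt$ does not occur in $\wp{\rwTransP{f}{\stmt}}(\expa)$. One correction: when $\State(\rt)=\infty$, $\alpha>0$ and $\aexpr(\State)$ is finite, the convention gives reward $\infty\neq\alpha\cdot\aexpr(\State)$, so your claim that ``both readings give infinite reward'' is false and the transformed reward does read $\rt$ there. This is harmless, and the paper's ``cancellation of $\beta$'' step ignores it too: $\rt$ starts at $0$ and becomes $\infty$ only immediately after an infinite reward has been collected, so for $\alpha>0$ Step~2 should claim agreement only on states with finite $\rt$, for post-expectations that agree on those states.
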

\noindent%
The transformation is well-behaved with respect to composition of transformation functions.
In essence: $\rwTransP{g}{\rwTransP{f}{\stmt}} = \rwTransP{g \circ f}{\stmt}$, where the equality is to be understood modulo the variables introduced by the transformations.

\begin{restatable}[Composition of Transformations]{theorem}{theoremCompositionTransformations}\label{theorem:composition-transformations}
    Let $\stmt \in \pGCL$ and let $f, g \colon \PosRealsInf \to \PosRealsInf$ be monotone.
    Assume that $\rwTransP{f}{\cdot}$ and $\rwTransP{g \circ f}{\cdot}$ use the same variable $\rt$ and $\rwTransP{g}{\cdot}$ uses a different variable $\rt'$.
    Then,
    \[
        \wp{\stmtAsgn{\rt'}{f(\rt)}\symSemi \rwTransP{g}{\rwTransP{f}{\stmt}}} = \wp{\rwTransP{g \circ f}{\stmt}\symSemi \stmtAsgn{\rt'}{f(\rt)}}~.
    \]
\end{restatable}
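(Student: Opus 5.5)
We prove the equality by structural induction on $\stmt$, showing it for every post-expectation $\expa$. First we make the two transformations explicit. $\rwTransP{g}{\rwTransP{f}{\stmt}}$ leaves every non-$\symTick$ statement unchanged. Each $\stmtTick{\aexpr}$ becomes
\[
\stmtTick{(g(\rt'+d)-g(\rt'))}\symSemi \stmtAsgn{\rt'}{\rt'+d}\symSemi \stmtAsgn{\rt}{\rt+\aexpr},
\qquad d = f(\rt+\aexpr)-f(\rt).
\]
Under $\rwTransP{g\circ f}{\cdot}$ the same statement becomes $\stmtTick{(g(f(\rt+\aexpr))-g(f(\rt)))}\symSemi \stmtAsgn{\rt}{\rt+\aexpr}$. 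The guiding invariant is that $\rt' = f(\rt)$ holds throughout execution of the left-hand program. The claim should therefore be read as a commutation of $\wp{\cdot}$ with the substitution $\substBy{\rt'}{f(\rt)}$. Writing $\expa\substBy{\rt'}{f(\rt)}$ for that substitution, the statement to prove is
\[
\big(\wp{\rwTransP{g}{\rwTransP{f}{\stmt}}}(\expa)\big)\substBy{\rt'}{f(\rt)} \eeq \wp{\rwTransP{g\circ f}{\stmt}}\big(\expa\substBy{\rt'}{f(\rt)}\big).
\]
This follows from the assignment rule, since $\wp{\stmtAsgn{\rt'}{f(\rt)}\symSemi \stmtP}(\expa) = \wp{\stmtP}(\expa)\substBy{\rt'}{f(\rt)}$ and $\wp{\stmtP\symSemi\stmtAsgn{\rt'}{f(\rt)}}(\expa)=\wp{\stmtP}(\expa\substBy{\rt'}{f(\rt)})$. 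We use that $\rt,\rt'$ do not occur in $\stmt$.

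\emph{Base cases.} For $\stmtSkip$ the claim is trivial. For $\stmtAsgn{x}{\aexpr}$ the two substitutions commute because $x\notin\{\rt,\rt'\}$ and $\aexpr$ mentions neither ghost variable. The $\symTick$ case is the core computation. After applying the three-statement sequence and then substituting $\rt'\mapsto f(\rt)$, the reward becomes
\[
g\big(f(\rt)+f(\rt+\aexpr)-f(\rt)\big)-g(f(\rt)) = g(f(\rt+\aexpr))-g(f(\rt)).
\]
The continuation becomes $\expa$ evaluated at $\rt'=f(\rt+\aexpr)$ and $\rt$ replaced by $\rt+\aexpr$. This matches $\expa\substBy{\rt'}{f(\rt)}\substBy{\rt}{\rt+\aexpr}$ on the right-hand side.

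The main obstacle is the $\infty$ arithmetic in the $\symTick$ case. We need $f(\rt)+(f(\rt+\aexpr)-f(\rt)) = f(\rt+\aexpr)$, which fails under the paper's convention when $f(\rt)=\infty$, because then the left side is $\infty+\infty=\infty$. Monotonicity rescues the case $f(\rt)<\infty$. For $f(\rt)=\infty$ we argue case by case: by monotonicity $f(\rt+\aexpr)=\infty$, so both sides have $\rt'=\infty$ afterwards. Both rewards equal $g(\infty)-g(\infty)=\infty$ under the convention $\infty-\infty=\infty$, so the $\wp$ values are $\infty$ on both sides. A similar check is needed for $\rt'$-values that cannot occur in a consistent execution, but the substitution statement only ever evaluates states with $\rt'=f(\rt)$.

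\emph{Inductive cases.} For sequential composition, apply the induction hypothesis twice, to $\stmtTwo$ with post $\expa$ and to $\stmtOne$ with post $\wp{\rwTransP{g}{\rwTransP{f}{\stmtTwo}}}(\expa)$. Probabilistic choice and conditionals follow because substitution distributes over $+$, scalar multiplication and $\iverson{\bexpr}$, and $\bexpr$ is independent of the ghost variables. For loops, show that substitution $Y\mapsto Y\substBy{\rt'}{f(\rt)}$ maps the Kleene iterates of the left characteristic functional onto those of the right one. Induct on $n$, using the body's induction hypothesis in the form above for arbitrary posts. Substitution preserves suprema of chains, since it acts pointwise by precomposition with a state map, so the least fixed points correspond as well.
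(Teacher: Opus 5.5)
Your proposal is correct and follows the paper's proof: it uses the same reduction (via the assignment rule) to commuting $\symWp$ with $\substBy{\rt'}{f(\rt)}$ for arbitrary posts, and the same structural induction with the $\symTick$ case as the core computation; your Kleene-iterate argument for loops spells out a case the paper only calls straightforward. One minor remark: the identity $f(\rt)+(f(\rt+\aexpr)-f(\rt))=f(\rt+\aexpr)$ does not actually fail when $f(\rt)=\infty$, because monotonicity forces $f(\rt+\aexpr)=\infty$ and both sides are then $\infty$, so that case split is unnecessary.
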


\subsection{Transforming Rewards of Markov Chains}\label{sec:mc-transformation}

To prove soundness of the program transformation (\Cref{thm:correctness-reward-translation}), we define a reward transformation for Markov chains directly.
The state space is augmented with a dimension $\PosRealsInf$ to track the accumulated reward of the original Markov chain, which is zero in the initial state $\State_1'$, and the reward function is adjusted to collect the incremental differences.
Here, we restrict to monotone $f$ with $f(0) = 0$ for simplicity.

\begin{restatable}[MC Transformation]{definition}{defiMcTransformation}\label{thm:defi-mc-transformation}
    Let MC $\mathcal{M} = (\States, \Prob, \State_1, \symRew)$ and monotone $f \colon \PosRealsInf \to \PosRealsInf$ with $f(0) = 0$.\footnote{The restriction $f(0) = 0$ in $\rwTrans{f}{\mathcal{M}}$ can be easily lifted by using the function $g(x) = f(x) - f(0)$ for $\rwTrans{g}{\mathcal{M}}$, and prepending a new initial state $\State_0'$ with reward $f(0)$. The full construction is presented in \appRefOrExt{sec:proofs-reward-transformations}.} \\
    We define $\rwTrans{f}{\mathcal{M}} = (\States \times \PosRealsInf,~ \ProbP,~ \State_1',~ \symRewP )$ where
    \begin{itemize}
        \item $\State_1' = (\State_1, 0)$,
        \item $\ProbP((\State, \alpha), (\State', \alpha')) = \begin{cases}
                      \Prob(\State)(\State') & \text{if } \alpha + \Rew{\State} = \alpha', \\
                      0                      & \text{otherwise}
                  \end{cases}$
        \item $\symRewP((\State, \alpha)) = f(\alpha + \Rew{\State}) - f(\alpha)$.
    \end{itemize}
\end{restatable}
\noindent%
Note that the transformation does not preserve the size of the reachable state space, because the added dimension $\PosRealsInf$ tracks the reward accumulation.

\begin{example}[Reachable States Explosion]\label{ex:mc-transformation-infinite}
    Let MC $\mathcal{M} = (\States, \Prob, \State_1, \symRew)$ with $\States = \Set{\State_1}$, $\Prob(\State_1)(\State_1) = 1$, $\symRew(\State_1) = 1$.
    Let $f(x) = x$.
    The transformed MC $\rwTrans{f}{\mathcal{M}}$ has an infinite path $(\State_1, 0), (\State_1, 1), (\State_1, 2), \ldots$ with probability one.
    The set of reachable states is $\Set{ (\State_1, \alpha) \mid \alpha \in \Nats }$, which is infinite.
\end{example}
\noindent%
The paths of the transformed Markov chain $\rwTrans{f}{\mathcal{M}}$ correspond one-to-one to the paths of the original $\mathcal{M}$, from which the soundness theorem follows.
\begin{restatable}{lemma}{lemmaMcTransformationSound}\label{lemma:mc-transformation-sound}
    Let MC $\mathcal{M} = (\States, \Prob, \State_1, \symRew)$ and $f \colon \PosRealsInf \to \PosRealsInf$ be monotone with $f(0) = 0$.
    Let $n \in \Nats$.
    For every $\State_1 \ldots \State_n \in \PathsN{\mathcal{M}}{=n}$, there is a one-to-one mapping to $(\State_1,\alpha_1)\ldots(\State_n,\alpha_n) \in \PathsN{\rwTrans{f}{\mathcal{M}}}{=n}$ in $\rwTrans{f}{\mathcal{M}}$ such that:
    \begin{enumerate}
        \item same path probabilities: $\Prob(\State_1 \ldots \State_n) = \ProbP((\State_1,\alpha_1) \ldots (\State_n,\alpha_n))$,
        \item $\alpha_i$ track previous untransformed rewards: $\forall 1 \leq i \leq n.\, \sum_{j < i} \Rew{\State_j} = \alpha_i$,
        \item transformed rewards: $f(\Rew{\State_1 \ldots \State_n}) = \RewP{(\State_1,\alpha_1) \ldots (\State_n,\alpha_n)}$.
    \end{enumerate}
\end{restatable}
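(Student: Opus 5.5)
The mapping is the obvious one: given a path $\State_1 \ldots \State_n$ of $\mathcal{M}$, define $\alpha_i = \sum_{j<i} \Rew{\State_j}$. In particular $\alpha_1 = 0$, so $(\State_1,\alpha_1) = \State_1'$ is the initial state of $\rwTrans{f}{\mathcal{M}}$, and $\alpha_{i+1} = \alpha_i + \Rew{\State_i}$. The plan has three steps: check that this defines a path of $\rwTrans{f}{\mathcal{M}}$, show that the map is a bijection, and then verify items 1--3.

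\emph{Well-definedness and bijectivity.} By \Cref{thm:defi-mc-transformation}, $\ProbP((\State_i,\alpha_i),(\State_{i+1},\alpha_{i+1})) = \Prob(\State_i)(\State_{i+1}) > 0$, because the side condition $\alpha_i + \Rew{\State_i} = \alpha_{i+1}$ holds by construction. Hence the image is a path in $\PathsN{\rwTrans{f}{\mathcal{M}}}{=n}$, and it also yields item 2 directly.

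Injectivity is trivial, since projecting to the first components recovers the original path. For surjectivity, I take an arbitrary path $(\State_1,\alpha_1)\ldots(\State_n,\alpha_n)$ of $\rwTrans{f}{\mathcal{M}}$ that starts in $\State_1' = (\State_1,0)$; paths are implicitly anchored at the initial state, as in the definition of $\Expected$. Positivity of $\ProbP$ forces two things for each $i<n$: first $\Prob(\State_i)(\State_{i+1}) > 0$, and second $\alpha_{i+1} = \alpha_i + \Rew{\State_i}$. By induction on $i$, every $\alpha_i$ is uniquely determined and equals $\sum_{j<i}\Rew{\State_j}$. So the projection is a path of $\mathcal{M}$, and the given path is its image.

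\emph{Items 1 and 3.} Item 1 follows from the product definition of path probabilities together with the factor-wise equality established above.

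Item 3 is the telescoping argument of \Cref{eq:telescoping}. We have
\[
\RewP{(\State_1,\alpha_1)\ldots(\State_n,\alpha_n)} = \sum_{i=1}^{n} \bigl(f(\alpha_i + \Rew{\State_i}) - f(\alpha_i)\bigr) = \sum_{i=1}^{n}\bigl(f(\alpha_{i+1}) - f(\alpha_i)\bigr),
\]
where $\alpha_{n+1} := \alpha_n + \Rew{\State_n} = \Rew{\State_1\ldots\State_n}$. If every value $f(\alpha_i)$ is finite, the sum collapses to $f(\alpha_{n+1}) - f(\alpha_1) = f(\Rew{\State_1\ldots\State_n}) - f(0)$, which equals $f(\Rew{\State_1\ldots\State_n})$ because $f(0)=0$. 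Each summand is nonnegative because $f$ is monotone.

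\emph{Main obstacle.} The difficulty lies in the extended reals, where both $\infty$ values and the convention $\infty - x = \infty$ can break the telescoping. I would handle this by a case split on the first index $k$ such that $f(\alpha_{k+1}) = \infty$.

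If no such $k$ exists, all terms are finite and the computation above applies.

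If $k$ exists, then the $k$-th summand equals $\infty - f(\alpha_k) = \infty$, so the whole sum is $\infty$. On the other side, monotonicity of $f$ and $\alpha_{k+1} \le \alpha_{n+1}$ give $f(\alpha_{n+1}) = \infty$ as well. The sums $\alpha_i$ may themselves be $\infty$; this is covered by the same monotonicity argument, since $\alpha$ is nondecreasing along the path.

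Both sides therefore agree in every case, which completes the proof.
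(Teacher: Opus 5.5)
Your proposal is correct and follows essentially the same route as the paper: the same map with $\alpha_i = \sum_{j<i}\Rew{\State_j}$, the same checks of properties (1)--(3) and of bijectivity. The only structural difference is that you collapse the telescoping sum directly, while the paper inducts on $n$ with a one-step telescoping; your explicit anchoring of paths at $(\State_1,0)$ for surjectivity and your case split on the first index where $f$ reaches $\infty$ make precise two points that the paper's proof leaves implicit.
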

\noindent%
It follows that the expected $f$-transformed reward of the original Markov chain $\mathcal{M}$ equals the expected reward of the transformed Markov chain $\rwTrans{f}{\mathcal{M}}$.

\begin{restatable}[Soundness]{theorem}{thmMcTransformationSound}\label{thm:mc-transformation-sound}
    For each MC $\mathcal{M}$ and monotone $f$: $$\Expected(f(\Rew{\mathcal{M}})) = \Expected(\RewP{\rwTrans{f}{\mathcal{M}}})~.$$
\end{restatable}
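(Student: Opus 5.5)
The plan is to reduce the statement to \Cref{lemma:mc-transformation-sound} by unfolding both sides of the equation as suprema over path lengths, and comparing them termwise. I first treat the case $f(0) = 0$, where $\rwTrans{f}{\mathcal{M}}$ is given by \Cref{thm:defi-mc-transformation}. Afterwards I lift the restriction via the construction in the footnote, using $g(x) = f(x) - f(0)$ and a fresh initial state with reward $f(0)$.

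\textbf{Step 1: termwise equality for fixed $n$.} By definition,
\[
  \Expected(f(\Rew{\mathcal{M}})) = \sup_{n} \sum_{\pi \in \PathsN{\mathcal{M}}{=n}} \Prob(\pi)\cdot f(\Rew{\pi})
\]
and
\[
  \Expected(\RewP{\rwTrans{f}{\mathcal{M}}}) = \sup_{n} \sum_{\pi' \in \PathsN{\rwTrans{f}{\mathcal{M}}}{=n}} \ProbP(\pi')\cdot \RewP{\pi'}.
\]
Fix $n$. \Cref{lemma:mc-transformation-sound} provides a bijection $\pi \mapsto \pi'$ between length-$n$ paths of $\mathcal{M}$ and of $\rwTrans{f}{\mathcal{M}}$. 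Under this bijection, item~1 gives $\Prob(\pi) = \ProbP(\pi')$ and item~3 gives $f(\Rew{\pi}) = \RewP{\pi'}$. Reindexing the (countable, non-negative) sum along the bijection therefore makes the two $n$-th summations equal. Since $\PosRealsInf$-valued series of non-negative terms are invariant under reordering, no convergence issue arises. Taking the supremum over $n$ then yields the claim.

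\textbf{Step 2: surjectivity of the bijection.} The bijection needs a short justification that every path of $\rwTrans{f}{\mathcal{M}}$ starting in $\State_1'$ arises this way. The second component is forced: $\alpha_1 = 0$ and $\alpha_{i+1} = \alpha_i + \Rew{\State_i}$, since $\ProbP$ is zero otherwise. Hence projecting to first components is injective, and the result is a path of $\mathcal{M}$. I will take this from the lemma, adding this remark only if needed.

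\textbf{Step 3: telescoping with infinities.} Item~3 is where the arithmetic subtlety lies. Along a path, $\RewP{\pi'} = \sum_i \bigl(f(\alpha_i + \Rew{\State_i}) - f(\alpha_i)\bigr)$ telescopes to $f(\alpha_{n+1}) - f(0)$, with the convention $\infty - x = \infty$. If some partial value becomes $\infty$, all subsequent $f$-values are $\infty$ by monotonicity, and the sum is $\infty$ as well, so the identity still holds. I rely on the lemma for this.

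\textbf{Step 4: general $f$ (the main obstacle).} For $f(0) \neq 0$, the augmented chain has a prepended state $\State_0'$ with reward $f(0)$ leading to $(\State_1, 0)$ in $\rwTrans{g}{\mathcal{M}}$. Its length-$(n+1)$ paths carry reward $f(0) + g(\Rew{\pi}) = f(\Rew{\pi})$ when $f(0) < \infty$. If $f(0) = \infty$, then $f \equiv \infty$ and both sides equal $\infty$. Summing with total probability mass of length-$n$ paths equal to $1$ (paths are exactly the positive-probability sequences), the index shift $n \mapsto n+1$ does not affect the supremum. This bookkeeping is the step I expect to need the most care, particularly the treatment of $\infty - \infty$ in $g$, which requires the separate case for $f(0) = \infty$ above.
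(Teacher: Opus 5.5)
Your proposal is correct and follows the paper's proof. You use the path bijection of \Cref{lemma:mc-transformation-sound} (items 1 and 3) to equate the length-$n$ sums termwise and then take the supremum, and you lift $f(0)=0$ via $g(x)=f(x)-f(0)$ plus a prepended state with reward $f(0)$, as in \Cref{lem:mc-transformation-sound-lifted}. Your separate treatment of $f(0)=\infty$ (where the paper's claim $g(0)=0$ fails under the convention $\infty-\infty=\infty$), of the truncated subtraction in the telescoping, and of the index shift $n \mapsto n+1$ fills in details the paper passes over.
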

\noindent%
Note that one can lift the monotonicity requirement of function $f$.
For the soundness of the transformation, it suffices that $f(\rt + \aexpr) - f(\rt) \geq 0$ holds for all actually occurring values $\rt$ and $\aexpr$ in the Markov chain.

%% file: sections/5_transformation_functions.tex
\section{Reward Transformation Functions}\label{sec:reward-transformation-functions}

We apply the program transformation with different transformation functions $f$.

\subsection{Higher Moments of Rewards}\label{sec:higher-moments-of-rewards}

The \emph{$k$-th moment} of random variable $X$ is the expected value $\Expected(X^k)$.
Choosing the transformation function $f(x) = x^k$ for some $k \in \Nats_{> 0}$ yields higher moments of cumulative rewards.
By \Cref{thm:correctness-reward-translation}, we have $\Expected(\Rew{\stmt}^k) = \wp{\rwTrans{f}{\stmt}}(0)$.

\begin{example}[Variance of Web Server Runtime]\label{ex:variance-webserver}
    We return to \Cref{fig:example-programs-transformed}, which shows the transformations of the web server examples from \Cref{fig:example-programs} using $f(x) = x^2$, to obtain the second moment of the expected runtime.
    To calculate the expected reward, i.e. $\wp{\stmt}(0)$ of \Cref{fig:example-program-a-transformed}, we first tackle the loop.
    The $\symWp$ of the loop w.r.t. post $0$ is given by
    \[
        \lfp{\expa}{\iverson{\neg done} \cdot ((2 \cdot \rt + 1) + 0.5 \cdot \expa\substBy{done}{\exprTrue} + 0.5 \cdot \expa) + \iverson{done} \cdot 0}.
    \]
    It evaluates to $\iverson{\neg done} \cdot (4 \cdot \tau + 6)$.
    Applying $\symWp$ to the remaining first three lines yields $\wp{\rwTrans{f}{\stmt}} = 6$.
    Thus, the second moment of the expected reward of \Cref{fig:example-program-a} is $6$.
    Since the first moment is $2$ (c.f. \Cref{sec:wp-semantics}), the variance is $\mathrm{Var}(\Rew{\stmt}) = \Expected(\Rew{\stmt}^2) - (\Expected(\Rew{\stmt}))^2 = 6 - 2^2 = 2$.

    Similarly, one can show that $\wp{\stmt_\text{loop}}(0)$ of the loop in \Cref{fig:example-program-b-transformed} is equal to $\iverson{\neg done} \cdot (3 \cdot (\rt - 0.5) + 4.5)$.
    Applying $\symWp$ to the first three lines, we obtain $\wp{\rwTrans{f}{\stmt}}(0) = 4.75$, the second moment of the expected reward of \Cref{fig:example-program-b}.
    The variance of the expected reward is $\mathrm{Var}(\Rew{\stmt}) = 4.75 - 2^2 = 0.75$.
\end{example}

\begin{example}[Biased Random Walk]\label{ex:biased-random-walk-second-moment}
    Consider a random walk starting at position $x > 1$ that moves two steps to the left with probability $0.75$ and two steps to the right with probability $0.25$ until it reaches position $x \leq 1$.
    The second moment is upper-bounded by $x^2 + 3 \cdot x$, which we verified using the program transformation and $\symWp$ reasoning.
\end{example}

\subsection{Cumulative Distribution Functions}\label{sec:cdf}

We can use $f(x) = \iverson{x \geq N}$ for some $N \in \Nats$ to compute the cumulative distribution function (CDF) of the expected rewards.

\begin{example}[CDF of Web Server Runtime]\label{ex:cdf-webserver}
    To obtain the CDF for \Cref{fig:example-program-a}, we transform it with $f(x) = \iverson{x \geq N}$:
    \begin{align*}
         & \highlight{\stmtAsgn{\rt}{0}\symSemi \stmtTick{\iverson{0 \geq N}}\symSemi}~ \stmtAsgn{done}{\exprFalse} \symSemi                                                                                           \\
         & \headerWhile{\neg done}~\blockStart                                                                                                                      \\
         & \quad \highlight{\stmtTick{[\tau + 1 \geq N] - [\tau \geq N]}\symSemi \stmtAsgn{\rt}{\rt + 1}\symSemi}~ \stmtProb{\nicefrac{1}{2}}{\stmtAsgn{done}{\exprTrue}}{\stmtSkip} ~ \blockEnd
    \end{align*}
    Note that one may simplify the $\symTick$ statement to $\stmtTick{[\tau + 1 = N]}$.
    For the CDF, we get $\wp{\rwTrans{f}{\stmt}}(0) = 0.5^{N \monus 1}$, where $a \monus b = \max(0, a-b)$.
\end{example}

\subsection{Expected Excess}\label{sec:expected-excess}

Consider a fixed budget $N \in \PosReals$ and the objective is to determine how much the cumulative reward exceeds this budget in expectation.
The following example shows that we can use our program transformation to compute the \emph{expected excess} of a loop's runtime, i.e. $\Expected((\Rew{\stmt} \monus N))$ for all $N \in \PosReals$.

\begin{example}[Loop Splitting for Expected Excess]\label{ex:loop-splitting-excess}
    We compute the expected excess of the web server program from \Cref{fig:example-program-a}, depending on a success probability $p \in [0,1]$ of a database call and a budget $N \in \Nats$.
    Applying the transformation with $f(x) = x \monus N$ yields, after simplification\footnote{We use that $(((\rt + 1) \monus N) - (\rt \monus N))$ is equal to $[\tau \geq N]$.}, the following:
    \begin{align*}
         & \highlight{\stmtAsgn{\rt}{0}\symSemi}~ \stmtAsgn{done}{\exprFalse} \symSemi                                                                                           \\
         & \stmtWhile{\neg done}{
            \highlight{\stmtTick{[\tau \geq N]} \symSemi \stmtAsgn{\rt}{\rt + 1} \symSemi}~
            \stmtProb{p}{\stmtAsgn{done}{\exprTrue}}{\stmtSkip}
        }
        \intertext{
            Now, it is not difficult to see that the loop runs in two phases.
            The first phase runs until the runtime is below $N$, collecting no reward, and the second phase collects the excess of the runtime.
            Splitting the loop in two loops yields:
        }
         & \stmtAsgn{\rt}{0}\symSemi \stmtAsgn{done}{\exprFalse} \symSemi                                                                                           \\
         & \stmtWhile{\neg done \land (\rt < N)}{
            \stmtAsgn{\rt}{\rt + 1} \symSemi
            \stmtProb{p}{\stmtAsgn{done}{\exprTrue}}{\stmtSkip}
        }\symSemi                                                                                                                                                   \\
         & \stmtWhile{\neg done}{
            \stmtTick{1} \symSemi
            \stmtAsgn{\rt}{\rt + 1} \symSemi
            \stmtProb{p}{\stmtAsgn{done}{\exprTrue}}{\stmtSkip}
        }
    \end{align*}
    The second loop has an expected reward of $\nicefrac{1}{p}$ if $done = \exprFalse$, otherwise $0$.
    Formally: $\wp{loop2}(0) = \iverson{\neg done} \cdot \nicefrac{1}{p}$.
    The probability that the first loop terminates with $done = \exprFalse$ is $(1-p)^N$.
    Formally: $\wp{\stmt}(0) = \wp{loop1}(\wp{loop2}(0)) = \nicefrac{(1-p)^N}{p}$, which is the expected excess of the runtime over threshold $N$.
    For example, if $p = 0.1$ and $N = 10$, the expected excess is $\nicefrac{(1-0.1)^{10}}{0.1} \approx 3.486$.
\end{example}

\subsection{Moment-Generating Functions}\label{sec:mgf}

A \emph{moment-generating function} of a random variable $X$ is given by $\Expected(e^{t \cdot X})$ for $t \in \mathbb{R}_{\geq 0}$~\cite{Fristedt1996-cq}.
They can be used to obtain \emph{all} moments of $X$ by differentiating the MGF at $t = 0$.
With $f(x) = e^{t \cdot x}$, our reward transformation yields MGFs.\footnote{MGFs are similar to \emph{probability-generating functions} (PGFs), recently used for inference in probabilistic programs~\cite{DBLP:conf/cav/ChenKKW22,DBLP:conf/nips/ZaiserMO23,10.1145/3747534}. However, $f(x) = X^x$ for PGFs is not monotone for $X < 1$ and our transformation is not applicable.}

\begin{example}[MGF of a Randomized Assignment]\label{ex:mgf-coin-flip}
    Consider the program $\stmt = \stmtProb{p}{\stmtAsgn{x}{1}}{\stmtAsgn{x}{0}}\symSemi \stmtTick{x}$.
    Let $f(x) = e^{t \cdot x}$.
    One can show $\wp{\rwTrans{f}{\stmt}}(0) = (p \cdot (e^{t \cdot (\rt + 1)} - e^{t \cdot \rt}) + e^{t \cdot \rt})\substBy{\rt}{0} = p \cdot e^t + (1-p)$.
    The $n$-th derivative at $t = 0$ yields the $n$-th moment: $\Expected(X^n) = p$ for all $n \in \Nats_{>0}$.
\end{example}

\subsection{Multiple Rewards}\label{sec:multiple-rewards}

Sometimes, it is useful to track multiple rewards simultaneously and aggregate them into a single reward.
In some cloud computing platforms, for example, pricing is based on the product of the \emph{runtime} and \emph{memory consumption}~\cite{aws-lambda-pricing}.

Our transformation can be generalized to a multi-reward setting by introducing multiple reward variables $\rt_i$.
The generalized reward statements assign multiple rewards at once, e.g. $\stmtTick{(\aexpr_1, \aexpr_2)}$ assigns reward $\aexpr_1$ to reward $1$ and reward $\aexpr_2$ to reward $2$.
The transformation $\rwTransP{f}{\stmt}$ is generalized to $f \colon (\PosRealsInf)^n \to \PosRealsInf$ that maps $n$ rewards to a single reward:
\begin{align*}
    \rwTransP{f}{\stmtTick{(\aexpr_1,\ldots,\aexpr_n)}} =~& \stmtTick{f(\rt_1+\aexpr_1,\ldots,\rt_n+\aexpr_n)-f(\rt_1,\ldots,\rt_n)}\symSemi \\
    &\stmtAsgn{\rt_1}{\rt_1+\aexpr_1} \symSemi \ldots\symSemi \stmtAsgn{\rt_n}{\rt_n+\aexpr_n}
\end{align*}

\begin{example}[Expected Cost with Runtime and Memory Consumption]\label{ex:multiple-rewards-cost}
    Consider a program starting with two units of memory and computation each.
    With probability $p$ it needs more computation, for this, it requests an additional unit of memory with probability $q$.
    We can write this as a program, apply our transformation with $f(\rt_1, \rt_2) = \rt_1 \cdot \rt_2$ (modeling cost as product of runtime and memory), and compute the expected cost.
    The result is $\wp{\rwTrans{f}{\stmt}}(0) = 4 + (p \cdot (2 + (q \cdot 3)))$.
\end{example}

%% file: sections/6_automation.tex
\section{Automation}\label{sec:automation}

The off-the-shelf \emph{Caesar} deductive verifier for probabilistic programs~\cite{DBLP:journals/pacmpl/SchroerBKKM23} was used, which supports $\symTick$ statements and implements various proof rules for reasoning about lower and upper bounds of expected rewards.
One example is the \emph{induction proof rule}~\cite{parkFixpointInductionProofs1969,DBLP:phd/dnb/Kaminski19}, to verify \emph{upper bounds} $\colheylo{I} \in \expa$ on $\wp{\stmt}(\expa)$ for $\stmt = \stmtWhile{\bexpr}{\stmt}$ and $\expa \in \Expectations$, where $\colheylo{I}$ is a user-provided loop invariant:
\[
    \iverson{\bexpr} \cdot \wp{\stmt}(\colheylo{I}) + \iverson{\neg \bexpr} \cdot \expa \sqsubseteq \colheylo{I} \quad\text{implies}\quad \wp{\stmtWhile{\bexpr}{\stmt}}(\expa) \sqsubseteq \colheylo{I}~.
\]
We provide Caesar input files for examples in \Cref{sec:reward-transformation-functions} in \appRefOrExt{sec:case-studies}, verifying upper bounds on expected rewards using the induction proof rule.
In addition, Caesar supports analysis of (finite) operational semantics with integer variables via the probabilistic model checker \emph{Storm}~\cite{DBLP:journals/sttt/HenselJKQV22}.
For infinite-state operational semantics (c.f. \Cref{fig:webserver-markov-chain}), Storm can under-approximate the expected value.

%% file: sections/7_related_work.tex
\section{Related Work}\label{sec:related-work}

\paragraph{Expected Run-Time Analysis.}
In \Cref{sec:run-times-and-resource-consumption}, we discussed the lack of support for e.g. higher moments of the $\symErt$ calculus~\cite{DBLP:conf/esop/KaminskiKMO16,DBLP:journals/jacm/KaminskiKMO18}.
The same problems apply to the \emph{amortized expected runtime calculus}~\cite{DBLP:journals/pacmpl/BatzKKMV23}, which extends the $\symErt$ calculus to reason about potential functions, see \appRefOrExt{sec:more-related-work} for details.
Automated inference of $\symErt$-style run-times has been studied e.g. in~\cite{DBLP:conf/pldi/NgoC018}.
We assume non-negative rewards; analyses for mixed-sign rewards include \cite{wang2019cost,chatterjee2024}.
As a complimentary approach, \cite{asymptotic-analysis} propose asymptotic bounds for run-time analysis instead of expected values.

\paragraph{Higher Moments.}
\Citeauthor*{DBLP:conf/qest/KaminskiKM16} aim to reason about covariances of probabilistic programs~\cite{DBLP:conf/qest/KaminskiKM16}.
They define a semantics called $\mathsf{rt}$ that increments a dedicated runtime variable $\tau$ by one for each statement executed and consider the expected value of $\tau^k$ on termination.
However, as explained in \Cref{sec:run-times-and-resource-consumption}, this approach is deficient for diverging programs.
\Citeauthor*{DBLP:conf/tacas/KuraUH19}~\cite{DBLP:conf/tacas/KuraUH19} focus on tail probabilities of run-times and make use of higher moments to obtain bounds on them.
They synthesize upper bounds on the $k$-th moment via template-based invariants.
In~\cite{DBLP:conf/pldi/Wang0R21}, \citeauthor*{DBLP:conf/pldi/Wang0R21} present an automated approach for the inference of higher moments of expected runtimes of probabilistic programs.
As for the previous approach, proof rules have to be proven specifically for their semantics.
Lower bounds are also considered and used to derive central moments.
Their approach is automated, and uses template-based invariant synthesis for interval bounds.
Algebraic methods for higher moments of step-indexed expected value have been used in~\cite{DBLP:journals/pacmpl/MoosbruggerSBK22}.
On programs with certain restrictions, e.g. acyclic non-linear variable dependencies and variables in conditions only assuming finitely many values, they can compute closed-form solutions for all moments.
Hardness results for higher moments of step-indexed expected values are presented in~\cite{DBLP:journals/pacmpl/MullnerMK24}.

%% file: sections/8_conclusion.tex
\section{Conclusion}\label{sec:conclusion}

We presented probabilistic programs with reward statements to obtain a one-fits-all approach for analyzing multiple objectives, e.g. expected runtimes, expected visiting times, higher moments of expected rewards, or the expected excess over a threshold.
All reduce to the ``classical'' $\symWp$ calculus, eliminating the need for dedicated calculi/semantics for each objective.
Therefore, existing proof rules and tools for $\symWp$ can be reused in our setting.
Future work includes the incorporation of programs with non-determinism; programmatic reward modeling works in this setting as well, but the reward transformation introduces additional subtleties due to additional information introduced by the counter variable $\rt$.

%% file: appendices/omitted_proofs.tex
\section{Proofs}\label{sec:omitted_proofs}

\subsection[Definitions for Section~\ref{sec:foundations}]{Definitions for \Cref{sec:foundations}}\label{sec:definitions-foundations}

\begin{figure}[t]
    \centering
    \begin{minipage}{0.95\linewidth}
        \medskip

        \scriptsize{}
        \begin{align*}
             &
            \infer%
            {(\term, \State) \trans{1} \final}
            {}
            \qquad\qquad
            \infer%
            {\final \trans{1} \final}
            {}
            \\[1.5ex]
             &
            \rlap{
                \infer%
                {(\stmtSkip, \State) \trans{1} (\term, \State)}
                {}
                \qquad\quad
                \infer%
                {(\stmtAsgn{x}{\aexpr}, \State) \trans{1} (\term, \State\substBy{x}{\aexpr})}
                {}
                \qquad\quad
                \infer%
                {(\stmtTick{\aexpr}, \State) \trans{1} (\term, \State)}
                {}
            }
            \\[1.5ex]
             &
            \infer%
            {(\stmtOne \symSemi \stmtTwo, \State) \trans{p} (\stmtTwo, \State')}
            {(\stmtOne, \State) \trans{p} (\term, \State')}
             &   &
            \infer%
            {(\stmtOne \symSemi \stmtTwo, \State) \trans{p} (\stmtOne' \symSemi \stmtTwo, \State')}
            {(\stmtOne, \State) \trans{p} (\stmtOne', \State')}
            \\[1.5ex]
             &
            \infer%
            {(\stmtProb{p}{\stmtOne}{\stmtTwo}, \State) \trans{p} (\stmtOne, \State)}
            {}
             &   &
            \infer%
            {(\stmtProb{p}{\stmtOne}{\stmtTwo}, \State) \trans{1-p} (\stmtTwo, \State)}
            {}
            \\[1.5ex]
             &
            \infer%
            {(\stmtIf{\bexpr}{\stmtOne}{\stmtTwo}, \State) \trans{1} (\stmtOne, \State)}
            {\llbracket \bexpr \rrbracket_\State = \exprTrue}
             &   &
            \infer%
            {(\stmtIf{\bexpr}{\stmtOne}{\stmtTwo}, \State) \trans{1} (\stmtTwo, \State)}
            {\llbracket \bexpr \rrbracket_\State = \exprFalse}
            \\[1.5ex]
             &
            \infer%
            {(\stmtWhile{\bexpr}{\stmt}, \State) \trans{1} (\stmt \symSemi \stmtWhile{\bexpr}{\stmt}, \State)}
            {\llbracket \bexpr \rrbracket_\State = \exprTrue}
             &   &
            \infer%
            {(\stmtWhile{\bexpr}{\stmt}, \State) \trans{1} (\term, \State)}
            {\llbracket \bexpr \rrbracket_\State = \exprFalse}
        \end{align*}
    \end{minipage}
    \caption{Operational semantics for $\pGCL$ programs, where $(\stmtOne, \State) \trans{p} (\stmtTwo, \State')$ iff $\Prob((\stmtOne, \State))((\stmtTwo, \State')) = p$, for $\State \in \States$ and $\stmt, \stmtOne, \stmtTwo \in \pGCL$.}
    \label{fig:operational-semantics}
\end{figure}

For a program $\stmt$, its \emph{operational Markov chain} $\progMdp{\stmt}{\State_1} = (\Conf, \Prob, \conf_1, \symRew)$ is given by the reachable configurations $\Conf \subseteq (\pGCL \cup \Set{\term}) \times \States \cup \Set{\final}$ with transition probabilities $\Prob$ as the smallest relation defined by the inference rules in~\Cref{fig:operational-semantics}, initial configuration $\conf_1 = (\stmt, \State_1)$, and reward function $\symRew$ defined as follows:
\begin{align*}
    \Rew{(\stmt, \State)} = \begin{cases}
        r & \text{if } (\exists \stmt' \in \pGCL.~ \stmt = \stmtTick{r}\symSemi \stmt') \lor (\stmt = \stmtTick{r}) \\
        0 & \text{otherwise}
    \end{cases}
\end{align*}
For the web server program in \Cref{fig:example-program-a}, \Cref{fig:webserver-markov-chain-full} shows its operational Markov chain from an arbitrary initial state $\State$ with explicit configurations, and is a full version of \Cref{fig:webserver-markov-chain} from the main text.

\begin{figure}[t]
    \centering
    {\scriptsize
    \[
    \begin{aligned}
        B &\coloneqq \stmtTick{1}\symSemi \stmtProb{\nicefrac{1}{2}}{\stmtAsgn{done}{\exprTrue}}{\stmtSkip},
        \qquad
        L \coloneqq \stmtWhile{\neg done}{B},
        \\
        \State_F &\coloneqq \State\substBy{done}{\exprFalse},
        \qquad
        \State_T \coloneqq \State_F\substBy{done}{\exprTrue}
    \end{aligned}
    \]
    \begin{tikzpicture}[
        cfg/.style={draw, rounded corners, align=center, inner sep=1.3pt, font=\scriptsize},
        sink/.style={circle, draw, inner sep=1.6pt, font=\scriptsize},
        node distance=0.95cm and 1.1cm,
    ]
        \node[cfg, initial, initial where=left, initial text={}] (s1) {$\bigl(\stmtAsgn{done}{\exprFalse}\symSemi L,\State\bigr)$};
        \node[cfg, below=1.05cm of s1] (s2) {$\bigl(L,\State_F\bigr)$};
        \node[cfg, left=0.35cm of s2] (s3) {$\bigl(B\symSemi L,\State_F\bigr)$};
        \node[above=0.05cm of s3, font=\scriptsize] {$\Rew{(B\symSemi L,\State_F)} = 1$};
        \node[cfg, below left=0.55cm and 0.15cm of s3] (s4) {$\bigl(\stmtAsgn{done}{\exprTrue}\symSemi L,\State_F\bigr)$};
        \node[cfg, below right=0.55cm and 0.15cm of s4] (s5) {$\bigl(L,\State_T\bigr)$};
        \node[cfg, below right=0.55cm and 0.15cm of s3] (s6) {$\bigl(\stmtSkip\symSemi L,\State_F\bigr)$};
        \node[cfg, right=0.5cm of s5] (term) {$\bigl(\term,\State_T\bigr)$};
        \node[sink, right=0.5cm of term] (final) {$\bot$};

        \path[->] (s1) edge node {} (s2);
        \path[->] (s2) edge node {} (s3);
        \path[->] (s3) edge node[left] {\scriptsize{}0.5} (s4);
        \path[->] (s3) edge node[right] {\scriptsize{}0.5} (s6);
        \path[->] (s4) edge node {} (s5);
        \path[->] (s5) edge node {} (term);
        \path[->] (term) edge node {} (final);
        \path[->] (final) edge[loop right] node {} (final);
        \path[->,bend right=60] (s6) edge node {} (s2);
    \end{tikzpicture}}
    \caption{Configuration-level operational MC for the web server program from \Cref{fig:example-program-a}, instantiated from an arbitrary initial state $\State$; this is a full version of \Cref{fig:webserver-markov-chain}.}
    \label{fig:webserver-markov-chain-full}
\end{figure}
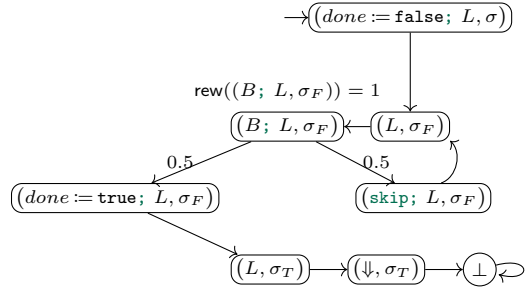

\subsection[Proofs for Section~\ref{sec:programmatic-reward-modeling}]{Proofs for \Cref{sec:programmatic-reward-modeling}}

\lemmaRewardsOnTermination*

\begin{proof}
    Let $\stmt \in \pGCL$, $\expa \in \Expectations$, and $\State \in \States$.
    By $\symWp$ semantics, we have $\wp{\stmt}(\expa)(\State) = \wp{\stmt}(\wp{\stmtTick{\expa}}(0))(\State)$.
    By $\symWp$ soundness, this is equal to $\wp{\stmt\symSemi \stmtTick{\expa}}(0)(\State)=\Expected(\RewStateInd{\State}{0}{\stmt\symSemi \stmtTick{\expa}}) = \Expected(\RewState{\State}{\stmt\symSemi \stmtTick{\expa}}).$
\end{proof}

\begin{figure}[t]
    \centering
    \renewcommand{\arraystretch}{1.2}%
    \setlength{\tabcolsep}{6pt}%
    \begin{tabular}{@{}l l@{}}
        \toprule
        $\stmt$ & $\wp{\stmt}(\expa)$ \\
        \midrule
        $\stmtSkip$ & $\expa$ \\
        $\stmtAsgn{x}{\aexpr}$ & $\expa\substBy{x}{\aexpr}$ \\
        $\stmtOne \symSemi \stmtTwo$ & $\wp{\stmtOne}(\wp{\stmtTwo}(\expa))$ \\
        $\stmtProb{p}{\stmtOne}{\stmtTwo}$ & $p \cdot \wp{\stmtOne}(\expa) + (1-p) \cdot \wp{\stmtTwo}(\expa)$ \\
        $\stmtIf{\bexpr}{\stmtOne}{\stmtTwo}$ & $\iverson{\bexpr} \cdot \wp{\stmtOne}(\expa) + \iverson{\neg\bexpr} \cdot \wp{\stmtTwo}(\expa)$ \\
        $\stmtWhile{\bexpr}{\stmtP}$ & $\lfp{\expb}{\iverson{\bexpr} \cdot (1+\wp{\stmtP}(\expb)) + \iverson{\neg \bexpr} \cdot \expa}$ \\
        \bottomrule
    \end{tabular}
    \caption{Inductive definition of the \emph{expected run-time transformer} for program $\stmt$, $\ert{\stmt} \colon \Expectations \to \Expectations$ where $\expa \in \Expectations$ is the \emph{post-runtime}. The only differences to standard $\symWp$ semantics (\Cref{fig:wp-semantics}) is the ``$+1$'' in the while loop and missing $\symTick$ statement.}
    \label{fig:ert-semantics}
\end{figure}

\lemmaIncrementalRuntimeTracking*

\begin{proof}
    \allowdisplaybreaks
    We consider the expected runtime semantics $\ert{\stmt}$ from~\cite{DBLP:conf/esop/KaminskiKMO16,DBLP:journals/jacm/KaminskiKMO18} with the runtime model that counts loop iterations only.
    \Cref{fig:ert-semantics} shows the inductive definition of $\ert{\stmt}$, which only differs from standard $\symWp$ semantics by the ``$+1$'' in the while loop case and the absence of the $\symTick$ statement.

    Consider a loop $\stmtWhile{\bexpr}{\stmtP}$.
    We show for all expectations $\expa \in \Expectations$:
    \[
        \wp{\stmtWhile{\bexpr}{\stmtTick{1}\symSemi \stmtP}}(\expa) \eeq \ert{\stmtWhile{\bexpr}{\stmtP}}(\expa)~.
    \]
    For the programmatic reward loop, we have:
    \begin{align*}
        & \wp{\stmtWhile{\bexpr}{\stmtTick{1}\symSemi \stmtP}}(\expa) \\
        &= \lfp{\expb}{\iverson{\bexpr} \cdot \wp{\stmtTick{1}\symSemi \stmtP}(\expb) + \iverson{\neg \bexpr} \cdot \expa} \\
        &= \lfp{\expb}{\iverson{\bexpr} \cdot \wp{\stmtTick{1}}(\wp{\stmtP}(\expb)) + \iverson{\neg \bexpr} \cdot \expa} \\
        &= \lfp{\expb}{\iverson{\bexpr} \cdot (1 + \wp{\stmtP}(\expb)) + \iverson{\neg \bexpr} \cdot \expa}~,
    \intertext{which is equal to the expected runtime semantics:}
        & \ert{\stmtWhile{\bexpr}{\stmtP}}(\expa)
    \end{align*}
    The last equality holds because $\ert{\stmtP} = \wp{\stmtP}$ since $\stmtP$ is loop-free.

    This lemma can be extended to nested loops by induction on the structure of the program, and adding the $\symTick$ statements into every loop body.
\end{proof}

\subsection[Proofs for Section~\ref{sec:reward-transformations}]{Proofs for \Cref{sec:reward-transformations}}\label{sec:proofs-reward-transformations}

We state the program transformation functions of \Cref{def:reward-program-transformation} in more detail here, to aid the proofs below.
\Cref{fig:reward-transformation} defines the program transformation $\rwTrans{f}{\stmt}$ and an auxiliary transformation $\rwTransP{f}{\stmt}$.
The (outer) transformation $\rwTrans{f}{\stmt}$ simply adds the variable $\rt$, initializes it to $0$, prepends $\stmtTick{f(0)}$, and invokes the auxiliary transformation $\rwTransP{f}{\stmt}$.
The latter handles the transformation of $\symTick$ statements, recursively applying the transformation to the sub-statements of $\stmt$.
Every $\stmtTick{\aexpr}$ statement is transformed to $\stmtTick{(f(\rt + \aexpr) - f(\rt))}$, and the variable $\rt$ is updated with $\stmtAsgn{\rt}{\rt + \aexpr}$.
Other statements are left unchanged.

\begin{figure}[t]
    \centering
    \renewcommand{\arraystretch}{1.3}
    \setlength{\tabcolsep}{1em} %
    \begin{tabular}{l l}
        \toprule
        $\stmt$                               & $\rwTrans{f}{\stmt}$                                                         \\
        \midrule
        $\stmt$                               & $\stmtAsgn{\rt}{0}\symSemi \stmtTick{f(0)}\symSemi \rwTransP{f}{\stmt}$      \\[3ex]
        \toprule
        $\stmt$                               & $\rwTransP{f}{\stmt}$                                                        \\
        \midrule
        $\stmtSkip$                           & $\stmtSkip$                                                                  \\
        $\stmtTick{\aexpr}$                   & $\stmtTick{(f(\rt + \aexpr) - f(\rt))}\symSemi \stmtAsgn{\rt}{\rt + \aexpr}$ \\
        $\stmtAsgn{x}{\aexpr}$                & $\stmtAsgn{x}{\aexpr}$                                                       \\
        $\stmtProb{p}{\stmtOne}{\stmtTwo}$    & $\stmtProb{p}{\rwTransP{f}{\stmtOne}}{\rwTransP{f}{\stmtTwo}}$               \\
        $\stmtIf{\bexpr}{\stmtOne}{\stmtTwo}$ & $\stmtIf{\bexpr}{\rwTransP{f}{\stmtOne}}{\rwTransP{f}{\stmtTwo}}$            \\
        $\stmtSeq{\stmtOne}{\stmtTwo}$        & $\rwTransP{f}{\stmtOne}\symSemi \rwTransP{f}{\stmtTwo}$                      \\
        $\stmtWhile{\bexpr}{\stmtP}$          & $\stmtWhile{\bexpr}{\rwTransP{f}{\stmtP}}$                                   \\
        \bottomrule
    \end{tabular}
    \caption{Program transformation $\rwTrans{f}{\stmt}$ and auxiliary transformation $\rwTransP{f}{\stmt}$ for monotonic $f \colon \PosRealsInf \to \PosRealsInf$ and fresh variable $\rt$ not occurring in $\stmt$.}
    \label{fig:reward-transformation}
\end{figure}

\Cref{thm:correctness-reward-translation} states that the transformation $\rwTrans{f}{\stmt}$ correctly transforms the expected reward of $\stmt$ according to $f$.
The proof of \Cref{thm:correctness-reward-translation} is mostly technical; the main conceptual argument is provided by \Cref{thm:mc-transformation-sound} (c.f. \Cref{sec:mc-transformation}).

\thmCorrectnessRewardTranslation*

\begin{proof}
    Let $\State_1 \in \States$ be an arbitrary initial state.
    Define
    \[
        \mathcal{M}_{\mathsf{prog}} \coloneqq \progMdp{\rwTrans{f}{\stmt}}{\State_1},
        \qquad
        \mathcal{M}_{\mathsf{mc}} \coloneqq \rwTrans{f}{\progMdp{\stmt}{\State_1}}~.
    \]
    We compare these two Markov chains, where $\mathcal{M}_{\mathsf{mc}}$ is obtained by the Markov chain transformation from \Cref{thm:mc-transformation-sound}.
    We need to assume that $f(0) = 0$ holds for \Cref{thm:mc-transformation-sound}, but \Cref{lem:mc-transformation-sound-lifted} shows that this restriction can be lifted by another simple transformation.
    The Markov chains $\mathcal{M}_{\mathsf{prog}}$ and $\mathcal{M}_{\mathsf{mc}}$ are essentially the same, except for the following differences:
    \begin{itemize}
        \item The states of $\mathcal{M}_{\mathsf{prog}}$ are program states $\State \in \States$ (extended with fresh variable $\rt$), while the states of $\mathcal{M}_{\mathsf{mc}}$ are pairs $(\State, \rt) \in \States \times \PosRealsInf$. These are equivalent representations of the same states.
        \item The statement $\stmtTick{\aexpr}$ is transformed into the sequential composition of $\stmtTick{(f(\rt + \aexpr) - f(\rt))}$ and $\stmtAsgn{\rt}{\rt + \aexpr}$.
              Let $\State_i$ be the state at some point $i$ of the execution of $\stmt$ and let $\alpha_i$ be the cumulative reward collected up to this point.
              At a $\symTick$ statement, $\mathcal{M}_{\mathsf{mc}}$ will execute one step:
              \[
                  ((\stmtTick{(f(\rt + \aexpr) - f(\rt))}, \State_i), \alpha_i)~.
              \]
              On the other hand, we have three steps in $\mathcal{M}_{\mathsf{prog}}$:
              \begin{align*}
                  &(\stmtTick{(f(\rt + \aexpr) - f(\rt))}\symSemi \stmtAsgn{\rt}{\rt + \aexpr}, \State_i') \\
                  &\quad \trans{1} (\term\symSemi \stmtAsgn{\rt}{\rt + \aexpr}, \State_i') \\
                  &\quad \trans{1} (\stmtAsgn{\rt}{\rt + \aexpr}, \State_i')~.
              \end{align*}
              However, the probabilities and cumulative rewards of these paths are equal.
    \end{itemize}
    As a result, we can conclude that the expected rewards of both Markov chains are equal:
    \begin{align*}
        \Expected(\Rew{\mathcal{M}_{\mathsf{prog}}}) & \eeq \Expected(\Rew{\mathcal{M}_{\mathsf{mc}}})~.
        \intertext{By \Cref{thm:mc-transformation-sound}, $\mathcal{M}_{\mathsf{mc}}$ satisfies}
        \Expected(\Rew{\mathcal{M}_{\mathsf{mc}}})    & \eeq \Expected(f(\Rew{\progMdp{\stmt}{\State_1}}))~.
    \end{align*}
    Thus, we have $\Expected(f(\RewState{\State_1}{\stmt})) = \Expected(f(\Rew{\progMdp{\stmt}{\State_1}})) = \Expected(\Rew{\progMdp{\rwTrans{f}{\stmt}}{\State_1}})$ for all $\State_1 \in \States$, which is equivalent to $\Expected(\Rew{\rwTrans{f}{\stmt}}) = \Expected(f(\Rew{\stmt}))$.
\end{proof}

\theoremMonotonicTransformations*

\begin{proof}
    \allowdisplaybreaks
    Let $\stmt \in \pGCL$ and let $f, g \colon \PosRealsInf \to \PosRealsInf$ be monotonic functions with $f(x) \leq g(x)$ for all $x \in \PosRealsInf$.
    We show by structural induction on $\stmt$ that
    \begin{align}
        \forall \expa \in \Expectations.\quad \wp{\rwTransP{f}{\stmt}}(\expa) \leq \wp{\rwTransP{g}{\stmt}}(\expa)~. \label{eq:monotonic-transformations-induction}
    \end{align}

    \noindent
    Base cases:

    $\bullet$ $\stmt = \stmtSkip$:
    \begin{align*}
         & \wp{\rwTransP{f}{\stmtSkip}}(\expa) = \wp{\stmtSkip}(\expa) = \expa \\
         & = \wp{\stmtSkip}(\expa) = \wp{\rwTransP{g}{\stmtSkip}}(\expa)
    \end{align*}

    $\bullet$ $\stmt = \stmtAsgn{x}{\aexpr}$:
    \begin{align*}
         & \wp{\rwTransP{f}{\stmtAsgn{x}{\aexpr}}}(\expa) = \wp{\stmtAsgn{x}{\aexpr}}(\expa) = \expa\substBy{x}{\aexpr} \\
         & = \wp{\stmtAsgn{x}{\aexpr}}(\expa) = \wp{\rwTransP{g}{\stmtAsgn{x}{\aexpr}}}(\expa)
    \end{align*}

    $\bullet$ $\stmt = \stmtTick{\aexpr}$:
    \begin{align*}
         & \wp{\rwTransP{f}{\stmtTick{\aexpr}}}(\expa)                                                                                                    \\
         & = \wp{\stmtTick{(f(\rt + \aexpr) - f(\rt))}\symSemi \stmtAsgn{\rt}{\rt + \aexpr}}(\expa) \tag{definition of $\rwTransP{f}{\stmtTick{\aexpr}}$} \\
         & = (f(\rt + \aexpr) - f(\rt)) + \expa\substBy{\rt}{\rt + \aexpr} \tag{definition of $\symWp$}                                                   \\
         & \leq (g(\rt + \aexpr) - g(\rt)) + \expa\substBy{\rt}{\rt + \aexpr} \tag{monotonicity of $f$ and $g$}                                           \\
         & = \wp{\stmtTick{(g(\rt + \aexpr) - g(\rt))}\symSemi \stmtAsgn{\rt}{\rt + \aexpr}}(\expa) \tag{definition of $\symWp$}                          \\
         & = \wp{\rwTransP{g}{\stmtTick{\aexpr}}}(\expa) \tag{definition of $\rwTransP{g}{\stmtTick{\aexpr}}$}
    \end{align*}

    \noindent
    Now assume the induction hypothesis holds for $\stmtOne, \stmtTwo, \stmtP \in \pGCL$.

    $\bullet$ $\stmt = \stmtSeq{\stmtOne}{\stmtTwo}$:
    \begin{align*}
         & \wp{\rwTransP{f}{\stmtSeq{\stmtOne}{\stmtTwo}}}(\expa)                                                        \\
         & = \wp{\rwTransP{f}{\stmtOne}\symSemi \rwTransP{f}{\stmtTwo}}(\expa) \tag{definition of $\rwTransP{f}{\cdot}$} \\
         & = \wp{\rwTransP{f}{\stmtOne}}(\wp{\rwTransP{f}{\stmtTwo}}(\expa)) \tag{definition of $\symWp$}                \\
         & \leq \wp{\rwTransP{g}{\stmtOne}}(\wp{\rwTransP{g}{\stmtTwo}}(\expa)) \tag{induction hypothesis}               \\
         & = \wp{\rwTransP{g}{\stmtOne}\symSemi \rwTransP{g}{\stmtTwo}}(\expa) \tag{definition of $\symWp$}              \\
         & = \wp{\rwTransP{g}{\stmtSeq{\stmtOne}{\stmtTwo}}}(\expa) \tag{definition of $\rwTransP{g}{\cdot}$}
    \end{align*}

    $\bullet$ $\stmt = \stmtProb{p}{\stmtOne}{\stmtTwo}$:
    \begin{align*}
         & \wp{\rwTransP{f}{\stmtProb{p}{\stmtOne}{\stmtTwo}}}(\expa)                                                                    \\
         & = \wp{\stmtProb{p}{\rwTransP{f}{\stmtOne}}{\rwTransP{f}{\stmtTwo}}}(\expa) \tag{definition of $\rwTransP{f}{\cdot}$}          \\
         & = p \cdot \wp{\rwTransP{f}{\stmtOne}}(\expa) + (1 - p) \cdot \wp{\rwTransP{f}{\stmtTwo}}(\expa) \tag{definition of $\symWp$}  \\
         & \leq p \cdot \wp{\rwTransP{g}{\stmtOne}}(\expa) + (1 - p) \cdot \wp{\rwTransP{g}{\stmtTwo}}(\expa) \tag{induction hypothesis} \\
         & = \wp{\stmtProb{p}{\rwTransP{g}{\stmtOne}}{\rwTransP{g}{\stmtTwo}}}(\expa) \tag{definition of $\symWp$}                       \\
         & = \wp{\rwTransP{g}{\stmtProb{p}{\stmtOne}{\stmtTwo}}}(\expa) \tag{definition of $\rwTransP{g}{\cdot}$}
    \end{align*}

    $\bullet$ $\stmt = \stmtIf{\bexpr}{\stmtOne}{\stmtTwo}$:
    \begin{align*}
         & \wp{\rwTransP{f}{\stmtIf{\bexpr}{\stmtOne}{\stmtTwo}}}(\expa)                                                                                              \\
         & = \wp{\stmtIf{\bexpr}{\rwTransP{f}{\stmtOne}}{\rwTransP{f}{\stmtTwo}}}(\expa) \tag{definition of $\rwTransP{f}{\cdot}$}                                    \\
         & = \iverson{\bexpr} \cdot \wp{\rwTransP{f}{\stmtOne}}(\expa) + \iverson{\neg \bexpr} \cdot \wp{\rwTransP{f}{\stmtTwo}}(\expa) \tag{definition of $\symWp$}  \\
         & \leq \iverson{\bexpr} \cdot \wp{\rwTransP{g}{\stmtOne}}(\expa) + \iverson{\neg \bexpr} \cdot \wp{\rwTransP{g}{\stmtTwo}}(\expa) \tag{induction hypothesis} \\
         & = \wp{\stmtIf{\bexpr}{\rwTransP{g}{\stmtOne}}{\rwTransP{g}{\stmtTwo}}}(\expa) \tag{definition of $\symWp$}                                                 \\
         & = \wp{\rwTransP{g}{\stmtIf{\bexpr}{\stmtOne}{\stmtTwo}}}(\expa) \tag{definition of $\rwTransP{g}{\cdot}$}
    \end{align*}

    $\bullet$ $\stmt = \stmtWhile{\bexpr}{\stmtP}$:
    Recall that $\wp{\stmtWhile{\bexpr}{\stmtP}}(\expa)$ is defined as the least fixed point of the functional
    \[
        \Phi_\expa(\expb) = \iverson{\bexpr} \cdot \wp{\stmtP}(\expb) + \iverson{\neg \bexpr} \cdot \expa~.
    \]
    By Kleene's fixed point theorem, we have
    \[
        \wp{\rwTransP{f}{\stmtWhile{\bexpr}{\stmtP}}}(\expa) = \sup_{n \in \Nats} \Phi_\expa^n(0)~.
    \]
    Denote by $\Phi_\expa$ the functional for the transformed while loop $\rwTransP{f}{\stmtWhile{\bexpr}{\stmtP}}$ and by $\Psi_\expa$ the functional for $\rwTransP{g}{\stmtWhile{\bexpr}{\stmtP}}$.

    We show by induction on $n$ that $\Phi_\expa^n(0) \leq \Psi_\expa^n(0)$ holds for all $n \in \Nats$.
    The base case $n = 0$ holds since $\Phi_\expa^0(0) = 0 = \Psi_\expa^0(0)$.
    For the induction step, assume $\Phi_\expa^n(0) \leq \Psi_\expa^n(0)$.
    Then,
    \begin{align*}
         & \Phi_\expa^{n+1}(0) = \Phi_\expa(\Phi_\expa^n(0))                                                                                                                      \\
         & = \iverson{\bexpr} \cdot \wp{\rwTransP{f}{\stmtP}}(\Phi_\expa^n(0)) + \iverson{\neg \bexpr} \cdot \expa \tag{definition of $\Phi_\expa$}                               \\
         & \leq \iverson{\bexpr} \cdot \wp{\rwTransP{g}{\stmtP}}(\Psi_\expa^n(0)) + \iverson{\neg \bexpr} \cdot \expa \tag{induction hypothesis and monotonicity of $\wp{\cdot}$} \\
         & = \Psi_\expa(\Psi_\expa^n(0)) \tag{definition of $\Psi_\expa$}                                                                                                         \\
         & = \Psi_\expa^{n+1}(0)~.
    \end{align*}
    Thus, by induction, we have $\Phi_\expa^n(0) \leq \Psi_\expa^n(0)$ for all $n \in \Nats$.
    It follows that
    \begin{align*}
        &\wp{\rwTransP{f}{\stmtWhile{\bexpr}{\stmtP}}}(\expa) = \sup_{n \in \Nats} \Phi_\expa^n(0) \\
        &\leq \sup_{n \in \Nats} \Psi_\expa^n(0) = \wp{\rwTransP{g}{\stmtWhile{\bexpr}{\stmtP}}}(\expa)~.
    \end{align*}
    This concludes the induction.
    Finally, for the full transformation $\rwTrans{f}{\stmt}$, we have
    \begin{align*}
         & \wp{\rwTrans{f}{\stmt}}(\expa)                                                                                               \\
         & = \wp{\stmtAsgn{\rt}{0}\symSemi \stmtTick{f(0)}\symSemi \rwTransP{f}{\stmt}}(\expa) \tag{definition of $\rwTrans{f}{\stmt}$} \\
         & = f(0) + \wp{\rwTransP{f}{\stmt}}(\expa)\substBy{\rt}{0} \tag{definition of $\symWp$}                                        \\
         & \leq g(0) + \wp{\rwTransP{f}{\stmt}}(\expa)\substBy{\rt}{0} \tag{$f(0) \leq g(0)$}                                           \\
         & \leq g(0) + \wp{\rwTransP{g}{\stmt}}(\expa)\substBy{\rt}{0} \tag{\Cref{eq:monotonic-transformations-induction}} \\
         & = \wp{\stmtAsgn{\rt}{0}\symSemi \stmtTick{g(0)}\symSemi \rwTransP{g}{\stmt}}(\expa) \tag{definition of $\symWp$}             \\
         & = \wp{\rwTrans{g}{\stmt}}(\expa) \tag{definition of $\rwTrans{g}{\stmt}$}
    \end{align*}
\end{proof}

\theoremLinearity*

\begin{proof}
    Let $f \colon \PosRealsInf \to \PosRealsInf$ monotonic and $\alpha,\beta \in \PosReals$.
    Denote $g = \alpha \cdot f + \beta$.

    First, we show that auxiliary transformation $\rwTransP{g}{\stmt}$ preserves scalar multiplication.
    For all $\stmt \in \pGCL$ and $\expa \in \Expectations$, we have
    \[
        \wp{\rwTransP{g}{\stmt}}(\alpha \cdot \expa) = \alpha \cdot \wp{\rwTransP{f}{\stmt}}(\expa)~.
    \]
    The proof is by structural induction on $\stmt$.
    The interesting case is $\stmt = \stmtTick{\aexpr}$:
    \begin{align*}
        & \wp{\rwTransP{g}{\stmtTick{\aexpr}}}(\alpha \cdot \expa)                                    \\
        &= \wp{\stmtTick{g(\rt + \aexpr) - g(\rt)}\symSemi \stmtAsgn{\rt}{\rt + \aexpr}}(\alpha \cdot \expa) \tag{def. $\rwTransP{g}{\stmt}$} \\
        &= (g(\rt + \aexpr) - g(\rt)) + \alpha \cdot \expa\substBy{\rt}{\rt + \aexpr}  \tag{def. $\symWp$} \\
        &= ((\alpha \cdot f(\rt + \aexpr) + \beta) - (\alpha \cdot f(\rt) + \beta)) + \alpha \cdot \expa\substBy{\rt}{\rt + \aexpr} \tag{def. $g$} \\
        &= \alpha \cdot (f(\rt + \aexpr) - f(\rt)) + \alpha \cdot \expa\substBy{\rt}{\rt + \aexpr} \tag{cancellation of $\beta$} \\
        &= \alpha \cdot (f(\rt + \aexpr) - f(\rt) + \expa\substBy{\rt}{\rt + \aexpr}) \tag{linearity} \\
        &= \alpha \cdot \wp{\rwTransP{f}{\stmtTick{\aexpr}}}(\expa) \tag{def. of $\symWp$}
     \end{align*}
     The other cases proceed in a straightforward manner.

     For the full transformation $\rwTrans{g}{\stmt}$, we have
     \begin{align*}
        &\wp{\rwTrans{g}{\stmt}}(\alpha \cdot \expa) \\
        &= \wp{\stmtAsgn{\rt}{0}\symSemi \stmtTick{g(0)}\symSemi \rwTransP{g}{\stmt}}(\alpha \cdot \expa) \tag{def. $\rwTrans{g}{\stmt}$} \\
        &= g(0) + \wp{\rwTransP{g}{\stmt}}(\alpha \cdot \expa)\substBy{\rt}{0} \tag{def. $\symWp$} \\
        &= \alpha \cdot f(0) + \beta + \wp{\rwTransP{g}{\stmt}}(\alpha \cdot \expa)\substBy{\rt}{0} \tag{def. $g$} \\
        &= \alpha \cdot f(0) + \alpha \cdot \wp{\rwTransP{f}{\stmt}}(\expa)\substBy{\rt}{0} + \beta \tag{above} \\
        &= \alpha \cdot \wp{\rwTrans{f}{\stmt}}(\expa) + \beta \tag{def. $\rwTransP{f}{\stmt}$}
     \end{align*}
     Using $\expa = 0$, the statement $\wp{\rwTrans{g}{\stmt}}(0) = \alpha \cdot \wp{\rwTrans{f}{\stmt}}(0) + \beta$ follows.
\end{proof}

\theoremGhostBuster*

\begin{proof}
    Let $\stmt \in \pGCL$, $f \colon \PosRealsInf \to \PosRealsInf$ such that $f(x) = \alpha \cdot x + \beta$ for some $\alpha, \beta \in \PosReals$.
    Let $\expa \in \Expectations$ such that $\rt$ does not occur in $\expa$, i.e. $\forall \State \in \States, v \in \Vals.~ \expa\substBy{\rt}{v} = \expa$.

    \begin{align*}
        \intertext{For the transformation by $\rwTransP{\stmt}{f}$ for $\symTick$ statements, we have:}
        & \wp{\rwTransP{f}{\stmtTick{\aexpr}}}(\expa)                                                                                                                        \\
        & = \wp{\stmtTick{(f(\rt + \aexpr) - f(\rt))}\symSemi \stmtAsgn{\rt}{\rt + \aexpr}}(\expa) \tag{definition of $\rwTransP{f}{\stmtTick{\aexpr}}$}                     \\
        & = (f(\rt + \aexpr) - f(\rt)) + \expa\substBy{\rt}{\rt + \aexpr} \tag{definition of $\symWp$}                                                                           \\
        & = (\alpha \cdot (\rt + \aexpr) + \beta - (\alpha \cdot \rt + \beta)) + \expa\substBy{\rt}{\rt + \aexpr} \tag{definition of $f$}         \\
        &= \alpha \cdot \aexpr + \expa\substBy{\rt}{\rt + \aexpr} \tag{cancellation of $\beta$} \\
        &= \alpha \cdot \aexpr + \expa \tag{$\rt$ not in $\expa$} \\
        &= \wp{\stmtTick{\alpha \cdot \aexpr}}(\expa) \tag{definition of $\symWp$}
        \intertext{For the full transformation $\rwTrans{f}{\stmt}$, we have:}
         & \wp{\rwTrans{f}{\stmt}}(\expa)(\State)                                                                                                                             \\
         & = \wp{\stmtAsgn{\rt}{0}\symSemi \stmtTick{f(0)}\symSemi \rwTransP{f}{\stmt}}(\expa)(\State) \tag{definition of $\rwTrans{f}{\stmt}$}                               \\
         & = f(0) + \wp{\rwTransP{f}{\stmt}}(\expa)\substBy{\rt}{0}(\State) \tag{definition of $\symWp$}                                                                      \\
         &= f(0) + \wp{\rwTransP{f}{\stmt}}(\expa) \tag{$\rt$ not in $\wp{\rwTransP{f}{\stmt}}(\expa)$}  \\
         &= (\alpha \cdot 0 + \beta) + \wp{\rwTransP{f}{\stmt}}(\expa) \tag{definition of $f$} \\
         &= \beta + \wp{\rwTransP{f}{\stmt}}(\expa) \tag{simplification} \\
         &= \wp{\stmtTick{\beta}\symSemi \rwTransP{f}{\stmt}}(\expa) \tag{definition of $\symWp$}
    \end{align*}
    Therefore, the simplification is valid and $\rt$ can be omitted for linear transformations.
\end{proof}

\theoremCompositionTransformations*

\begin{proof}
    By definition of $\symWp$, it suffices to show
    \[
        \wp{\rwTransP{g}{\rwTransP{f}{\stmt}}}(\expa)\substBy{\rt'}{f(\rt)} = \wp{\rwTransP{g \circ f}{\stmt}}(\expa\substBy{\rt'}{f(\rt)})~.
    \]
    We proceed by induction over the structure of $\stmt$.

    Of the base cases, $\stmt = \stmtTick{\aexpr}$ is the interesting one:
    \allowdisplaybreaks
    \begin{align*}
         & \wp{\rwTransP{g}{\rwTransP{f}{\stmtTick{\aexpr}}}}(\expa)\substBy{\rt'}{f(\rt)}                                                                                                                                                                                    \\
         & = \wp{\rwTransP{g}{\stmtTick{(f(\rt + \aexpr) - f(\rt))}\symSemi \stmtAsgn{\rt}{\rt + \aexpr}}}(\expa)\substBy{\rt'}{f(\rt)} \tag{definition of $\rwTransP{f}{\cdot}$}                                                                                             \\
         & = \wpleftright{
                \begin{aligned}[c]
                    &\stmtTick{(g(\rt' + (f(\rt + \aexpr) - f(\rt))) - g(\rt'))}\symSemi\\
                    &\stmtAsgn{\rt'}{\rt' + (f(\rt + \aexpr) - f(\rt))}\symSemi\\
                    &\stmtAsgn{\rt}{\rt + \aexpr}
                \end{aligned}
            }(\expa)\substBy{\rt'}{f(\rt)} \tag{definition of $\rwTransP{g}{\cdot}$} \\
         & = (g(\rt' + (f(\rt + \aexpr) - f(\rt))) - g(\rt')                                                                                                                                                                                                                  \\
         & \quad + \expa\substBy{\rt}{\rt + \aexpr}\substBy{\rt'}{\rt' + (f(\rt + \aexpr) - f(\rt))})\substBy{\rt'}{f(\rt)} \tag{definition of $\symWp$}                                                                                                                      \\
         & = g(f(\rt) + (f(\rt + \aexpr) - f(\rt))) - g(f(\rt))                                                                                                                                                                                                               \\
         & \quad + \expa\substBy{\rt}{\rt + \aexpr}\substBy{\rt'}{f(\rt) + (f(\rt + \aexpr) - f(\rt))} \tag{substitution}                                                                                                                                                     \\
         & = g(f(\rt + \aexpr)) - g(f(\rt)) + \expa\substBy{\rt}{\rt + \aexpr}\substBy{\rt'}{f(\rt + \aexpr)} \tag{$f(\rt) + (f(\rt + \aexpr) - f(\rt)) = f(\rt + \aexpr)$}                                                                                                   \\
         & = \wp{\stmtTick{(g(f(\rt + \aexpr)) - g(f(\rt)))}\symSemi \stmtAsgn{\rt'}{f(\rt + \aexpr)}\symSemi \stmtAsgn{\rt}{\rt + \aexpr}}(\expa) \tag{definition of $\symWp$}                                                                                               \\
         & = \wp{\stmtTick{(g(f(\rt + \aexpr)) - g(f(\rt)))}\symSemi \stmtAsgn{\rt}{\rt + \aexpr}\symSemi \stmtAsgn{\rt'}{f(\rt )}}(\expa) \tag{exchange assignments}                                                                                                         \\
         & = \wp{\stmtTick{(g(f(\rt + \aexpr)) - g(f(\rt)))}\symSemi \stmtAsgn{\rt}{\rt + \aexpr}}(\expa\substBy{\rt'}{f(\rt)}) \tag{definition of $\symWp$}                                                                                                                  \\
         & = \wp{\rwTransP{g \circ f}{\stmtTick{\aexpr}}}(\expa\substBy{\rt'}{f(\rt)}) \tag{definition of $\rwTransP{g \circ f}{\cdot}$}
    \end{align*}
    The other base cases are straightforward.

    For the inductive cases, assume that the claim holds for $\stmtOne$ and $\stmtTwo$.

    We show the case for sequential composition $\stmt = \stmtSeq{\stmtOne}{\stmtTwo}$:
    \begin{align*}
         & \wp{\rwTransP{g}{\rwTransP{f}{\stmtSeq{\stmtOne}{\stmtTwo}}}}(\expa)\substBy{\rt'}{f(\rt)}                                                                      \\
         & = \wp{\rwTransP{g}{\rwTransP{f}{\stmtOne}\symSemi \rwTransP{f}{\stmtTwo}}}(\expa)\substBy{\rt'}{f(\rt)} \tag{definition of $\rwTransP{f}{\cdot}$}               \\
         & = \wp{\rwTransP{g}{\rwTransP{f}{\stmtOne}}\symSemi \rwTransP{g}{\rwTransP{f}{\stmtTwo}}}(\expa)\substBy{\rt'}{f(\rt)} \tag{definition of $\rwTransP{g}{\cdot}$} \\
         & = \wp{\rwTransP{g}{\rwTransP{f}{\stmtOne}}}(\wp{\rwTransP{g}{\rwTransP{f}{\stmtTwo}}}(\expa))\substBy{\rt'}{f(\rt)} \tag{definition of $\symWp$}                \\
         & = \wp{\rwTransP{g \circ f}{\stmt}}(\wp{\rwTransP{g}{\rwTransP{f}{\stmtTwo}}}(\expa)\substBy{\rt'}{f(\rt)}) \tag{induction hypothesis for $\stmtOne$}            \\
         & = \wp{\rwTransP{g \circ f}{\stmt}}(\wp{\rwTransP{g \circ f}{\stmtTwo}}(\expa\substBy{\rt'}{f(\rt)})) \tag{induction hypothesis for $\stmtTwo$}                  \\
         & = \wp{\rwTransP{g \circ f}{\stmtOne}\symSemi \rwTransP{g \circ f}{\stmtTwo}}(\expa\substBy{\rt'}{f(\rt)}) \tag{definition of $\symWp$}                          \\
         & = \wp{\rwTransP{g \circ f}{\stmt}}(\expa\substBy{\rt'}{f(\rt)}) \tag{definition of $\rwTransP{g \circ f}{\cdot}$}
    \end{align*}
    The other inductive cases are straightforward.
    Hence, the claim follows by induction.
\end{proof}

\thmMcTransformationSound*

\begin{proof}
    We one-to-one relate every finite path $\pi = \State_1 \ldots \State_n$ of the original Markov chain $\mathcal{M}$ to a finite path $\pi' = \State_1' \ldots \State_n'$ in the transformed Markov chain $\rwTrans{f}{\mathcal{M}}$.
    We use two properties of \Cref{lemma:mc-transformation-sound} below to show that the expected rewards of the two paths are equal.
    \begin{enumerate}
        \item[(1)] $\pi$ and $\pi'$ have the same probabilities: $\Prob(\pi) = \ProbP(\pi')$.
        \item[(3)] The $f$-transformed reward is equal to the reward of $\pi'$: $f(\Rew{\pi}) = \RewP{\pi'}$.
    \end{enumerate}
    We obtain:
    \begin{align*}
        \phantom{=} \Expected(f(\Rew{\mathcal{M}}))  & \eeq \sup_{n \in \Nats} \sum_{\pi \in \PathsN{\mathcal{M}}{=n}(\State)} \Prob(\pi) \cdot f(\Rew{\pi})                                \\
        = \Expected(\RewP{\rwTrans{f}{\mathcal{M}}}) & \eeq \sup_{n \in \Nats} \sum_{\pi' \in \PathsN{\rwTrans{f}{\mathcal{M}}}{=n}(\State)} \ProbP(\pi') \cdot \RewP{\pi'}~. &  & \qedhere
    \end{align*}
\end{proof}

\lemmaMcTransformationSound*

\begin{proof}
    Let $\mathcal{M} = (\States, \Prob, \State_1, \symRew)$ be a Markov chain and let $f \colon \PosRealsInf \to \PosRealsInf$ be a monotonic function with $f(0) = 0$.
    \newcommand{\gMap}[1]{\xi_{#1}}

    For each $n \geq 1$, define
    \[
        \begin{aligned}
            \gMap{n} &\colon \PathsN{\mathcal{M}}{=n} \to \PathsN{\rwTrans{f}{\mathcal{M}}}{=n}, \\
            \gMap{n}(\State_1 \ldots \State_n) &\coloneqq (\State_1,\alpha_1) \ldots (\State_n,\alpha_n),
        \end{aligned}
    \]
    where $\alpha_1 = 0$ and $\alpha_i = \sum_{j < i}\Rew{\State_j}$ for all $1 \leq i \leq n$.
    For every path $\pi = \State_1 \ldots \State_n \in \PathsN{\mathcal{M}}{=n}$, we prove by induction on $n$ that $\gMap{n}(\pi)$ is well-defined and satisfies properties (1)--(3).
    \medskip

    \noindent
    \textit{Base case.}
    For $n = 1$, let $\State_1 \in \PathsN{\mathcal{M}}{=1}$.
    Then, $\gMap{1}(\State_1) = (\State_1, 0) \in \PathsN{\rwTrans{f}{\mathcal{M}}}{=1}$ and the properties (1)--(3) hold:
    \begin{enumerate}
        \item $\ProbP((\State_1,0)) = 1 = \Prob(\State_1)$,
        \item $\alpha_1 = 0 = \sum_{j < 1} \Rew{\State_j}$,
        \item $\RewP{(\State_1,0)} = f(0 + \Rew{\State_1}) - f(0) = f(\Rew{\State_1})$,
    \end{enumerate}
    \medskip

    \noindent
    \textit{Induction step.}
    Now assume the claim for length $n$ and let $\pi = \State_1 \ldots \State_{n+1} \in \PathsN{\mathcal{M}}{=n+1}$.
    We define the truncated path $\pi$ up to $n$ steps by $\pi^- = \State_1 \ldots \State_n$.
    Then, $\gMap{n}(\pi^-) = (\State_1,\alpha_1) \ldots (\State_n,\alpha_n)$.
    With $\alpha_{n+1} \coloneqq \alpha_n + \Rew{\State_n}$, we have
    \[
        \gMap{n+1}(\pi) = (\State_1,\alpha_1) \ldots (\State_n,\alpha_n)(\State_{n+1},\alpha_{n+1}).
    \]
    The path $\gMap{n+1}(\pi)$ is a valid path in $\rwTrans{f}{\mathcal{M}}$ since by definition of $\ProbP$, the only successor state of $(\State_n,\alpha_n)$ with nonzero probability is exactly $(\State_{n+1},\alpha_{n+1})$.
    Furthermore, we show that $\gMap{n+1}(\pi)$ satisfies properties (1)--(3):
    \begin{enumerate}
        \item Same path probabilities:
              \begin{align*}
                  \ProbP(\gMap{n+1}(\pi))
                   & = \ProbP(\gMap{n}(\pi^-)) \cdot \ProbP((\State_n,\alpha_n), (\State_{n+1},\alpha_{n+1})) \\
                   & = \Prob(\pi^-) \cdot \ProbP((\State_n,\alpha_n), (\State_{n+1},\alpha_{n+1})) \tag{IH (1)} \\
                   & = \Prob(\pi^-) \cdot \Prob(\State_n)(\State_{n+1}) \tag{def.\ $\ProbP$}                    \\
                   & = \Prob(\pi)~.
              \end{align*}
        \item $\alpha_i$ track untransformed rewards:
              By IH (2), we have $\alpha_n  = \sum_{j < n} \Rew{\State_j}$.
              Thus, we have $\alpha_{n+1} = \left( \sum_{j < n} \Rew{\State_j} \right) + \Rew{\State_n} = \sum_{j < n+1} \Rew{\State_j}$.
        \item Transformed rewards:
              \begin{align*}
                   & \RewP{\gMap{n+1}(\pi)} \\
                   & = \RewP{\gMap{n}(\pi^-)\,(\State_{n+1},\alpha_{n+1})} \tag{def.\ $\gMap{n+1}$}                                        \\
                   & = \RewP{\gMap{n}(\pi^-)} + \symRewP((\State_{n+1},\alpha_{n+1})) \tag{def.\ $\symRewP$}                         \\
                   & = f(\Rew{\pi^-}) + \symRewP((\State_{n+1},\alpha_{n+1})) \tag{IH (3)}                                                   \\
                   & = f(\Rew{\pi^-}) + \left( f(\alpha_{n+1} + \Rew{\State_{n+1}}) - f(\alpha_{n+1}) \right) \tag{def.\ $\symRewP$}       \\
                   & = f(\Rew{\pi^-}) + \left( f((\alpha_n + \Rew{\State_n}) + \Rew{\State_{n+1}}) - f(\alpha_n + \Rew{\State_n}) \right) \tag{def.\ $\alpha_{n+1}$} \\
                   & = f(\Rew{\pi^-}) + \left( f(\Rew{\pi^-} + \Rew{\State_{n+1}}) - f(\Rew{\pi^-}) \right) \tag{IH (2)}                  \\
                   & = f(\Rew{\pi^-}) + \left( f(\Rew{\pi}) - f(\Rew{\pi^-}) \right) \tag{def.\ $\Rew{\pi}$}                               \\
                   & = f(\Rew{\pi})~.
              \end{align*}
    \end{enumerate}
    Thus, $\gMap{n+1}$ is well-defined and satisfies (1)--(3).
    Moreover, $\gMap{n+1}$ is a bijection.
    To show injectivity, let $\pi,\pi' \in \PathsN{\mathcal{M}}{=n+1}$ with $\gMap{n+1}(\pi)=\gMap{n+1}(\pi')$.
    By definition of $\gMap{n+1}$, it follows that $\pi = \pi'$.
    For surjectivity, let $\rho = (\State_1,\beta_1)\ldots(\State_{n+1},\beta_{n+1}) \in \PathsN{\rwTrans{f}{\mathcal{M}}}{=n+1}$ be arbitrary.
    It holds that $\gMap{n+1}(\State_1\ldots\State_{n+1}) = \rho$.
\end{proof}
To lift the restriction $f(0) = 0$ in \Cref{thm:mc-transformation-sound}, we use the following lemma.
It uses a shifted function $g(x) = f(x) - f(0)$ for the transformation, and prepends an initial state with reward $f(0)$ to the transformed Markov chain.

\begin{lemma}[Lifting the restriction $f(0) = 0$]\label{lem:mc-transformation-sound-lifted}
    Consider a Markov chain $\mathcal{M} = (\States, \Prob, \State_1, \symRew)$ and let $f \colon \PosRealsInf \to \PosRealsInf$ be a monotonic function.
    Let $g \colon \PosRealsInf \to \PosRealsInf$ be defined as $g(x) = f(x) - f(0)$.
    From \Cref{thm:mc-transformation-sound}, we obtain the $g$-transformed Markov chain $\rwTrans{g}{\mathcal{M}} = (\States \times \PosRealsInf,~ \ProbP,~ \State_1',~ \symRewP )$.
    To insert an initial state with reward $f(0)$, let $\mathcal{M}'' = (\States'',~ \ProbP',~ \State_0'',~ \symRewP')$ with:
    \begin{itemize}
        \item $\States \times \PosRealsInf \cup \Set{\State_0''}$,
        \item $\ProbP'(\State, \State') = \begin{cases}
                      1                       & \text{if } \State = \State_0' \land \State' = (\State_1, 0) \\
                      \ProbP(\State, \State') & \text{otherwise}
                  \end{cases}$
        \item $\symRewP'(\State_0'') = f(0)$ and otherwise $\symRewP'((\State, \alpha)) = \symRewP((\State, \alpha))$.
    \end{itemize}
    Then, it holds that
    \[
        \Expected(\Rew{\mathcal{M}''}) = \Expected(f(\Rew{\mathcal{M}}))~.
    \]
\end{lemma}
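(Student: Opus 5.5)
The plan is to reduce this lemma to \Cref{thm:mc-transformation-sound} applied to $g$, and then account for the one extra initial state by shifting path indices.

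\textbf{Step 1: check that $g$ is admissible.} Since $f$ is monotone, $f(x) \geq f(0)$ for all $x$, so $g(x) = f(x) - f(0)$ is non-negative. It is also monotone and satisfies $g(0) = 0$. The case $f(0) = \infty$ needs a remark: under the paper's convention $\infty - \infty = \infty$, we get $g \equiv \infty$. Then both sides of the claimed equality are $\infty$, because every path of length $\geq 1$ in $\mathcal{M}''$ collects reward $f(0) = \infty$, and $f(\Rew{\pi}) = \infty$ by monotonicity. So assume $f(0) < \infty$. Then \Cref{lemma:mc-transformation-sound} and \Cref{thm:mc-transformation-sound} apply to $g$.

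\textbf{Step 2: relate paths of $\mathcal{M}''$ to paths of $\rwTrans{g}{\mathcal{M}}$.} Every path of length $n+1 \geq 2$ in $\mathcal{M}''$ has the form $\State_0''\,\pi'$, where $\pi' \in \PathsN{\rwTrans{g}{\mathcal{M}}}{=n}$. This holds because $\State_0''$ moves to $(\State_1, 0)$ with probability one and is never re-entered. (I read the $\State_0'$ in the definition of $\ProbP'$ as $\State_0''$.) This correspondence is a bijection, with
\[
\ProbP'(\State_0''\,\pi') = \ProbP(\pi')
\qquad\text{and}\qquad
\RewP{\State_0''\,\pi'} = f(0) + \RewP{\pi'}.
\]
Composing with the bijection $\xi_n$ from \Cref{lemma:mc-transformation-sound} gives, for $\pi' = \xi_n(\pi)$,
\[
\RewP{\State_0''\,\pi'} = f(0) + g(\Rew{\pi}) = f(\Rew{\pi})
\qquad\text{and}\qquad
\ProbP'(\State_0''\,\pi') = \Prob(\pi).
\]

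\textbf{Step 3: sum over paths and take the supremum.} For each $n \geq 1$,
\[
\sum_{\rho \in \PathsN{\mathcal{M}''}{=n+1}} \ProbP'(\rho)\cdot \RewP{\rho} = \sum_{\pi \in \PathsN{\mathcal{M}}{=n}} \Prob(\pi) \cdot f(\Rew{\pi}).
\]
The sequence of these sums is non-decreasing in $n$: extending a path only adds non-negative reward, and probability mass splits over successors. Hence the supremum over $n$ is unaffected by the index shift and by the lengths $0$ and $1$, which contribute at most $f(0)$, already below the later terms. This yields $\Expected(\Rew{\mathcal{M}''}) = \Expected(f(\Rew{\mathcal{M}}))$.

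\textbf{Main obstacle.} The real difficulty is the arithmetic with $\infty$ in $f(0) + (f(x) - f(0)) = f(x)$. It is exactly why the case $f(0) = \infty$ is split off in Step 1. In the remaining case, $f(0)$ is finite and the identity holds, including when $f(x) = \infty$. The rest is bookkeeping layered on \Cref{lemma:mc-transformation-sound}.
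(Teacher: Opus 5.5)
Your proof is correct and follows the same route as the paper. You apply the $f(0)=0$ soundness result to $g$, note that the prepended state adds $f(0)$ to every path's reward, and telescope via $f(0)+g(x)=f(x)$. You also make two points explicit that the paper's one-line calculation glosses over: the index shift, justified by monotonicity of the partial sums, and the case $f(0)=\infty$, where the paper's claim $g(0)=0$ fails under its own convention $\infty-\infty=\infty$.
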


\begin{proof}
    Since $g$ is monotonic and $g(0) = f(0) - f(0) = 0$, we can apply \Cref{thm:mc-transformation-sound} to obtain the $g$-transformed Markov chain $\rwTrans{g}{\mathcal{M}}$.
    In the extended MC, every path's reward sequence is prepended with $f(0)$.
    Therefore,
    \begin{align*}
        \Expected(\Rew{\mathcal{M}''})
         & = \Expected(f(0) + \RewP{\rwTrans{g}{\mathcal{M}}})                                  \\
         & = \Expected(f(0) + g(\Rew{\mathcal{M}})) \tag{by \Cref{thm:mc-transformation-sound}} \\
         & = \Expected(f(0) + f(\Rew{\mathcal{M}}) - f(0)) \tag{definition of $g$}              \\
         & = \Expected(f(\Rew{\mathcal{M}})) \tag*{\qedhere}
    \end{align*}
\end{proof}

\subsection[Generalization to Multiple Rewards (Section~\ref{sec:multiple-rewards})]{Generalization to Multiple Rewards (\Cref{sec:multiple-rewards})}\label{sec:proofs-multiple-rewards}

\begingroup
\newcommand{\symWpN}{\symWp^{n}}
\newcommand{\wpN}[1]{\symWpN\llbracket #1\rrbracket}
\newcommand{\rwTransN}[1]{\rwTrans{f}{#1}}
\newcommand{\rwTransPN}[1]{\rwTransP{f}{#1}}
\newcommand{\rtVec}{\multi{\rt}}
\newcommand{\rtVecP}{\multi{\rt}'}
\newcommand{\rtVecZero}{(0,\ldots,0)}

We outline the generalization of our results to multiple rewards, corresponding to \Cref{sec:multiple-rewards}.
Fix $n \geq 1$ and monotone $f \colon (\PosRealsInf)^n \to \PosRealsInf$.

\paragraph{Multi-reward probabilistic programs and Markov chains.}
In the operational and weakest pre-expectation semantics (\Cref{sec:foundations}), reward statements become tuple-valued: $\stmtTick{(\aexpr_1,\ldots,\aexpr_n)}$.
Markov chains therefore carry tuple-valued rewards:
\[
    \mathcal{M} = (\States, \Prob, \State_1, \symRew),
    \qquad
    \symRew \colon \States \to (\PosRealsInf)^n.
\]
A finite path $\pi = \State_1 \ldots \State_k$ has reward $\Rew{\pi} \coloneqq \sum_{i=1}^{k} \symRew(\State_i) \in (\PosRealsInf)^n$.
For each $i$, let $\stmt_i$ be obtained from $\stmt$ by replacing each tuple reward by its $i$-th component.
Then, the tuple-valued weakest pre-expectation transformer is defined componentwise by
\begin{align*}
    \symWpN &\colon \pGCL \times \Expectations^n \to \Expectations^n, \\
    \wpN{\stmt}(\expa_1,\ldots,\expa_n) &\coloneqq \left(\wp{\stmt_1}(\expa_1),\ldots,\wp{\stmt_n}(\expa_n)\right).
\end{align*}

\paragraph{Generalizing the transformation.}
Recall that the reward program transformation (\Cref{def:reward-program-transformation}) extends as explained in \Cref{sec:multiple-rewards}: we introduce fresh ghost variables $\rt_1,\ldots,\rt_n$ and define
\[
    \rwTransN{\stmt} \coloneqq \stmtAsgn{\rt_1}{0}\symSemi \cdots \symSemi \stmtAsgn{\rt_n}{0}\symSemi \stmtTick{f(0,\ldots,0)}\symSemi \rwTransPN{\stmt}~,
\]
where for $\symTick$ statements, the inner transformation is defined as follows:
\begin{align*}
    \rwTransPN{\stmtTick{(\aexpr_1,\ldots,\aexpr_n)}} \coloneqq~& \stmtTick{f(\rt_1+\aexpr_1,\ldots,\rt_n+\aexpr_n)-f(\rt_1,\ldots,\rt_n)}\symSemi \\
    &\stmtAsgn{\rt_1}{\rt_1+\aexpr_1}\symSemi \cdots \symSemi \stmtAsgn{\rt_n}{\rt_n+\aexpr_n}~.
\end{align*}
The other cases are defined analogously to the single-reward case (c.f. \Cref{fig:reward-transformation}), by applying the transformation recursively.
Similarly, the MC transformation (\Cref{thm:defi-mc-transformation}) extends as follows, assuming $f(0,\ldots,0) = 0$:
\begin{align*}
    \rwTransN{\mathcal{M}} ={}& (\States \times (\PosRealsInf)^n,~ \ProbP,~ (\State_1,\rtVecZero),~ \symRewP), \\
    \ProbP((\State,\rtVec),(\State',\rtVecP)) ={}& \begin{cases}
                                                       \Prob(\State)(\State') & \text{if }\rtVecP = \rtVec + \symRew(\State),\\
                                                       0                      & \text{otherwise,}
                                                   \end{cases} \\
    \symRewP((\State,\rtVec)) ={}& f(\rtVec + \symRew(\State)) - f(\rtVec)~,
\end{align*}
Soundness of the Markov chain transformation (\Cref{lemma:mc-transformation-sound,lem:mc-transformation-sound-lifted}) lifts directly to tuple-valued accumulated rewards, using the same path coupling and telescoping arguments as in the scalar case, and we obtain
\[
    \Expected(f(\Rew{\mathcal{M}})) = \Expected(\RewP{\rwTransN{\mathcal{M}}}).
\]
\Cref{thm:correctness-reward-translation}, i.e. $\Expected(f(\Rew{\stmt})) = \Expected(\Rew{\rwTransN{\stmt}})$, follows by the same argument.
\endgroup

%% file: appendices/case_studies.tex
\section{Case Studies}\label{sec:case-studies}

We give additional details on the examples mentioned in the main text.
For the examples verified with the \emph{Caesar} verifier~\cite{DBLP:journals/pacmpl/SchroerBKKM23}, we provide the associated code in its \emph{HeyVL} language.
The examples run on an unmodified version of Caesar (version 3.0.0), making use of its support for $\symWp$-based reasoning about expected rewards with $\symTick$ statements.

\paragraph{Syntax of HeyVL.}
We give a short explanation of the syntax of HeyVL that is used in the examples below.
More detailed information can be found in the associated publication \cite{DBLP:journals/pacmpl/SchroerBKKM23} and the online documentation (\url{https://caesarverifier.org/docs/heyvl/}).

In the examples below, we use Caesar's syntax to declare two kinds of procedures in HeyVL, $\symProc$ and $\symcoProc$.
Both kinds of declarations consist of a name, a list of input and output variables, and a pre $\expa$, a post $\expb$, and a body $\stmt$.
The syntax of a $\symProc$ is as follows, where \lstinline!x! is an input variable of type \lstinline!Int! and \lstinline!y! is an output variable of type \lstinline!Int!:
\begin{lstlisting}[language=HeyVL,mathescape]
proc exampleProc(x: Int) -> (y: Int)
    pre $\expa$
    post $\expb$
{
    $\stmt$
}
\end{lstlisting}
The only difference between $\symProc$s and $\symcoProc$s is that $\symProc$s are verified to satisfy \emph{lower bounds} $\expa$ on expected values of $\expb$, i.e.
\[
    \expa \expleq \wp{\stmt}(\expb)~,
\]
and that $\symcoProc$ s are required to satisfy the dual \emph{upper bound} property, i.e.
\[
    \wp{\stmt}(\expb) \expleq \expa~.
\]

To implement probabilistic choices, HeyVL uses the syntax of an ordinary assignment with a \emph{distribution expression} $\exprFlip{p}$ that returns $\exprTrue$ with probability $p$ and $\exprFalse$ with probability $1 - p$.

\clearpage
\subsection[Expected Visiting Times for the Fast Dice Roller (Section~\ref{sec:expected-visiting-times})]{Expected Visiting Times for the Fast Dice Roller (\Cref{sec:expected-visiting-times})}\label{sec:fdr-evt}

To illustrate the encoding for expected visiting times from \Cref{sec:expected-visiting-times}, we consider the \emph{fast dice roller} (FDR), a runtime-optimal algorithm for sampling uniformly from $[1,N]$~\cite{Lum13}.

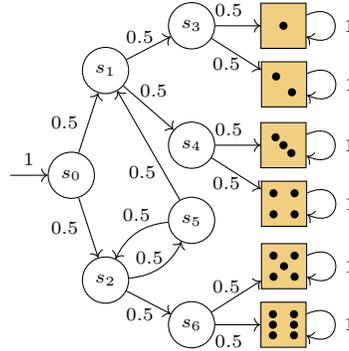
\begin{wrapfigure}{r}{0.41\textwidth}
    \vspace{-\baselineskip}
    \centering
    \input{appendices/fdr_markov_chain.tex}
    \caption{Markov chain of the fast dice roller (FDR) from \cite{DBLP:journals/jar/MertensKQW25}.}
    \label{fig:fdr-markov-chain}
    \vspace{-1.1\baselineskip}
\end{wrapfigure}
For $N=6$, \Citeauthor{DBLP:journals/jar/MertensKQW25}~\cite{DBLP:journals/jar/MertensKQW25} analyze expected visiting times on the Markov chain shown in \Cref{fig:fdr-markov-chain}.
For this example, they compute expected visiting times to reach the states $s_3,s_4,s_6$ that terminate the algorithm, and from these deduce the expected reachability probabilities of the dice faces.
We verify the same bounds on expected visiting times using Caesar, by encoding the Markov chain as simple $\symWhile$ loop with probabilistic choices and $\symTick$ statements.

\Cref{fig:fdr-evt-heyvl} shows the HeyVL code for this encoding.
Variable \lstinline|s| models the state index, \lstinline|done| indicates whether a terminal state has been reached, and \lstinline|res| models the sampled dice face.
To query the expected visiting time of specific states, we use ghost variables \lstinline!query_s! and \lstinline!query_done!, and collect reward with $\stmtTick{\iverson{\texttt{query\_s} = s \land \texttt{query\_done} = \texttt{done}}}$.
The $\symcoProc$ declaration in \Cref{fig:fdr-evt-heyvl} with $\Requires{1/3}$ specifies that we verify the upper bound of $\nicefrac{1}{3}$ on the expected visiting time to reach states $s \in \{0,\ldots,6\}$ and $done \in \{\exprTrue, \exprFalse\}$.

From these bounds, we obtain bounds on the expected reachability probabilities of the terminal states (c.f. \cite[Section~6.1.1]{DBLP:journals/jar/MertensKQW25}).
Since the relevant states $s_3,s_4,s_6$ have expected visiting time at most $\nicefrac{1}{3}$, and each transition from these states to a terminal state has probability $\nicefrac{1}{2}$, the expected reachability probability of each terminal state is at most $\nicefrac{1}{6}$.
One can also verify this last fact by adding more ghost variables to indicate the specific value of the dice face (\lstinline!res! variable) and modifying the $\symTick$ statements accordingly.

The invariant is omitted in \Cref{fig:fdr-evt-heyvl} for brevity, but given separately in \Cref{fig:fdr-evt-invariant}.
It specifies, for each queried state and each current state, an upper bound on the expected visiting time from the current state to the queried state.
It corresponds to the \emph{fundamental matrix}, defined in \cite[Def. 3.2.2]{Kemeny1976-fa}, originally given as a definition to determine expected visiting times in Markov chains.

We also confirmed with Caesar that our invariant is \emph{optimal}.
This was done using the Caesar built-in dual to the $\symWp$ Park induction proof rule for \emph{upper} bounds; now we verify a \emph{lower} bound.
Since both lower and upper bounds coincide, the invariant is optimal.

\clearpage
\subsection[Variance of Web Server (Example~\ref{ex:variance-webserver})]{Variance of Web Server (\Cref{ex:variance-webserver})}\label{sec:webserver-variance}

For the web server examples in \Cref{fig:example-programs}, we made claims about the expected value and variance of the response time.
Some details on the calculations were provided in \Cref{ex:variance-webserver}, and we add some selected calculations here.

We focus on the unmodified example, i.e. \Cref{fig:example-program-a}.
For the $\symWhile$ loop, the loop-characteristic functional w.r.t. post $0$ is given by
\[
    \Phi_0(\expa) \eeq \iverson{\neg done} \cdot ((2 \cdot \rt + 1) + 0.5 \cdot \expa\substBy{done}{\exprTrue} + 0.5 \cdot \expa) + \iverson{done} \cdot 0.
\]
To compute the least fixed-point, we make use of Kleene's fixed point theorem:
\[
    \wp{\stmtWhile{\neg done}{\ldots}}(0) \eeq \sup_{n \in \Nats} \Phi_0^n(0).
\]
The first few iterations give the following results:
\begin{align*}
    \Phi_0^0(0)                    & \eeq 0,                                                                                                                          \\
    \Phi_0^1(0)                    & \eeq \iverson{\neg done} \cdot (2 \cdot \rt + 1),                                                                                \\
    \Phi_0^2(0)                    & \eeq \iverson{\neg done} \cdot (\nicefrac{1}{2} \cdot (6 \cdot \tau + 5)),                                                       \\
    \Phi_0^3(0)                    & \eeq \iverson{\neg done} \cdot (\nicefrac{1}{4} \cdot (14 \cdot \tau + 15)),                                                     \\
    \Phi_0^4(0)                    & \eeq \iverson{\neg done} \cdot (\nicefrac{1}{8} \cdot (30 \cdot \tau + 37))~.
    \intertext{From this, we can deduce the following closed form:}
    \Phi_0^n(0)                    & \eeq \iverson{\neg done} \cdot \left(\frac{1}{2^{n-1}} \cdot ((2^{n+1} - 2) \cdot \tau + 3 \cdot (2^n - 1) - 2 \cdot n)\right)~.
    \intertext{We make use of the monotone convergence theorem to compute the supremum:}
    \sup_{n \in \Nats} \Phi_0^n(0) & \eeq \lim_{n \to \infty} \Phi_0^n(0)                                                                                             \\
                                   & \eeq \iverson{\neg done} \cdot (4 \cdot \tau + 6)~.
\end{align*}
Hence, we have shown that $\wp{\stmtWhile{\neg done}{\ldots}}(0) \eeq \iverson{\neg done} \cdot (4 \cdot \tau + 6)$.
The full program $\stmt$ starts with $\rt = 0$ and $done = \exprFalse$, therefore $\wp{\rwTrans{x^2}{\stmt}}(0) = 6$.
For a random variable $X$, we have $\mathsf{Var}(X) = \mathsf{E}(X^2) - \mathsf{E}(X)^2$.
Hence, we compute $\wp{\rwTrans{x^2}{\stmt}}(0) - \wp{\stmt}(0)^2 = 6 - 2^2 = 2$.

For the example program with the cache (\Cref{fig:example-program-b}), denoted by $\stmtP$, we obtain the least fixed-point  $\iverson{\neg done} \cdot (3 \cdot (\rt - 0.5) + 4.5)$ and obtain the second moment $\wp{\rwTrans{x^2}{\stmtP}}(0) = 0.25 + 4.5 = 4.75$.
Since $\wp{\stmtP}(0) = 2$, we obtain the variance $4.75 - 2^2 = 0.75$.

The associated HeyVL code is shown in \Cref{fig:example-programs-heyvl}.
\Cref{fig:example-programs-heyvl-a,fig:example-programs-heyvl-b} verify the expected runtime upper bound of $2$ for the unmodified and modified examples, respectively.
\Cref{fig:example-programs-heyvl-c,fig:example-programs-heyvl-d} verify the second moment upper bounds of $6$ and $4.75$, respectively.

\subsection[Biased Random Walk (Example~\ref{ex:biased-random-walk-second-moment})]{Biased Random Walk (\Cref{ex:biased-random-walk-second-moment})}

Recall \Cref{ex:biased-random-walk-second-moment} in \Cref{sec:higher-moments-of-rewards}, where we consider a biased random walk on the non-negative integers, starting at some initial value $\texttt{init\_x}$.
With probability $0.75$, the variable \lstinline!x! is decremented by $2$, and with probability $0.25$ it is incremented by $2$.
After each step, we increment the reward by $1$.
It stops when \lstinline!x! reaches zero.

The HeyVL code for the first moment of the cumulative reward is shown in \Cref{fig:biased-random-walk-second-moment}.
The upper bound shown is \lstinline!init_x - (init_x \% 2)!, which means the expected number of steps is upper-bounded by the initial value of \lstinline!x!, rounded down to the nearest even number.

For the second moment, we apply the transformation $\rwTransP{f}{\cdot}$ to the original program, where $f(x) = x^2$.
The resulting program is shown in \Cref{fig:biased-random-walk-second-moment}.
As before, the original $\symTick$ statement is replaced by $\stmtTick{(\texttt{tau} + 1)^2 - \texttt{tau}^2}$ and the assignment $\stmtAsgn{\texttt{tau}}{\texttt{tau} + 1}$.
We used the invariant $x^2 + 2 \cdot x \cdot \texttt{tau} + 3 \cdot x$ to to verify the upper bound of $x^2 + 3 \cdot x$.

\subsection[Cumulative Distribution Function (Example~\ref{ex:cdf-webserver})]{Cumulative Distribution Function (\Cref{ex:cdf-webserver})}

For the web server example in \Cref{fig:example-program-a}, \Cref{ex:cdf-webserver} in \Cref{sec:cdf} verifies a bound on the runtime CDF.
We encode this objective by applying the transformation with $f(x)=\iverson{x \geq N}$.
In the transformed program (\Cref{fig:cdf-webserver-heyvl}), $\tau$ tracks runtime and each reward increment is
\[
    \iverson{\tau+1\geq N}-\iverson{\tau\geq N} \quad=\quad \iverson{\tau+1=N}~,
\]
so reward is collected exactly when the run first reaches step $N$.
We use the following loop invariant to verify that $\Pr(\text{runtime} \geq N)$ is at most $0.5^{N \monus 1}$:
\[
    I \eeq \iverson{\neg done \land \tau + 1 \leq N} \cdot \left(\frac{1}{2}\right)^{N - (\tau + 1)}~.
\]
Note that this invariant requires exponentials, which are not supported by Caesar out of the box, but can be encoded by a user-defined function.
This is done using $\symStmt{\texttt{domain}}~\texttt{Exponentials}$ declaration, which contains a recursive definition of \texttt{pow}, modeling the function $a^b$ for $a \in \PosReals$ and $b \in \Nats$.

\subsection[Expected Excess (Example~\ref{ex:loop-splitting-excess})]{Expected Excess (\Cref{ex:loop-splitting-excess})}

Recall \Cref{ex:loop-splitting-excess} in \Cref{sec:expected-excess}: for the web server loop from \Cref{fig:example-program-a} with per-round success probability $p$ and budget $N$, the objective is $\Expected(\text{runtime} \monus N)$.

\Cref{fig:expected-excess-original} encodes the original program, \Cref{fig:expected-excess-single} encodes the transformed single-loop variant with $f(x)=x \monus N$ (hence increment $\iverson{\tau \geq N}$), and \Cref{fig:expected-excess-split} encodes the split two-phase variant.
The two transformed encodings both prove the same upper bound $\frac{(1-p)^N}{p}$ for all $0 < p < 1$.
Again, the invariants for the transformed programs require exponentials, which are encoded by a user-defined function as described in the previous example.

\subsection[MGF for Coin Flip (Example~\ref{ex:mgf-coin-flip})]{MGF for Coin Flip (\Cref{ex:mgf-coin-flip})}

Recall \Cref{ex:mgf-coin-flip} in \Cref{sec:mgf}: for $\stmt = \stmtProb{p}{\stmtAsgn{x}{1}}{\stmtAsgn{x}{0}}\symSemi \stmtTick{x}$ and $f(x)=e^{t \cdot x}$, the objective is to bound the MGF by $p \cdot e^t + (1-p)$.

\Cref{fig:mgf-coin-flip-heyvl} gives the corresponding Caesar proof encoding in HeyVL.
Its $\symStmt{\texttt{domain}}~\texttt{MGFs}$ declaration introduces an uninterpreted \lstinline!mgf! for $e^{t \cdot x}$ with axioms \lstinline!mgf(0)=1! and strict monotonicity (\lstinline!x < y ==> mgf(x) < mgf(y)!).
These axioms fix the initial reward and rule out countermodels such as \lstinline!mgf(0)=1, mgf(1)=0!, where \lstinline!mgf(tau + x) - mgf(tau)! is interpreted in \lstinline!UReal! as \lstinline!max(0, mgf(tau + x)-mgf(tau))! and the increment is truncated to \lstinline!0!.

\subsection[Multiple Rewards (Example~\ref{ex:multiple-rewards-cost})]{Multiple Rewards (\Cref{ex:multiple-rewards-cost})}

Recall \Cref{ex:multiple-rewards-cost} in \Cref{sec:multiple-rewards}: we track runtime and memory, starting at $2$ each, then add runtime with probability $p$ and additional memory with probability $q$.
\Cref{fig:multiple-rewards-multiplication} gives the corresponding Caesar proof encoding in HeyVL.
Using $f(memory, runtime)=memory \cdot runtime$, each update contributes the incremental product difference $f(memory',runtime')-f(memory,runtime)$.
The file verifies the upper bound $4 + (p \cdot (2 + (q \cdot 3)))$ for the expected combined cost.

\begin{figure}[p]
\begin{lstlisting}[language=HeyVL]
coproc fdr6(query_s: UInt, query_done: Bool) -> (res: UInt)
    pre !?(query_done == true && query_s <= 6)
    pre 1/3
    post 0
{
    var s: UInt = 0
    var done: Bool = false
    @invariant(...)
    while !done {
        reward [query_s == s && query_done == done]
        var choice: Bool; var old_s: UInt = s
        if old_s == 0 {
            choice = flip(0.5)
            if choice { s = 1 } else { s = 2 }
        } else {}; if old_s == 1 {
            choice = flip(0.5)
            if choice { s = 3 } else { s = 4 }
        } else {}; if old_s == 2 {
            choice = flip(0.5)
            if choice { s = 5 } else { s = 6 }
        } else {}; if old_s == 3 {
            choice = flip(0.5)
            if choice { res = 1 } else { res = 2 }
            done = true
        } else {}; if old_s == 4 {
            choice = flip(0.5)
            if choice { res = 3 } else { res = 4 }
            done = true
        } else {}; if old_s == 5 {
            choice = flip(0.5)
            if choice { s = 1 } else { s = 2 }
        } else {}; if old_s == 6 {
            choice = flip(0.5)
            if choice { res = 5 } else { res = 6 }
            done = true
        } else {}
    }
    reward [query_s == s && query_done == done]
}
\end{lstlisting}
    \caption{HeyVL code for verifying expected visiting times in the FDR example. The invariant is given in \Cref{fig:fdr-evt-invariant}.}
    \label{fig:fdr-evt-heyvl}
\end{figure}

\begin{figure}[p]
    {\tiny{}
    \begin{lstlisting}[language=HeyVL,gobble=8]
        ite(
            done,
            /* then */ [((query_s == s) && (query_done == true))],
            /* else */
            ite(
            (query_done == false),
            /* then */
                [query_s==0] * [s==0]
                + [query_s==1] * ( [s==0]*(2/3) + [s==1]*(1)   + [s==2]*(1/3) + [s==5]*(2/3) )
                + [query_s==2] * ( [s==0]*(2/3) + [s==2]*(4/3) + [s==5]*(2/3) )
                + [query_s==3] * ( [s==0]*(1/3) + [s==1]*(1/2) + [s==2]*(1/6) + [s==3]*1 + [s==5]*(1/3) )
                + [query_s==4] * ( [s==0]*(1/3) + [s==1]*(1/2) + [s==2]*(1/6) + [s==4]*1 + [s==5]*(1/3) )
                + [query_s==5] * ( [s==0]*(1/3) + [s==2]*(2/3) + [s==5]*(4/3) )
                + [query_s==6] * ( [s==0]*(1/3) + [s==2]*(2/3) + [s==5]*(1/3) + [s==6]*1 ),
            /* else */
                [query_s==3] * ( [s==0]*(1/3) + [s==1]*(1/2) + [s==2]*(1/6) + [s==3]*1 + [s==5]*(1/3) )
                + [query_s==4] * ( [s==0]*(1/3) + [s==1]*(1/2) + [s==2]*(1/6) + [s==4]*1 + [s==5]*(1/3) )
                + [query_s==6] * ( [s==0]*(1/3) + [s==2]*(2/3) + [s==5]*(1/3) + [s==6]*1 )
            )
        )
    \end{lstlisting}}

    \caption{Invariant for the FDR expected visiting time verification in \Cref{fig:fdr-evt-heyvl}.}
    \label{fig:fdr-evt-invariant}
\end{figure}

\begin{figure}[p]
    \centering
    \begin{subfigure}[b]{0.45\textwidth}
\begin{lstlisting}[language=HeyVL,linerange={1-11}]
coproc original_moment1() -> (done: Bool)
    pre 2
    post 0
{
    done = false
    @invariant([!done] * 2)
    while !done {
        reward 1
        done = flip(0.5)
    }
}
\end{lstlisting}
        \caption{HeyVL code to verify that the expected runtime of the unmodified web server example (\Cref{fig:example-program-a}) is upper-bounded by $2$.}
        \label{fig:example-programs-heyvl-a}
    \end{subfigure}
    \hfill
    \begin{subfigure}[b]{0.45\textwidth}
\begin{lstlisting}[language=HeyVL,linerange={13-24}]
coproc caching_moment1() -> (done: Bool)
    pre 2
    post 0
{
    done = false
    reward 0.5
    @invariant([!done] * 1.5)
    while !done {
        reward 1
        done = flip(2/3)
    }
}
\end{lstlisting}
        \caption{HeyVL code to verify that the expected runtime of the modified web server example (\Cref{fig:example-program-b}) is upper-bounded by $2$.}
        \label{fig:example-programs-heyvl-b}
    \end{subfigure}
    \vskip\baselineskip
    \begin{subfigure}[b]{0.45\textwidth}
\begin{lstlisting}[language=HeyVL,linerange={26-38}]
coproc original_moment2() -> (done: Bool)
    pre 6
    post 0
{
    done = false
    var tau: UInt = 0
    @invariant([!done] * (6 + 4 * tau))
    while !done {
        reward 2 * tau + 1
        tau = tau + 1
        done = flip(0.5)
    }
}
\end{lstlisting}
        \caption{HeyVL code to verify that the second moment of the runtime of the unmodified web server example (\Cref{fig:example-program-a}) is upper-bounded by $6$.}
        \label{fig:example-programs-heyvl-c}
    \end{subfigure}
    \hfill
    \begin{subfigure}[b]{0.45\textwidth}
\begin{lstlisting}[language=HeyVL,linerange={40-53}]
coproc caching_moment2() -> (done: Bool)
    pre 4.75
    post 0
{
    done = false
    var tau: UReal = 0.5
    reward 0.25
    @invariant([!done] * (4.5 + 3 * (tau - 0.5)))
    while !done {
        reward 2 * tau + 1
        tau = tau + 1
        done = flip(2/3)
    }
}
\end{lstlisting}
        \caption{HeyVL code to verify that the second moment of the runtime of the modified web server example (\Cref{fig:example-program-b}) is upper-bounded by $4.75$.}
        \label{fig:example-programs-heyvl-d}
    \end{subfigure}
    \caption{HeyVL code to verify claims about \Cref{fig:example-programs}.}
    \label{fig:example-programs-heyvl}
\end{figure}

\begin{figure}[p]
    \centering
\begin{lstlisting}[language=HeyVL]
coproc random_walk(init_x: UInt) -> (x: UInt)
    pre ite(init_x > 1, init_x - (init_x % 2), 0)
    post 0
{
    x = init_x
    @invariant(ite(x > 1, x - (x % 2), 0))
    while x > 1 {
        var prob_choice: Bool = flip(0.75)
        if prob_choice {
            x = x - 2
        } else {
            x = x + 2
        }
        reward 1
    }
}
\end{lstlisting}
    \caption{HeyVL code representing a biased random walk with initial position \texttt{init\_x}.}
    \label{fig:biased-random-walk-original}
\end{figure}

\begin{figure}[p]
    \centering
\begin{lstlisting}[language=HeyVL]
coproc random_walk(init_x: UInt) -> (x: UInt)
    pre init_x*init_x + 3*init_x
    post 0
{
    x = init_x
    var tau: UInt = 0
    @invariant(x*x + 2*x*tau + 3*x)
    while x > 1 {
        var prob_choice: Bool = flip(0.75)
        if prob_choice {
            x = x - 2
        } else {
            x = x + 2
        }
        reward (tau+1)*(tau+1) - tau*tau
        tau = tau + 1
    }
}
\end{lstlisting}
    \caption{HeyVL code after applying transformation function $f(x) = x^2$ to compute the second moment of the time to reach $x \leq 1$.}
    \label{fig:biased-random-walk-second-moment}
\end{figure}

\begin{figure}[p]
    \centering
\begin{lstlisting}[language=HeyVL]
domain Exponentials {
    func pow(base: UReal, exponent: UInt): UReal = ite(exponent == 0, 1, base * pow(base, exponent - 1))
}

coproc original_server_cdf(N: UInt) -> (done: Bool)
    pre pow(5/10, N - 1)
    post 0
{
    var tau: UInt = 0
    reward [0 >= N]

    done = false
    @invariant(
        ite(!done && tau + 1 <= N,
            pow(0.5, N - (tau + 1)),
            0
        )
    )
    while !done {
        reward [tau + 1 == N]
        tau = tau + 1
        done = flip(0.5)
    }
}
\end{lstlisting}
    \caption{HeyVL code for \Cref{ex:cdf-webserver}, verifying upper bounds on the cumulative distribution function of the web server example from \Cref{fig:example-program-a}.}
    \label{fig:cdf-webserver-heyvl}
\end{figure}

\begin{figure}[p]
    \centering
\begin{lstlisting}[language=HeyVL]
domain Exponentials {
    func pow(base: UReal, exponent: UInt): UReal = ite(exponent == 0, 1, base * pow(base, exponent - 1))
}

coproc expected_excess(p: UReal) -> ()
    pre !?(p > 0 && p < 1)
    pre 1/p
    post 0
{
    var done: Bool = false
    @invariant(ite(!done, 1/p, 0))
    while !done {
        done = flip(p)
        reward 1
    }
}
\end{lstlisting}
    \caption{HeyVL code representing the original program from \Cref{ex:loop-splitting-excess}.}
    \label{fig:expected-excess-original}
\end{figure}

\begin{figure}[p]
    \centering
\begin{lstlisting}[language=HeyVL]
domain Exponentials {
    func pow(base: UReal, exponent: UInt): UReal = ite(exponent == 0, 1, base * pow(base, exponent - 1))
}

coproc expected_excess(N: UInt, p: UReal) -> ()
    pre !?(p > 0 && p < 1)
    pre pow(1-p, N)/p
    post 0
{
    var tau: UInt = 0
    var done: Bool = false
    @invariant(ite(!done, pow(1-p, N-tau)/p, 0))
    while !done {
        reward [tau >= N]
        tau = tau + 1
        done = flip(p)
    }
}
\end{lstlisting}
    \caption{HeyVL code after applying transformation function $f(x) = x \monus N$.}
    \label{fig:expected-excess-single}
\end{figure}

\begin{figure}[p]
    \centering
\begin{lstlisting}[language=HeyVL]
domain Exponentials {
    func pow(base: UReal, exponent: UInt): UReal = ite(exponent == 0, 1, base * pow(base, exponent - 1))
}

coproc expected_excess(N: UInt, p: UReal) -> ()
    pre !?(p > 0 && p < 1)
    pre pow(1-p, N)/p
    post 0
{
    var tau: UInt = 0
    var done: Bool = false
    @invariant(ite(!done && tau < N, pow(1-p, N-tau)/p, ite(!done, 1/p, 0)))
    while !done && (tau < N) {
        done = flip(p)
        reward 0
        tau = tau + 1
    }
    @invariant(ite(!done, 1/p, 0))
    while !done {
        done = flip(p)
        reward 1
        tau = tau + 1
    }
}
\end{lstlisting}
    \caption{HeyVL code after splitting the loops and applying transformation function $f(x) = x \monus N$.}
    \label{fig:expected-excess-split}
\end{figure}

\begin{figure}[p]
    \centering
\begin{lstlisting}[language=HeyVL,mathescape]
domain MGFs {
    // uninterpreted function representing $e^{t \cdot x}$
    func mgf(x: UReal): UReal
    axiom mgf_zero mgf(0) == 1
    axiom mgf_increasing forall x: UReal, y: UReal. (x < y) ==> (mgf(x) < mgf(y))
}

coproc randomized_assignment_mgf(p: UReal) -> (x: UInt)
    pre !?(p <= 1)
    pre p * mgf(1) + (1-p)
{
    var tau: UInt = 0
    reward mgf(0)

    var prob_choice: Bool = flip(p)
    if prob_choice {
        x = 1
    } else {
        x = 0
    }

    reward mgf(tau + x) - mgf(tau)
    tau = tau + x
}
\end{lstlisting}
    \caption{HeyVL code after applying transformation function $f(x) = e^{t \cdot x}$ to compute the MGF of a coin flip with probability $p$.}
    \label{fig:mgf-coin-flip-heyvl}
\end{figure}

\begin{figure}[p]
    \centering
\begin{lstlisting}[language=HeyVL]
coproc multiple_rewards(p: UReal, q: UReal) -> ()
    pre !?(p <= 1 && q <= 1)
    pre (4 + (p * (2 + (q * 3))))
{
    var tau_memory: UInt = 2
    var tau_runtime: UInt = 2
    reward tau_memory*tau_runtime

    var prob_choice: Bool = flip(p)
    if prob_choice {

        // extra computation
        reward (tau_memory + 0)*(tau_runtime + 1)-(tau_memory*tau_runtime)
        tau_memory = tau_memory
        tau_runtime = tau_runtime + 1

        prob_choice = flip(q)
        if prob_choice {

            // extra memory
            reward (tau_memory + 1)*(tau_runtime + 0)-(tau_memory*tau_runtime)
            tau_memory = tau_memory + 1
            tau_runtime = tau_runtime

        } else {}
    } else {}
}
\end{lstlisting}
    \caption{HeyVL code for a program that combines two reward structures by multiplication.}
    \label{fig:multiple-rewards-multiplication}
\end{figure}

%% file: appendices/fdr_markov_chain.tex
\definecolor{myorange}{RGB}{230,159,0}
\tikzstyle{every node}=[font=\scriptsize]
\tikzset{dice/.style={draw, minimum size=0.6cm, inner sep=0pt, text width=0.6cm, align=center, fill=myorange!50}}
\tikzset{state/.style={draw, circle, minimum size=0.6cm, inner sep=0pt, text width=0.6cm, align=center}}

\newcommand{\diceFaceOne}{
    \begin{tikzpicture}[scale=0.7]
        \fill (0.5, 0.5) circle (2.2pt);
    \end{tikzpicture}
}

\newcommand{\diceFaceTwo}{
    \begin{tikzpicture}[scale=0.7]
        \fill (0.35, 0.65) circle (2.2pt);
        \fill (0.65, 0.35) circle (2.2pt);
    \end{tikzpicture}
}

\newcommand{\diceFaceThree}{
    \begin{tikzpicture}[scale=0.7]
        \fill (0.35, 0.65) circle (2.2pt);
        \fill (0.5, 0.5) circle (2.2pt);
        \fill (0.65, 0.35) circle (2.2pt);
    \end{tikzpicture}
}

\newcommand{\diceFaceFour}{
    \begin{tikzpicture}[scale=0.7]
        \fill (0.3, 0.3) circle (2.2pt);
        \fill (0.7, 0.3) circle (2.2pt);
        \fill (0.3, 0.7) circle (2.2pt);
        \fill (0.7, 0.7) circle (2.2pt);
    \end{tikzpicture}
}

\newcommand{\diceFaceFive}{
    \begin{tikzpicture}[scale=0.7]
        \fill (0.3, 0.3) circle (2.2pt);
        \fill (0.7, 0.3) circle (2.2pt);
        \fill (0.5, 0.5) circle (2.2pt);
        \fill (0.3, 0.7) circle (2.2pt);
        \fill (0.7, 0.7) circle (2.2pt);
    \end{tikzpicture}
}

\newcommand{\diceFaceSix}{
    \begin{tikzpicture}[scale=0.7]
        \fill (0.3, 0.3) circle (2.2pt);
        \fill (0.7, 0.3) circle (2.2pt);
        \fill (0.3, 0.5) circle (2.2pt);
        \fill (0.7, 0.5) circle (2.2pt);
        \fill (0.3, 0.7) circle (2.2pt);
        \fill (0.7, 0.7) circle (2.2pt);
    \end{tikzpicture}
}

\newcommand{\diceOne}{
    \resizebox{11pt}{9pt}{%
        \begin{tikzpicture}
            \node[dice] (d) {\diceFaceOne};
        \end{tikzpicture}
    }
}
\newcommand{\diceTwo}{
    \kern1pt
    \resizebox{11pt}{9pt}{%
        \begin{tikzpicture}
            \node[dice] (d) {\diceFaceTwo};
        \end{tikzpicture}
    }
}
\newcommand{\diceThree}{
    \kern1pt
    \resizebox{11pt}{9pt}{%
        \begin{tikzpicture}
            \node[dice] (d) {\diceFaceThree};
        \end{tikzpicture}
    }
}
\newcommand{\diceFour}{
    \kern1pt
    \resizebox{11pt}{9pt}{%
        \begin{tikzpicture}
            \node[dice] (d) {\diceFaceFour};
        \end{tikzpicture}
    }
}
\newcommand{\diceFive}{
    \resizebox{11pt}{9pt}{%
        \begin{tikzpicture}
            \node[dice] (d) {\diceFaceFive};
        \end{tikzpicture}
    }
}
\newcommand{\diceSix}{
    \resizebox{11pt}{9pt}{%
        \begin{tikzpicture}
            \node[dice] (d) {\diceFaceSix};
        \end{tikzpicture}
    }
}

\begin{tikzpicture}
    \node[] (helper) {};
    \node[state] (s0) [right=0.5 of helper] {$s_0$};

    \node[state] (s1) [above right=0.95 and 0.01 of s0] {$s_1$};
    \node[state] (s2) [below right=0.95 and 0.01 of s0] {$s_2$};

    \node[state] (s3) [above right=0.15 and 0.7 of s1] {$s_3$};
    \node[state] (s4) [below right=0.55 and 0.7 of s1] {$s_4$};
    \node[state] (s5) [above right=0.35 and 0.7 of s2] {$s_5$};
    \node[state] (s6) [below right=0.15 and 0.7 of s2] {$s_6$};

    \node[dice, right=0.6 of s3] (d1) {\diceFaceOne};
    \node[dice, below right=0.25 and 0.7 of s3] (d2) {\diceFaceTwo};

    \node[dice, right=0.6 of s4] (d3) {\diceFaceThree};
    \node[dice, below right=0.25 and 0.7 of s4] (d4) {\diceFaceFour};

    \node[dice, above right=0.25 and 0.7 of s6] (d5) {\diceFaceFive};
    \node[dice, right=0.6 of s6] (d6) {\diceFaceSix};

    \path[->]
        (helper) edge[] node[above=0.01] {$1$} (s0)

        (s0) edge[] node[left=0.01] {$0.5$} (s1)
        (s0) edge[] node[left=0.01] {$0.5$} (s2)

        (s1) edge[] node[above=0.08,pos=0.6] {$0.5$} (s4)
        (s1) edge[] node[above=0.01,pos=0.3] {$0.5$} (s3)

        (s2) edge[bend right=30] node[above=0.01,pos=0.3] {$0.5$} (s5)
        (s2) edge[] node[below=0.01,pos=0.3]  {$0.5$} (s6)

        (s3) edge[] node[above=0.01,pos=0.3] {$0.5$} (d1)
        (s3) edge[] node[below=0.01,pos=0.3] {$0.5$} (d2)

        (s4) edge[] node[above=0.01,pos=0.3] {$0.5$} (d3)
        (s4) edge[] node[below=0.01,pos=0.3] {$0.5$} (d4)

        (s5) edge[] node[below=0.3,pos=0.6] {$0.5$} (s1)
        (s5) edge[bend right=30] node[above=0.03,pos=0.5] {$0.5$} (s2)

        (s6) edge[] node[above=0.01,pos=0.3] {$0.5$} (d5)
        (s6) edge[] node[below=0.01,pos=0.3] {$0.5$} (d6)
        ;

    \path[->]
        (d1) edge [out=25, in=-25,loop,looseness=5] node[right]{1} (d1)
        (d2) edge [out=25, in=-25,loop,looseness=5] node[right]{1} (d2)
        (d3) edge [out=25, in=-25,loop,looseness=5] node[right]{1} (d3)
        (d4) edge [out=25, in=-25,loop,looseness=5] node[right]{1} (d4)
        (d5) edge [out=25, in=-25,loop,looseness=5] node[right]{1} (d5)
        (d6) edge [out=25, in=-25,loop,looseness=5] node[right]{1} (d6)
        ;

\end{tikzpicture}%

%% file: appendices/more_related_work.tex
\section{Details on Related Work}\label{sec:more-related-work}

\subsection{Relation to the Amortized Expected Runtime Calculus}

An extension of the $\symErt$ calculus called the \emph{amortized expected runtime calculus} was presented in~\cite{DBLP:journals/pacmpl/BatzKKMV23}.
Its $\symAert$ transformer computes the expected change of a potential function $\pi \colon \States \to \PosRealsInf$, i.e. an expectation.
By \cite[Theorem 5.4]{DBLP:journals/pacmpl/BatzKKMV23}: $\symAert_\rt\eval{\stmt}(\expa) = \symWp\eval{\stmt}(\expa + \pi) - \pi$.
Thus, $\symAert$ operates on the lattice $\mathbb{A}$ of \emph{amortized runtimes}, which can be negative expectations, bounded below by the negation of the potential function, i.e. $\mathbb{A} = \Set{\expa \colon \States \to (\Reals \cup \infty) \mid \expa \geq -\pi}$.

However, the analysis of the potential function in the $\symAert$ transformer reasons about the expected value of the potential function \emph{on termination} only.
This is implied by their Theorem 5.4: the potential function $\pi$ is only added to expectation on termination (c.f. \Cref{sec:run-times-and-resource-consumption}).

For further illustration, consider the infinite loop $\stmtWhile{\exprTrue}{\stmtAsgn{\rt}{\rt + 1}}$ that uses the assignment $\stmtAsgn{\rt}{\rt + 1}$ to increase the runtime variable $\rt$ by one in each iteration.
We attempt to analyze the change in potential of a counter variable $\rt$, i.e. choose $\pi = \rt$.
We have $\symAert_\rt\eval{\stmtWhile{\exprTrue}{\stmtAsgn{\rt}{\rt + 1}}}(0) = -\rt$, meaning that an infinite loop does not result in any change in the expected runtime.
To see why the loop's semantics evaluates to $-\pi$, one needs only observe that the $\symAert$ semantics of the $\symWhile$ loop is defined as the least fixed point of the function $\Phi_0(X) = X\substBy{\rt}{\rt + 1} + ((\rt + 1) - \rt) = X\substBy{\rt}{\rt + 1} + 1$.
The least element of the lattice $\mathbb{A}_\pi$ is $-\rt$, but it is already a fixed-point since $\Phi_0(-\rt) = (-(\rt + 1)) + 1 = -\rt$.

Hence, the potential function $\pi$ of the $\symAert$ transformer cannot be used to track run-times without $\symTick$ statements and one could apply our transformation with the trivial potential function $\pi = 0$ instead.